\documentclass[12pt]{article}
\usepackage{comment}
\usepackage[utf8]{inputenc}
\usepackage[T1]{fontenc}
\DeclareMathAlphabet{\mathpzc}{OT1}{pzc}{m}{it}
\usepackage{geometry}\geometry{margin=1.25in}

\usepackage[dvipsnames]{xcolor}


\newcommand{\revision}[1]{#1}

\usepackage{amsfonts}
\usepackage{amssymb}
\usepackage{amsthm}
\usepackage{amsmath}
\usepackage{amscd}
\usepackage{authblk}
\usepackage[shortlabels]{enumitem}
\usepackage{mathrsfs}
\usepackage{tikz}
\usetikzlibrary{calc,arrows,decorations.pathreplacing}
\usepackage{nicefrac, xfrac}
\usepackage{mathtools,xparse}
\usepackage{bm,bbm}
\usepackage{algorithm,algpseudocode}
\usepackage{stackengine}

\newcommand{\bbGamma}{\reflectbox{\rotatebox[origin=c]{180}{$\mathbb L$}}}
\newcommand{\bi}{\mathbbm{i}}
\newcommand{\Bgm}{\bbGamma}

\usepackage[pagebackref, colorlinks = true,
linkcolor = red,
urlcolor  = blue,
citecolor = blue,
anchorcolor = blue]{hyperref}
\hypersetup{pdftitle={\@title},pdfauthor={\@author}}

\setlength{\topmargin}{0in} \setlength{\oddsidemargin}{0in}
\setlength{\evensidemargin}{0in} \setlength{\textwidth}{6.4in}
\setlength{\textheight}{8.6in}

\theoremstyle{plain}

\newtheorem{theorem}{Theorem}[section]

\newtheorem{proposition}[theorem]{Proposition}
\newtheorem{conjecture}[theorem]{Conjecture}

\theoremstyle{definition}

\newtheorem*{theorem*}{Theorem}

\renewcommand{\emph}[1]{\textit{#1}} 

\renewcommand{\epsilon}{\varepsilon}
\newcommand{\R}{\mathbb{R}}

\newcommand{\spn}{\ensuremath{\mathrm{span}}}

\newcommand{\sub}{\ensuremath{ \subseteq}} 

\newcommand{\vertiii}[1]{{\left\vert\kern-0.25ex\left\vert\kern-0.25ex\left\vert #1
		\right\vert\kern-0.25ex\right\vert\kern-0.25ex\right\vert}}



\newcommand{\Lm}{\ensuremath{\Lambda}}
\newcommand{\Gm}{\ensuremath{\Gamma}}

\newcommand{\Dl}{\ensuremath{\Delta}}

\newcommand{\lm}{\ensuremath{\lambda}}

\newcommand{\al}{\ensuremath{\alpha}}

\newcommand{\ep}{\ensuremath{\epsilon}}

\DeclareSymbolFont{bbold}{U}{bbold}{m}{n}
\DeclareSymbolFontAlphabet{\mathbbold}{bbold}

\newcommand{\cO}{\ensuremath{\mathcal{O}}}
\newcommand{\cP}{\ensuremath{\mathcal{P}}}

\newcommand{\AAA}{\ensuremath{\mathbb A}}%
%

\newcommand{\MM}{\ensuremath{\mathbb M}}

\newcommand{\RR}{\ensuremath{\mathbb R}}



\def\dir{\text{{\rm dir}}}

\def\spat{\text{{\rm spat}}}

\providecommand{\phantomsection}{}
\AtBeginDocument{\let\textlabel\label}
\makeatletter
\newcommand{\mylabel}[2]{\raisebox{.7\normalbaselineskip}{\phantomsection}(#1)%
	\def\@currentlabel{#1}\textlabel{#2}}
\makeatother

\makeatletter
\newcommand\xlabel[2][]{\phantomsection\def\@currentlabelname{#1}\label{#2}}
\makeatother

\ExplSyntaxOn
\NewDocumentCommand{\mathlist}{ O{,} m m }
 {
  \egreg_mathlist:nnn { #1 } { #2 } { #3 }
 }

\ExplSyntaxOff


\allowdisplaybreaks

\numberwithin{equation}{section}

\graphicspath{{graphics_final/}{graphics_old/}}

\title{An inflated dynamic Laplacian to track the emergence and disappearance of semi-material coherent sets}

\author[1]{Jason Atnip}
\author[1]{Gary Froyland}
\author[2]{Peter Koltai}

\affil[1]{School of Mathematics and Statistics, University of New South Wales, Sydney, NSW 2052, Australia \protect\\  {\small \texttt{j.atnip@unsw.edu.au, g.froyland@unsw.edu.au}}}
\affil[2]{Department of Mathematics, University of Bayreuth, Universitätsstraße 30, 95440 Bayreuth, Germany \protect\\  {\small \texttt{peter.koltai@uni-bayreuth.de}}}

\begin{document}

\maketitle

\begin{abstract}
Lagrangian methods continue to stand at the forefront of the analysis of time-dependent dynamical systems.
Most Lagrangian methods have criteria that must be fulfilled by trajectories as they are followed throughout a \emph{given finite flow duration}.
This key strength of Lagrangian methods can also be a limitation in more complex evolving environments.
It places a high importance on selecting a time window that produces useful results, and these results may vary significantly with changes in the flow duration.
We show how to overcome this drawback in the tracking of coherent flow features.
Finite-time coherent sets (FTCS) are material objects that strongly resist mixing in complicated nonlinear flows.
Like other materially coherent objects, by definition they must retain their coherence properties throughout the specified flow duration.
Recent work [Froyland and Koltai, \emph{CPAM}, 2023] introduced the notion of \emph{semi-material} FTCS, whereby a balance is struck between the material nature and the coherence properties of  FTCS.
This balance provides the flexibility for FTCS to come and go, merge and separate, or undergo other changes as the governing unsteady flow experiences dramatic shifts.
The purpose of this work is to illustrate the utility of the inflated dynamic Laplacian introduced in [Froyland and Koltai, \emph{CPAM}, 2023] in a range of dynamical systems that are challenging to analyse by standard Lagrangian means, and to provide an efficient meshfree numerical approach for the discretisation of the inflated dynamic Laplacian.
\end{abstract}

\section{Introduction}

Lagrangian methods for complex fluid flows have been at the forefront of fluid flow analysis for the last two decades.
One of the key descriptors of fluid flow is the separation of coherent flow regions from more turbulent or chaotic regions. 
Early approaches to the study of transport \cite{MMP84,romkedar90,pierrehumbert91,pierrehumbertyang93,pojehaller98} were followed by a wide variety of Lagrangian techniques focusing on various Lagrangian flow properties; a very brief sample includes \cite{shadden-etal,rypina2011investigating,allshouse2012detecting,budivsic2012geometry,haller2013coherent,ma2014differential,AlPe15}. 
Concurrently, growing out of techniques to find almost-invariant sets through the eigenfunctions of Perron--Frobenius operators \cite{DJ99, F05, FP09, FrJuKo13}, the concept of coherent sets \cite{FSM10,F13} arose, enabling transfer operator methods to be applied to general time-varying dynamics.
These probabilistic transfer operator constructions led to new perspectives and techniques from spectral geometry, and to a dynamic Laplace operator~\cite{Fro15}.
It is this viewpoint that we expand upon in the present work.

In unsteady flows with explicit time dependence, these coherent and turbulent regions are themselves time dependent and move about in the flow domain or the phase space of the dynamical system. 
\revision{Applications include mapping and tracking the Antarctic polar vortex \cite{FSM10}, transport of heat and salinity in ocean eddies \cite{FrEtAl12,FJ18,FRS19}, molecular dynamics subject to non-stationary forcing~\cite{KoCiSch16,KWNS18}, kinetically tracking the gulf stream \cite{FRS19}, mapping deep-ocean coherence at the basin and sub-basin scale \cite{ABFS22}, transporting heat in turbulent convection~\cite{vieweg2024lagrangian},  air parcels feeding atmospheric blocks~\cite{SchoeEtAl25}, and  diagnosing ocean fronts \cite{DKF25}.}

Traditionally, to properly pose a question concerning coherence, a time window is specified over which coherence is sought. 
In some physical settings, there may be a natural \revision{time interval} of interest, but in many situations \revision{coherent objects have varying lifetimes} across the domain or phase space. 
\revision{This is the case, for example in turbulent fluid flow, where coherent vortices or other structures appear and disappear, and geophysical flows in the ocean or atmosphere that see continual birth and death of coherent eddies or vortices.}
\revision{The time-expanded paradigm we will describe has also been developed in a related form to find transient or quasi-stationary almost-invariant sets, which are constrained to  remain approximately stationary in space.
This restriction is particularly suitable for applications such as diagnosing atmospheric blocks 
\cite{BadzaFroyland24} and capturing turbulent superstructures \cite{Badzaetal26} in convection-driven fluids.}

With unknown or variable \revision{coherence timescales and lifetimes}, traditional Lagrangian methods are difficult to apply.
Various (typically time-window-based) approaches have been proposed \cite{FrEtAl12,macmillan20,AnKaBV20,blachut2020tale,ElA21,DFK22, Blachut2023}, but they either require \emph{a priori} knowledge of timescales or are  computationally expensive. 
This motivated the approach of \cite{FrKo23}, which performs a single computation over a wide time window to identify multiple regions of coherence with varying and possibly coexisting individual lifetimes.
A relaxed notion of semi-material coherent sets was introduced in \cite{FrKo23} to enable coherent regions to appear and disappear over time.
The central mathematical object in \cite{FrKo23} is an inflated dynamic Laplace operator, whose dominant eigenfunctions encode semi-material coherent sets.
The present work further develops the theory and numerics of the inflated dynamic Laplacian methodology, enabling the analysis of more complicated dynamics than that treated in~\cite{FrKo23}.

While \cite{FrKo23} treated the inflated dynamic Laplacian with Neumann boundary conditions, here we provide corresponding theory for the spectrum, and  associated geometric and functional quantities with Dirichlet boundary conditions; \revision{see Theorem~\ref{thm:eigbounds}}.
Dirichlet boundary conditions handle the situations where there is no natural boundary for the available Lagrangian trajectory data, or where coherent regions are not sought near the boundary of the domain~\cite{FJ18}.
To construct an estimate of the inflated dynamic Laplacian from trajectory data, \cite{FrKo23} extended the finite-element approach of~\cite{FJ18}.
In the present work we employ \revision{a bespoke version of integral operator approximation, based on} diffusion maps~\cite{CoLa06}, and put considerable effort into reducing the computational load of diffusion maps through operator-splitting techniques~\cite{Str68}, separating the space and time components.
\revision{This construction is detailed in Section~\ref{sec:dmapsDL}.}
To overcome orthogonality restrictions of the eigenfunctions of the inflated dynamic Laplacian with Neumann boundary conditions, we introduce an ``augmentation'' of the eigenfunction collection to fully extract all useful spatiotemporal information, even with multiple coexisting coherent regions.

These advances are illustrated in three case studies.
The first is a double-gyre system that rapidly switches the relative sizes of the two gyres.
Our semi-material coherent sets are able to follow this rapid change in volume in the incompressible flow, which is a highly non-material change.

The second case study concerns the breakup of the Antarctic polar vortex during Sept--Oct~2002. 
Our semi-material coherent sets identify the initial polar vortex, which then shrinks dramatically during the breakup period, and then grows again to a smaller vortex afterward.
Our semi-material approach avoids the filamentation of the vortex post-breakup that occurs with traditional Lagrangian techniques that identify strictly material objects.

Our final case study combines three regimes:  regular autonomous double-gyre dynamics, followed by mixing non-autonomous double-gyre dynamics, followed again by regular autonomous double-gyre dynamics.
In this model, the ``cores'' of the two gyres persist throughout all three phases of the dynamics, while the fluid surrounding the cores undergoes changes from regular to mixing.
Using the eigenfunction augmentation mentioned above, our inflated dynamic Laplacian correctly identifies the beginning and end of the three regimes, and the coherent regions within each regime. 
The central ``cores'' are correctly identified as persisting throughout all three phases.

An outline of the paper is as follows.
In Section~\ref{sec:FTCS} we recap the salient properties of the dynamic Laplacian \cite{Fro15}.
Sections~\ref{ssec:timexpansion}--\ref{ssec:iDL} detail the inflated dynamic Laplacian\revision{, and review material from}~\cite{FrKo23}, before Section~\ref{ssec:dirichlet} introduces the definitions and theory required for Dirichlet boundary conditions in space.
Section~\ref{sec:practical} describes some practical considerations and how to use the eigenfunctions of the inflated dynamic Laplacian to distinguish coherent and mixing regions.
Section~\ref{sec:dmapsDL} introduces our specialised diffusion map approach to estimate the inflated dynamic Laplacian.
Our numerical case studies are presented in Section~\ref{sec:numericalexamples}, and Section~\ref{sec:efunstruc} provides a more detailed analysis of the structure of the eigenfunctions of the inflated dynamic Laplacian and further explanations and illustrations of the eigenfunction augmentation procedure.

\section{Finite-time coherent sets, the dynamic Laplacian, spectrum and eigenfunctions}
\label{sec:FTCS}

In this section we briefly set up notation and background for the dynamic Laplacian \cite{Fro15,FrKw17} and related averaged quantities before introducing the inflated dynamic Laplacian in the following section.
Let $M\sub\RR^d$ be a $d$-dimensional, connected, compact manifold, and consider a smooth time-dependent velocity field $v:[0,\tau]\times M\to\RR^d$, over a finite time duration $[0,\tau]$.
For the sake of simplicity, we assume $v(t,\cdot)$ is divergence free for all~$t\in[0,\tau]$.
Let $\phi_t:M\to\phi_t(M)$ be the flow map generated by $v$ from time $0$ to time $t$;
for each $t\in[0,\tau]$, set~$M_t=\phi_t(M)$.

Let $g$ denote the usual Euclidean metric on $\R^d$.
For each $t\in [0,\tau]$, define the pullback metric $g_t := \phi_t^*(g_{|\phi_t(M)})$ on~$M$. 
In global coordinates in $\mathbb{R}^d$, for $x\in M$ and $u,w\in \mathbb{R}^d$ one has $g_t(x)(u,w) = \langle D\phi_t(x)u, D\phi_t(x)w \rangle$, where $D\phi_t(x)$ denotes the Jacobian of $\phi_t$ in~$x$. 
The natural coordinate representation of the $g_t$ at $x\in M$ is thus $g_t(x)=D\phi_t(x)^\top D\phi_t(x)$, namely the Cauchy--Green tensor.

Suppose $\Gm\sub M$ is a codimension-1 $C^\infty$ surface disconnecting $M$ into the disjoint union $M=A_1\cup\Gm\cup A_2$ where $A_1$ and $A_2$ are connected submanifolds.
We are particularly interested in finding those disconnectors whose codimension-1 volume does not significantly grow when evolved by the dynamics $\phi_t$ because such distinguished $\Gamma$ correspond to a boundary of a coherent set;  see Figure \ref{fig:FTCSschematic}, which shows two possible disconnectors~$\Gamma$.
\begin{figure}
    \centering
\includegraphics[width=0.45\textwidth]{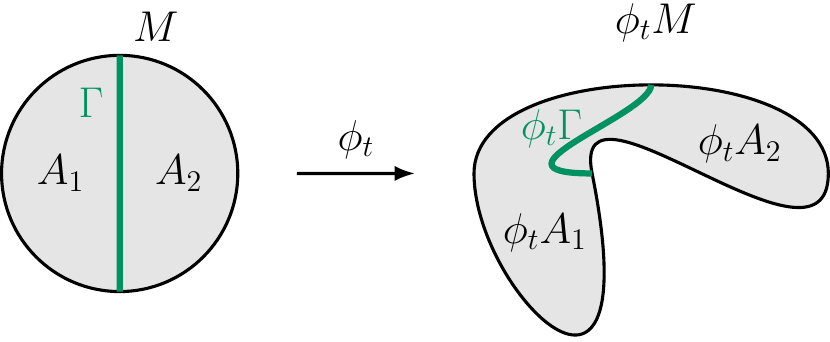}\hfill
\includegraphics[width=0.45\textwidth]{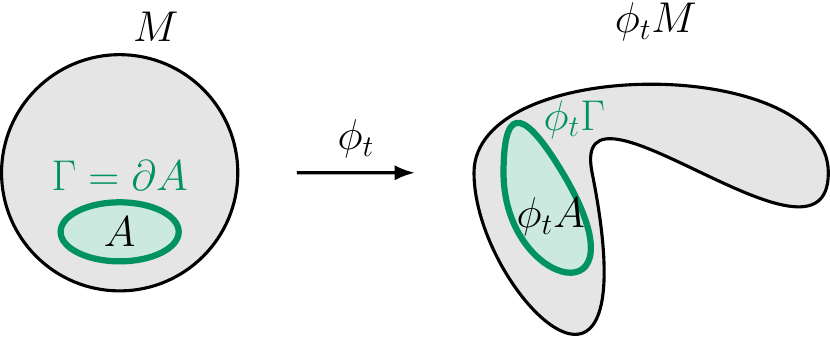}
    \caption{Left: The curve $\Gamma$ disconnects the manifold $M$; $M$ and $\Gamma$ are evolved by $\phi_t$.  Right: The curve $\Gamma$ again disconnects $M$, now without touching the boundary of~$M$. In this case, $A$ could be $A_1$ and the complement of $A$ (minus $\Gamma$) could be~$A_2$.
    }
    \label{fig:FTCSschematic}
\end{figure}
Let $\ell$ denote Lebesgue measure on $M$, $\iota:\Gm\xhookrightarrow{} M$ denote the inclusion map, $\iota^* g_t$ the induced metric on~$\Gm$, and $V_{\iota^*g_t}$ the corresponding volume form on~$\Gm$. Recall the dynamic Cheeger constant (see \cite[equation (20)]{Fro15} or \cite[equations (3.6)--(3.7)]{FrKw17}) is given by
\begin{align}\label{eq: nm cheeger const}
    h^D:=\inf_\Gm\frac{\frac{1}{\tau}\int_0^\tau V_{\iota^*g_t}(\Gm)\, dt}
    {\min\{\ell(A_1),\ell(A_2)\}}.
\end{align}
The quantity $h^D$ describes the average {codimension-1} volume of the least-growing evolving disconnector~$\Gm$, relative to the {volume of the} submanifolds {that} $\Gm$ disconnects {$M$ into}.

We also recall that the dynamic Sobolev constant (see \cite[equation (21)]{Fro15} or \cite[Section 3.2]{FrKw17}) is given by\footnote{{Recall $\| u \|_{g_t}^2:= g_t(u,u)$ for $u\in\R^d$. Further, recall} $\nabla_{g}f$ is the unique vector field on $M$ that satisfies
$g(\nabla_gf, w)=w(f)$
for all vector fields $w:M\to\RR^d$, and $w(f)$ is the Lie derivative of $f:M\to\RR^d$ with respect to $w$.}
\begin{align}\label{eq: dyn Sobolev const}
    s^D:=\inf_{f\in C^\infty(M)}\frac{\frac{1}{\tau}\int_0^\tau\|\nabla_{g_t}f\|_{g_t}\, d\ell dt}{\inf_{\al\in\RR}\int_M|f-\al|\, d\ell}.
\end{align}
The quantity $s^D$ is equal to $h^D$; see \cite{Fro15} and \cite{FrKw17}.
This dynamic Federer--Fleming Theorem connects the geometric minimisation (\ref{eq: nm cheeger const}) with the functional minimisation (\ref{eq: dyn Sobolev const}), and in fact the minimising $f$ in (\ref{eq: dyn Sobolev const}) will take on  positive values on one side of the minimising disconnector $\Gamma$ in (\ref{eq: nm cheeger const}) and negative values on the other side.
Solving the minimisation problem (\ref{eq: dyn Sobolev const}) is computationally challenging, but a relaxation of (\ref{eq: dyn Sobolev const}) is easily solved as the second eigenfunction of the dynamic Laplace operator;
 see Figure~\ref{fig:DLexample}.
\begin{figure}[h]
    \centering
    \includegraphics[width = 0.49\textwidth]{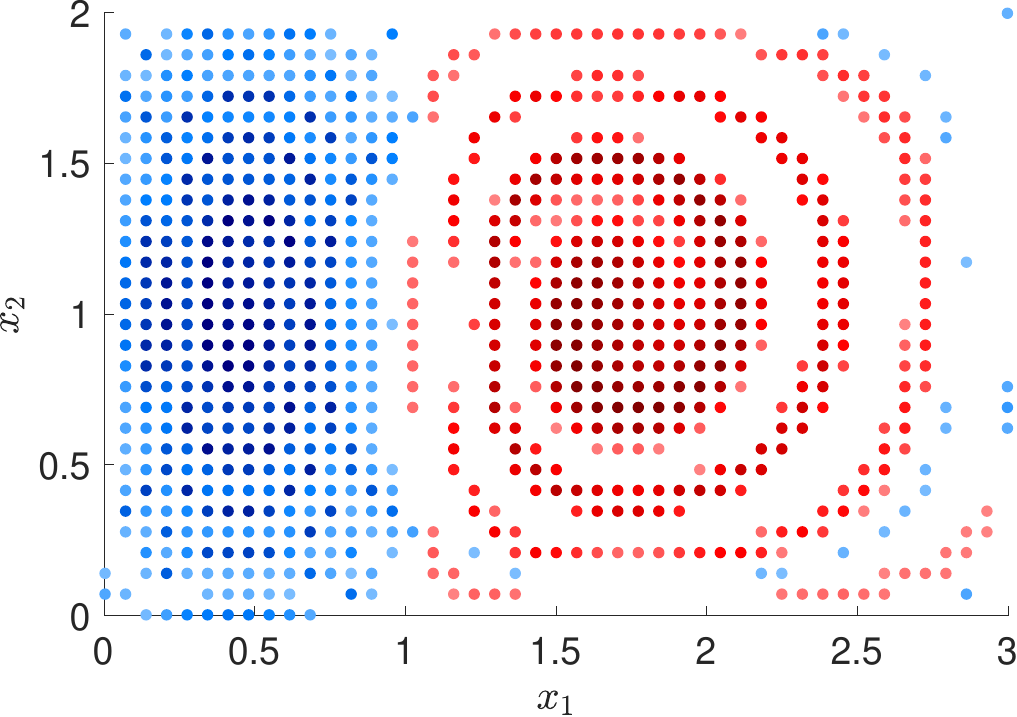}
    \hfill
\includegraphics[width=0.49\textwidth]{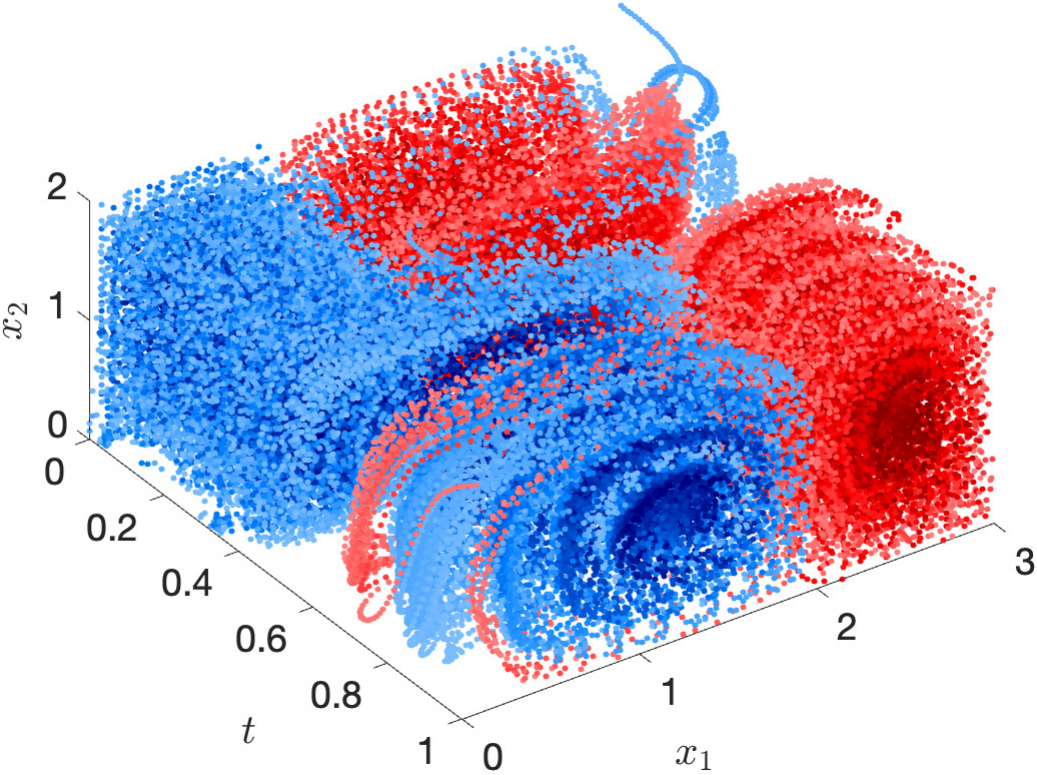}
    \caption{Left: The eigenfunction $f_2$ of $\Delta^D$ corresponding to $\lambda_2^D$;  here the extreme values of $f_2$ -- shown in red and blue -- clearly highlight the two coherent sets.
    Right: The same eigenfunction evolved forward and plotted in spacetime. This plot demonstrates that the red and blue regions of the two-dimensional phase space remain largely dynamically disconnected and with relatively small evolving one-dimensional boundary.  Both images are displayed with an absolute value cutoff of~0.25. {The underlying flow is defined and further analysed in Section~\ref{sec:switching}.}
    }
    \label{fig:DLexample}
\end{figure}

The dynamic Laplace operator \cite{Fro15} $\Delta^D: \mathrm{dom}(\Delta^D) \subset L^2(M)\to L^2(M)$ is given by a time-average of Laplace--Beltrami operators $\Delta_{g_t}$ for metrics $g_t$ on $M$:
\begin{align}\label{eq: dyn Laplacian}
    \Dl^D:=\frac{1}{\tau}\int_0^\tau \Dl_{g_t}\, dt,
\end{align}
{understood as a Bochner integral pointwise on its domain. 
Recall that the metrics $g_t$ are the pullbacks of the Euclidean metric from the manifold $\phi_t(M)$ to the manifold $M$ via~$\phi_t$.
The dynamic Laplacian is a symmetric, elliptic operator with countable discrete spectrum diverging to~$-\infty$.
Our primary interest is in the eigenvalues close to zero and their corresponding eigenfunctions.

The dynamic Cheeger inequality \cite{Fro15, FrKw17} provides a link between the dynamic geometry and the spectrum of the dynamic Laplacian:
\begin{align}\label{eq dyn cheeger ineq}
    h^D\leq 2\sqrt{-\lm_2^D},
\end{align}
where $\lm_2^D$ is the first (non-trivial) eigenvalue of the dynamic Laplacian $\Dl^D$ given in~\eqref{eq: dyn Laplacian}.
In particular, if $\lambda_2^D$ is close to zero, this forces the existence of a dynamically small disconnector.
\revision{Note that~\eqref{eq: nm cheeger const} and the definition of $g_t$ implies that coherence is invariant under the addition of time-dependent rigid-body transformations to $\phi_t$.
For the same reason, with such an addition of motion, the eigenfunctions of $\Delta^D$ are themselves simply transformed by the same rigid motion (see \cite[Theorem 4.3]{Fro15} for a formal statement).
This property is often called \emph{frame invariance}.}

\section{Semi-material finite-time coherent sets and the inflated dynamic Laplacian}
\label{sec:semimaterialCS}

The dynamic Laplacian performs very well when the coherent sets remain relatively coherent throughout the flow interval $[0,\tau]$.
Our main interest in the present work is to address situations where coherent sets come and go, or merge and split.

\subsection{Time-expanded domains and semi-material coherent sets}
\label{ssec:timexpansion}
We wish to know when coherent sets are present and to detect regime changes when they occur.
In order to do this, we append a time-coordinate to our phase space to create a time-expanded domain $\MM_0$, in which each manifold $(M,g_t)$ is considered as a $t$-fibre, given by
\begin{align}\label{eq: M_0 def}
    \MM_0:=\bigcup_{t\in[0,\tau]}\{t\}\times M = [0,\tau]\times M.
\end{align}
We will also consider the co-evolved spacetime manifold $\MM_1$, given by
\begin{align}\label{eq: M_1 def}
    \MM_1:=\bigcup_{t\in[0,\tau]}\{t\}\times \phi_t(M),
\end{align}
which is the standard time-expansion used in the dynamical systems community.

The canonical mapping $\Phi:\MM_0\to\MM_1$ associating initial conditions with their trajectories is given by
\begin{align}\label{eq: canon map}
    \Phi(t,x):=(t,\phi_t(x)).
\end{align}

To codify boundaries of  finite-time coherent sets in time-augmented space we now consider disconnectors $\Bgm$ of $\MM_0$ and their co-evolved counterparts $\Phi(\Bgm)$, which disconnect $\MM_1$.
A very special way to construct a $\Bgm$ is to take a disconnector $\Gamma$ of $M$ and copy it across time;  i.e.\ $\Bgm_{\rm copy}=[0,\tau]\times\Gamma$;  see the upper left panel in Figure~\ref{fig:timeextension}.
\begin{figure}
    \centering
    \includegraphics[width = .9\textwidth]{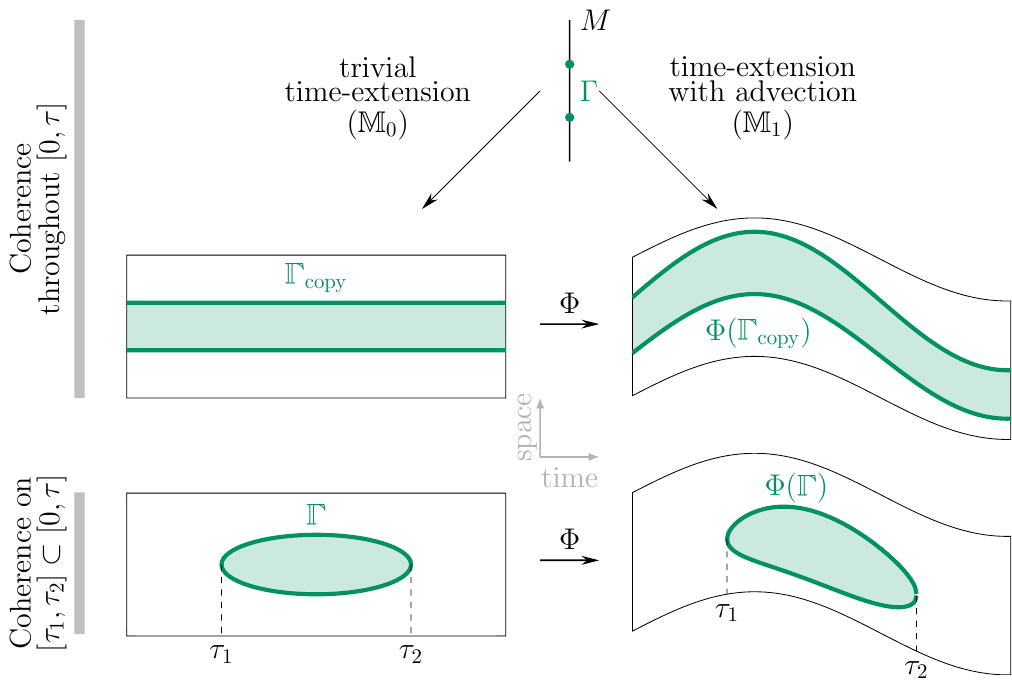}
    \caption{
    If coherence in the system persists throughout the entire time interval $[0,\tau]$, a material coherent set provides a small dynamic Cheeger constant~$h^D$.
    The top of the diagram shows a single disconnector $\Gamma\subset M$
    (green dots at the very top of the figure). Upper left: by trivial copying in time we obtain the disconnector $\Bgm_{\rm copy} = [0,\tau] \times \Gm \subset\MM_0$. We visualise one of the sets it disconnects $\MM_0$ into in pale green. Upper right: by evolving $\Bgm_{\rm copy}$ forward in time with the dynamics from time 0 to time $\tau$ we trace out the disconnector $\bigcup_{t\in[0,\tau]}\{t\}\times\phi_t(\Gamma) \subset \MM_1$.
    The lower row of the diagram concerns the situation where there is a coherent set present \emph{only for part of the time interval}, say a subinterval $[\tau_1,\tau_2]\subset [0,\tau]$.  Lower right: following the dynamics, a coherent set appears at $\tau_1$ from a small expanding core, exists for a while, and then shrinks and dissipates completely at~$\tau_2$. Lower left:  We pull back the lower right image to time $t=0$ using the inverse of~$\Phi$ to obtain the disconnector~$\Bgm \subset \MM_0$.
    }
    \label{fig:timeextension}
\end{figure}
The co-evolved version would be $\Phi(\Bgm_{\rm copy})=\bigcup_{t\in[0,\tau]}\{t\}\times\phi_t(\Gamma)$.
If $\Gamma$ is the boundary of a finite-time coherent set (i.e.\ a set whose boundary remains relatively small \revision{relative to its volume} \emph{throughout} the time duration $[0,\tau]$) then the codimension-1 volume of $\Bgm_{\rm copy}$ and $\Phi(\Bgm_{\rm copy})$ should also be low.

We now point out an important difference between $\MM_0$ and $\MM_1$, which illustrates why $\MM_0$ is the more natural space to work in.
Consider a disconnector $\Gamma$ of $M$ subjected to identity dynamics in time;  that is,~$\phi_t(x)=x$.
The disconnector $\Phi(\Bgm_{\rm copy})$ is identical to~$\Bgm_{\rm copy}$.
Now consider the same disconnector $\Gamma$ of $M$, but instead subjected to dynamics $\phi_t(x)=x+\alpha t$ for some nonzero~$\alpha\in\mathbb{R}^d$.
The new disconnector $\Bgm':=\Phi(\Bgm_{\rm copy})$ of $\MM_1$ will now have a larger size in $\MM_1$ than the one we previously considered, due to the translation;  see Figure~\ref{fig:M0vsM1}.
\begin{figure}[htb]
    \centering
    \includegraphics[width = 0.75\textwidth]{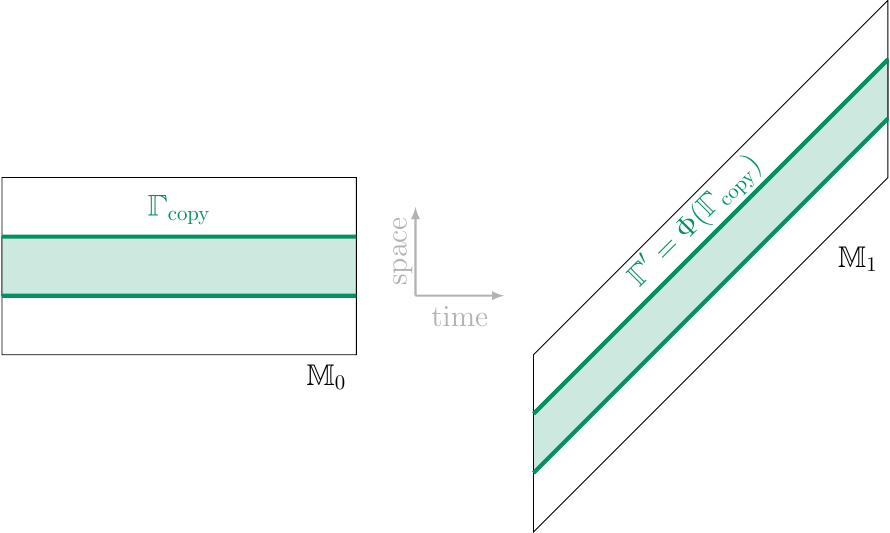}
\caption{ {Illustration of $\MM_0$ vs $\MM_1$ and associated disconnectors  for steady translational dynamics~$\phi_tx = x+\alpha t$, {with $t\in [0,\tau]$.} {Left: the disconnector $\bbGamma_{\rm copy}$ in $\MM_0$ is independent of $\alpha$.} {Right: } the velocity $\alpha$ of {the} translation {$\phi_t$} changes the measure of the spacetime disconnector~$\bbGamma'=\Phi(\bbGamma_{\rm copy})$ {in $\MM_1$, even though the translational dynamics has no effect on} the coherence of the set.}} 
    \label{fig:M0vsM1}
\end{figure}
In terms of coherence, translational dynamics should make no difference.
This is reflected by the facts that (i)  the disconnector $\Gamma$ of $M$ generates the \emph{same} disconnector $\Bgm_{\rm copy}$ of $\MM_0$ for any $\alpha$ and that (ii) the metrics $g_t$ on each time fibre remain independent of~$\alpha$.
Thus, it is the size of the disconnector in $\MM_0$ that is the true indicator of coherence, not the size of the disconnector in~$\MM_1$.

If there are \emph{no finite-time coherent sets that live throughout $[0,\tau]$}, it will be \emph{impossible} to find a disconnector of the form $\Bgm_{\rm copy}$ or $\Phi(\Bgm_{\rm copy})$ whose boundary has small volume because the evolved disconnectors $\phi_t(\Gamma)$ will be large for some non-trivial amount of time.
Nevertheless, coherence may be present for \emph{some} time, and this motivates the additional freedom we gain by using \emph{general} disconnectors $\Bgm$.
Such disconnectors are not necessarily material, in the sense that the $t$-fibre of the disconnector $\Bgm$ is not the $t$-evolution of the $0$-fibre of $\Bgm$;  in symbols, $\Bgm\cap (\{t\}\times M_t)\neq\phi_t(\Bgm\cap (\{0\}\times M)).$
This is in contrast to material disconnectors of the form $\Bgm_{\rm copy}$, where such an equality holds.
If the total codimension-1 Euclidean volume of these {time-varying} disconnectors is low, then they represent boundaries of \emph{semi-material coherent sets};  see the lower left and lower right panels of
Figure~\ref{fig:timeextension}.
What we lose in materiality, we gain in flexibility.
When a coherent set appears or disappears, the disconnectors $\Bgm$ are, in principle, able to track this emergence or vanishing by breaking materiality, as shown in the lower panels of Figure~\ref{fig:timeextension}.
Breaking materiality comes with a penalty of being less regular in time.
Disconnectors $\Bgm$ that follow coherent sets will be both regular in time and regular in space in order to maintain a small space-time volume for~$\Bgm$.

We use a parameter $a>0$ to relatively weight one-dimensional volume in time and $d$-dimensional volume in space.
Recall from \cite[Figure 2.1 and also eq.~(1.6)]{FrKo23} that at a point $(t,x)\in \MM_0$ we define local distances by the metric $G_{0,a}$ with coordinate representation
\begin{equation}
\label{eq:G0a}
    \begin{pmatrix}
        1/a^2&0\\
        0& D\phi_t(x)^\top D\phi_t(x)
    \end{pmatrix}.
\end{equation}
The lower right block is the local matrix representation of the metric $g_t$ we impose on the $t$-fibre $\{t\}\times M\subset\MM_0$.
When there is deviation of the disconnector $\Bgm$ from the ``material'' form of $\Bgm_{\rm copy}$, the parameter $a$ will penalise this deviation.
In this way, we will be able to control how much deviation from materiality we are willing to allow.
Given a codimension-1 $C^\infty$ surface $\Bgm\sub \MM_0$ which disconnects $\MM_0=\AAA_1\cup\Bgm\cup\AAA_2$ into a disjoint union with connected submanifolds  $\AAA_1$ and~$\AAA_2$, we require a way of measuring the codimension-1 volume of~$\Bgm$.
This is achieved by ``restricting'' the metric $G_{0,a}$ to~$\Bgm$.
More formally, let $\bi:\Bgm\xhookrightarrow{}\MM_0$ denote the inclusion map, $\bi^*G_{0,a}$ the induced metric on $\Bgm$,  and $V_{\bi^*G_{0,a}}$ the corresponding volume form on~$\Bgm$. 
We set
\begin{align}\label{eq: nm a Cheeger const }
    H_a:=\inf_{\footnotesize\Bgm}\frac{V_{\bi^*G_{0,a}}(\Bgm)}{\min\{V_{G_{0,a}}(\AAA_1),V_{G_{0,a}}(\AAA_2)\}},
\end{align}
where $V_{G_{0,a}}$ denotes the volume measure on $\MM_0$ with respect to~$G_{0,a}$.

For each $a\ge 0$ we define the Sobolev constant for the Riemannian manifold $(\MM_0, G_{0,a})$ by
\begin{align}\label{eq: def a Sobolev const}
    S_a:=\inf_{F\in C^\infty(\MM_0)}\frac{\int_{\MM_0}\|\nabla_{G_{0,a}}F\|_{G_{0,a}}\, dV_{G_{0,a}}}{\inf_{\al\in\RR}\int_{\MM_0}|F-\al|\, dV_{G_{0,a}}}.
\end{align}
Note that the representation \eqref{eq:G0a} of the metric $G_{0,a}$ and the assumed volume preservation of the flow implies that~$dV_{G_{0,a}}=\frac{1}{a}\,dt\, d\ell$.

Proposition~\ref{prop:lima} relates the geometric and functional minimisation problems \eqref{eq: nm cheeger const} and \eqref{eq: dyn Sobolev const} on $M$ to those on $\MM_0$, respectively (\ref{eq: nm a Cheeger const }) and~\eqref{eq: def a Sobolev const}.
\begin{proposition}
\label{prop:lima}
    We have that 
    \begin{enumerate}
        \item (\cite[Proposition 3.1]{FrKo23}) $H_a=S_a\leq s^D=h^D$ for all $a\geq 0$,
        \item (\cite[Proposition 3.1]{FrKo23}) $H_a$ and $S_a$ are nondecreasing for $a\geq 0$.
            \end{enumerate}
\end{proposition}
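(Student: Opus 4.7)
My plan is to treat the two parts of the proposition separately, handling all geometric content through explicit calculations with the block-diagonal metric $G_{0,a}$ from \eqref{eq:G0a}.

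For part (1), I would establish the chain $H_a = S_a \leq s^D = h^D$. The outer equality $s^D = h^D$ is the dynamic Federer--Fleming theorem quoted in the excerpt. The inner equality $H_a = S_a$ is the classical Federer--Fleming equality applied to the Riemannian manifold $(\MM_0, G_{0,a})$: the direction $S_a \leq H_a$ comes from testing with smoothed indicator functions of the smaller component of an almost-optimal disconnection, while $H_a \leq S_a$ is the standard coarea argument, in which almost every superlevel set of an almost-optimal test function supplies an admissible disconnector. For the middle inequality $S_a \leq s^D$, I would restrict the $S_a$ minimisation to time-independent test functions $F(t,x) = f(x)$. For such $F$ one has $\partial_t F = 0$, hence $\|\nabla_{G_{0,a}} F\|_{G_{0,a}} = \|\nabla_{g_t} f\|_{g_t}$; combining this with $dV_{G_{0,a}} = \tfrac{1}{a}\, dt\, d\ell$, the factor $\tfrac{1}{a}$ cancels between numerator and denominator and the $S_a$ ratio for $F$ coincides with the $s^D$ ratio for $f$. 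Taking the infimum over $f \in C^\infty(M)$ yields $S_a \leq s^D$.

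For part (2), I would argue monotonicity pointwise at the level of the ratios inside the two infima. A direct expansion using \eqref{eq:G0a} gives
\begin{equation*}
\|\nabla_{G_{0,a}} F\|_{G_{0,a}}^2 = a^2 (\partial_t F)^2 + \|\nabla_{g_t} F(t,\cdot)\|_{g_t}^2,
\end{equation*}
so multiplying numerator and denominator of the $S_a$ ratio by $a$ (permitted since the $\tfrac{1}{a}$ in $dV_{G_{0,a}}$ is common to both) produces the equivalent expression
\begin{equation*}
\frac{\int_{\MM_0}\sqrt{a^2(\partial_t F)^2 + \|\nabla_{g_t} F(t,\cdot)\|_{g_t}^2}\, dt\, d\ell}{\inf_{\alpha \in \RR}\int_{\MM_0} |F-\alpha|\, dt\, d\ell}.
\end{equation*}
For each fixed $F$ the integrand in the numerator is pointwise nondecreasing in $a$, and the denominator is $a$-independent, so the ratio is nondecreasing in $a$; infima of families of nondecreasing functions of $a$ remain nondecreasing. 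For $H_a$, I would parametrise a fixed disconnector locally as $(t,s) \mapsto (t, \gamma_t(s))$ with $s \in \RR^{d-1}$; a Schur-complement determinant computation then gives $\sqrt{\det(\bi^* G_{0,a})} = \sqrt{\det B}\, \sqrt{a^{-2} + \kappa}$, where $B$ is the spatial Gram matrix $(g_t(\partial_{s_i}\gamma_t, \partial_{s_j}\gamma_t))_{ij}$ and $\kappa \geq 0$ is the squared $g_t$-norm of the projection of $\partial_t\gamma_t$ onto the $g_t$-orthogonal complement of $\spn\{\partial_{s_i}\gamma_t\}$. Rescaling numerator and denominator by $a$, the $H_a$ ratio becomes
\begin{equation*}
\frac{\int \sqrt{\det B}\, \sqrt{1 + a^2 \kappa}\, dt\, ds}{\min\{\Leb_{\MM_0}(\AAA_1), \Leb_{\MM_0}(\AAA_2)\}},
\end{equation*}
where $\Leb_{\MM_0}$ denotes the product Lebesgue measure on $\MM_0$ (independent of $a$); this is again pointwise nondecreasing in $a$, and taking the infimum over $\Bgm$ preserves monotonicity.

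The main obstacle in reconstructing the argument from scratch is the direction $H_a \geq S_a$ in the Federer--Fleming equality, which requires the coarea formula on the Riemannian manifold-with-boundary $(\MM_0, G_{0,a})$ together with an argument that almost every level set of an optimal test function $F$ is a valid codimension-1 $C^\infty$ disconnector of $\MM_0$. The spacetime boundary $\{0,\tau\}\times M$ (and, in the Dirichlet case, also $[0,\tau]\times \partial M$) complicates this slightly, but the standard remedy is to combine Sard's theorem with a density argument approximating $F$ by smooth functions whose level sets meet the boundary transversally or avoid it entirely. Once this step is accepted, everything else reduces to the explicit metric calculations above.
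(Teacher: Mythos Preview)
Your proposal is correct. The paper does not supply its own proof of this proposition; both items are quoted directly from \cite[Proposition~3.1]{FrKo23}. Your reconstruction nonetheless matches the argument the paper gives for the Dirichlet analogue (Proposition~\ref{isoperprop}): the inequality $S_a \le s^D$ via time-independent test functions $F(t,x)=f(x)$, with the $1/a$ from $dV_{G_{0,a}}$ cancelling between numerator and denominator, and $H_a=S_a$, $s^D=h^D$ as instances of Federer--Fleming. For part~(2), your rescaled-integrand argument for $S_a$ is exactly the right mechanism; once $H_a=S_a$ is in hand, monotonicity of $H_a$ is immediate and the separate surface computation is unnecessary. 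If you keep the direct argument for $H_a$, note that your local chart $(t,s)\mapsto(t,\gamma_t(s))$ breaks down where the time-projection on $\Bgm$ is degenerate (e.g.\ on a horizontal piece $\{t_0\}\times\Gamma'$), but this is easily patched with an alternative chart or, more simply, bypassed via $H_a=S_a$.
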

Part 1 says that the minimisation problems on $\MM_0$ are relaxations of the minimisation problem on $M$.
The greater flexibility of the spacetime minimisation leads to lower minima.
Part 2 says that as we link time fibres together more tightly by increasing the factor $a$, the Cheeger and Sobolev constants will grow.
We conjecture that the geometric and functional solutions on $\MM_0$ converge to the ``time-copied'' solutions on $M$ in the limit of linking time fibres infinitely tightly.
Thus, increasing the parameter $a$ makes the optimal disconnectors $\Bgm$ increasingly material until in the limit as $a\to\infty$ they become completely material. 
\begin{conjecture}
\label{conj1}
    $\lim_{a\to\infty} S_a=s^D$ and $\lim_{a\to\infty} H_a=h^D$. 
    Moreover, let  $F_a:\mathbb{M}_0\to\mathbb{R}$ minimise $S_a$, $f:M\to\mathbb{R}$ minimise $s^D$, $\Bgm_a$ minimise $H_a$, and $\Gamma$ minimise~$h^D$. 
    Then $\lim_{a\to\infty} F_a(t,x)=f(x)$ for all $t\in[0,\tau]$ and $x\in M$, and $\lim_{a\to\infty} \Bgm_a=[0,\tau]\times \Gamma$\revision{, in some appropriate sense of convergence of sets}. 
\end{conjecture}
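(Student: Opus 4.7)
The plan is to establish $\lim_{a\to\infty} S_a = s^D$ directly; then $\lim H_a = h^D$ follows from Proposition~\ref{prop:lima}(1) ($H_a=S_a$) combined with the dynamic Federer--Fleming identity $s^D = h^D$. Proposition~\ref{prop:lima}(1) already gives $S_a \leq s^D$, so $\limsup_a S_a \leq s^D$ is immediate and the content is the matching lower bound. A direct computation from the block-diagonal form~\eqref{eq:G0a} gives $\|\nabla_{G_{0,a}}F\|_{G_{0,a}}^2 = a^2(\partial_t F)^2 + \|\nabla_{g_t}F\|_{g_t}^2$, and since the $1/a$ Jacobians of $dV_{G_{0,a}}$ cancel in numerator and denominator, \eqref{eq: def a Sobolev const} rewrites as
\[
S_a \;=\; \inf_{F\in C^\infty(\MM_0)} \frac{\int_0^\tau\!\int_M \sqrt{a^2(\partial_t F)^2 + \|\nabla_{g_t}F\|_{g_t}^2}\, d\ell\, dt}{\inf_{\al\in\RR}\int_0^\tau\!\int_M |F-\al|\, d\ell\, dt}.
\]

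For the lower bound, let $F_a$ be minimizers of $S_a$ (extending the class to $BV(\MM_0)$ if necessary), normalized so that $\inf_\al \|F_a-\al\|_{L^1(\MM_0)} = 1$; the numerator then equals $S_a \leq s^D$. Using $\sqrt{x^2+y^2}\geq |x|$ pointwise yields $a\|\partial_t F_a\|_{L^1(\MM_0)} \leq s^D$, so $\|\partial_t F_a\|_{L^1}\to 0$; the companion inequality $\sqrt{x^2+y^2}\geq |y|$ produces the uniform spatial control $\int_0^\tau\!\int_M \|\nabla_{g_t}F_a\|_{g_t}\, d\ell\, dt \leq s^D$. Since $g_t^{-1}$ has eigenvalues bounded away from $0$ uniformly on $[0,\tau]\times M$, this spatial bound dominates the Euclidean spatial $BV$ norm, and, after shifting by the optimal constants $\al_a$, the family is uniformly bounded in $BV(\MM_0)$. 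A standard $BV$ compactness argument then extracts a subsequence $F_{a_n}$ converging in $L^1(\MM_0)$ to some $F^*$; the time-derivative bound forces $\partial_t F^* = 0$, so $F^*(t,x) = f^*(x)$. Lower semicontinuity of the convex functional $F\mapsto \int_0^\tau\!\int_M \|\nabla_{g_t}F\|_{g_t}\, d\ell\, dt$ then gives
\[
\int_0^\tau\!\int_M \|\nabla_{g_t}f^*\|_{g_t}\, d\ell\, dt \;\leq\; \liminf_n \int_0^\tau\!\int_M \|\nabla_{g_t}F_{a_n}\|_{g_t}\, d\ell\, dt \;\leq\; \liminf_n S_{a_n},
\]
while the $L^1$ limit of the denominator yields $\tau\inf_\al\int_M |f^*-\al|\, d\ell = 1$. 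Inserting $f^*$ into~\eqref{eq: dyn Sobolev const} produces $\int_0^\tau\!\int_M \|\nabla_{g_t}f^*\|_{g_t}\, d\ell\, dt \geq s^D$, hence $\liminf_n S_{a_n} \geq s^D$, completing the first half of the conjecture.

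The above subsequence argument also shows that any $L^1$-accumulation point of $(F_a)$ is a minimizer of~\eqref{eq: dyn Sobolev const}, so if the $s^D$-minimizer is essentially unique (up to sign and additive constant), the full family converges to $f$ in $L^1(\MM_0)$. For disconnectors, a coarea formula on $(\MM_0,G_{0,a})$ identifies optimal $\Bgm_a$ with boundaries $\partial\{F_a>c_a\}$ at a suitable threshold $c_a$, and the $L^1$ convergence $F_a\to f$ forces $\{F_a>c_a\}\to [0,\tau]\times\{f>c\}$ in $L^1$, whose reduced boundary is $[0,\tau]\times\Gamma$. The main obstacle is the last step: promoting this $L^1$/measure-theoretic convergence of functions and sets to the pointwise and set-theoretic convergence stated in the conjecture. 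This would require a uniform-in-$a$ regularity theory for Cheeger/BV minimizers on the degenerating manifold $(\MM_0,G_{0,a})$ (whose time direction becomes infinitely stiff as $a\to\infty$), together with essential uniqueness of the limit minimizer $\Gamma$ of $h^D$ in~\eqref{eq: nm cheeger const}; neither is provided by the soft compactness/lower-semicontinuity framework above.
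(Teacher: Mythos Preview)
This statement is labelled a \emph{conjecture} in the paper; the paper offers no proof. (The source even contains a commented-out attempt based on time-averaging $F$ and claiming the quotient decreases, but that argument fails on two counts: the operator $\nabla_{g_t}$ and the norm $\|\cdot\|_{g_t}$ depend on $t$, so the step $\int_0^\tau \nabla_{G_{0,a}}F\,dt=\nabla_{G_{0,a}}\int_0^\tau F\,dt$ is not legitimate, and by Jensen the denominator $\inf_\alpha\|F-\alpha\|_{L^1}$ \emph{decreases} under time-averaging, which is the wrong direction. This is presumably why the statement was left as a conjecture.)

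Your compactness/lower-semicontinuity route to the scalar limits $\lim_{a\to\infty}S_a=s^D$ and $\lim_{a\to\infty}H_a=h^D$ is sound and goes well beyond anything the paper establishes. The points that would need care in a rigorous write-up are all standard: (i) passing from $C^\infty(\MM_0)$ to $BV(\MM_0)$ without changing the infimum (strict approximation); (ii) after shifting so the optimal constant is zero, $\|F_a\|_{L^1}=1$ together with your gradient bounds gives uniform BV control, which is what compactness needs; (iii) lower semicontinuity of $F\mapsto\int_0^\tau\!\int_M\|\nabla_{g_t}F\|_{g_t}\,d\ell\,dt$ under $L^1$ convergence with bounded BV norm follows from Reshetnyak-type results, since the integrand is convex and positively $1$-homogeneous in the gradient and continuous in $(t,x)$; (iv) the limit $f^*$ lies only in $BV(M)$, so one also needs that the infimum defining $s^D$ is unchanged on passing from $C^\infty(M)$ to $BV(M)$. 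With these in hand your chain gives $\liminf_a S_a\ge s^D$, and Proposition~\ref{prop:lima}(1) supplies the upper bound.

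Your assessment of the second half is accurate. The compactness argument shows every $L^1$-accumulation point of $(F_a)$ is time-independent and minimises~$s^D$; without uniqueness of the $s^D$-minimiser (modulo sign and additive constant) one cannot conclude convergence of the full family even in $L^1$, let alone the pointwise convergence asserted in the conjecture. The latter would further demand uniform-in-$a$ regularity for BV/Cheeger minimisers on the degenerating manifold $(\MM_0,G_{0,a})$, which is not available from soft arguments. The same obstructions apply to~$\Bgm_a$. So you have a credible proof of the scalar-limit half of the conjecture, while the convergence-of-minimisers half remains genuinely open---consistent with its conjectural status in the paper.
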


\revision{In the degenerate limit $a\to 0$ one has completely uncontrolled deviation from material motion of coherent sets because there is no dynamical connection between time fibres.
This limit emphasises infinitesimally time-local patterns, which only capture domain geometry, not dynamics.}

\subsection{The inflated dynamic Laplacian}
\label{ssec:iDL}

The \emph{inflated dynamic Laplacian} $\Delta_{G_{0,a}} : \mathrm{dom}(\Delta_{G_{0,a}}) \subset L^2(\MM_0,G_{0,a})\to L^2(\MM_0,G_{0,a})$ is defined by
\begin{equation}
    \label{iDLdefn}
    \Delta_{G_{0,a}}F(t,\cdot)=a^2\partial_{tt}F(t,\cdot)+\Delta_{g_t}F(t,\cdot).
\end{equation}
That is, at a point $(t,x)\in\MM_0$, a one-dimensional Laplacian of strength $a^2$ is applied along the time axis, and on each $t$-fibre the Laplace--Beltrami operator $\Delta_{g_t}$ is applied in space.
The operator $\Delta_{G_{0,a}}$ is a standard Laplace--Beltrami operator with metric $G_{0,a}$ and therefore inherits all of the usual properties:  symmetry, ellipticity, and discrete spectrum diverging to~$-\infty$.
Let $\mathbb{S}_0=L^2(\mathbb{M}_0)$ and for $k\ge 1$ let $\mathbb{S}_{k}=\{F\in L^2(\mathbb{M}_0):\langle F,F_i\rangle=0, 1\le i\le k\}$.
For $k\ge 1$, one has the standard variational characterisation of eigenvalues of Laplace--Beltrami operators (recall that the volume form $V_{G_{0,a}}$ is given by $dV_{G_{0,a}} = \frac{1}{a}dt\,d\ell$)
\begin{equation}
\label{eq:LB-lamk}
\Lambda_{k,a}=-\inf_{ F\in \mathbb{S}_{k-1} } \frac{\int_{\mathbb{M}_0} \left\|\nabla_{G_{0,a}}F\right\|^2_{G_{0,a}}\ dV_{G_{0,a}}}{\int_{\mathbb{M}_0} F^2\ dV_{G_{0,a}}}.
\end{equation}
One also immediately obtains the standard Cheeger inequality (see e.g.\ \cite[Remark VI.2.4]{chavel_isoperimetry}),
\begin{align}\label{a strd Cheeger ineq}
    H_a\leq 2\sqrt{-\Lm_{2,a}},
\end{align}
where $\Lm_{2,a}$ is the first nontrivial eigenvalue of~$\Dl_{G_{0,a}}$ {with homogeneous Neumann boundary conditions}.
In our dynamic context, this relates the volume of the optimal space-time disconnector $\Bgm$ to the second eigenvalue of~$\Dl_{G_{0,a}}$.

In the same way that the second eigenfunction of $\Delta^D$ approximated the minimiser of \eqref{eq: dyn Sobolev const}, the second eigenfunction of $\Delta_{G_{0,a}}$ approximates the minimiser of~\eqref{eq: def a Sobolev const}.
This is illustrated in Figure~\ref{fig:average-vs-inflated-dg-efuns}.
\begin{figure}[htb]
    \centering
    \includegraphics[width = 0.49\textwidth]{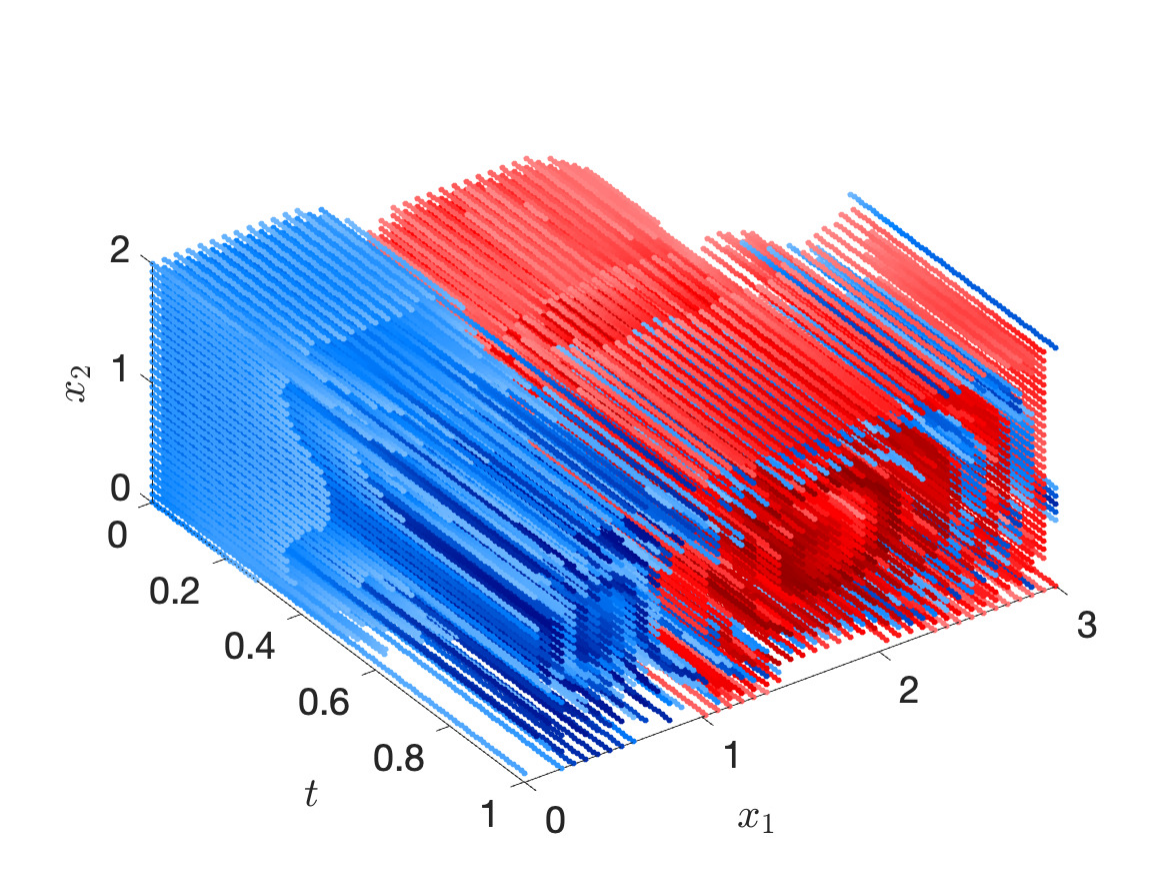}
    \hfill
    \includegraphics[width=0.49\textwidth]{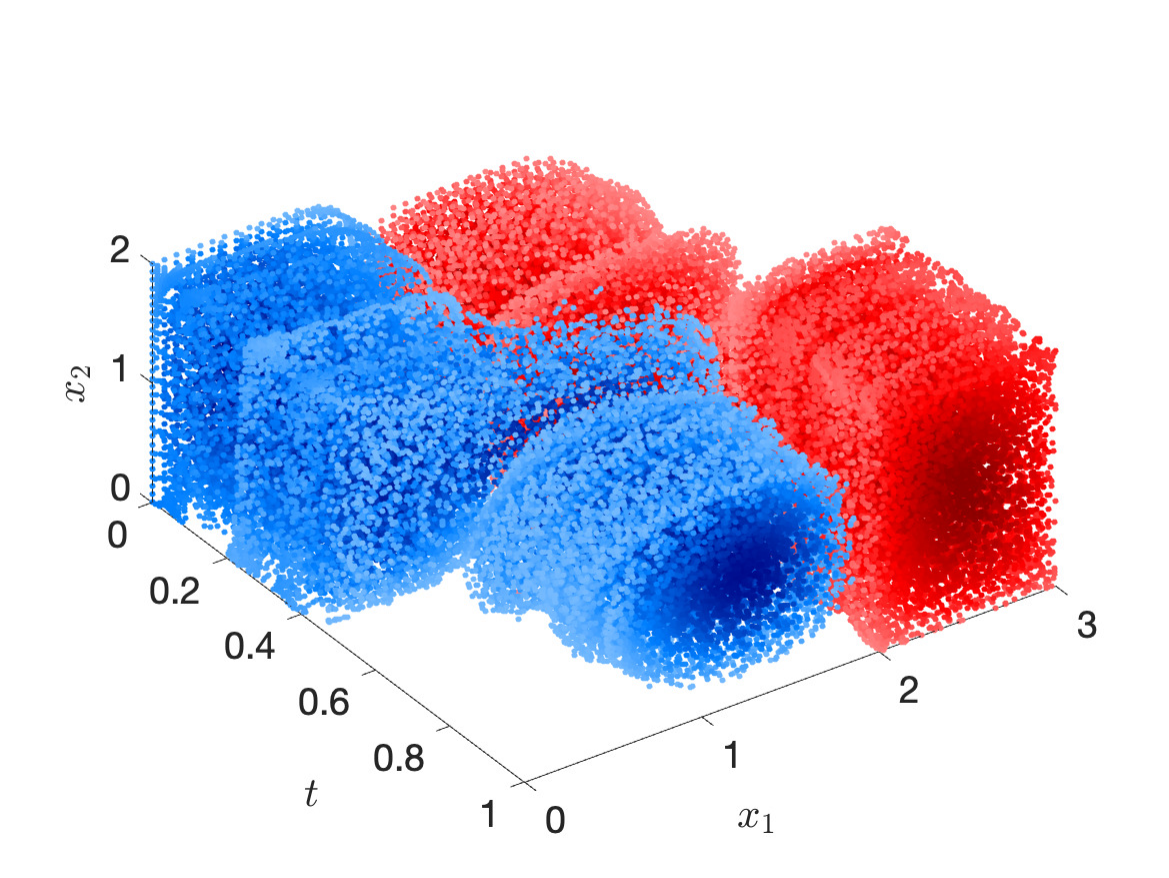}
    \caption{Third (second spatial) eigenfunction of the inflated dynamic Laplacian $\Delta_{G_{0,a}}$ with $a=0.3340, \epsilon=0.0044$ {(see Section~\ref{sec:dmapsDL} for the role of the parameter $\epsilon$ in the numerical discretisation of~$\Delta_{G_{0,a}}$)}. An absolute value cutoff of 0.25 has been applied. The extreme values of the eigenfunction -- deep red is $+1$ and deep blue is $-1$ -- 
    clearly highlight the two semi-material coherent sets.
    Left:~$\MM_0$. Material trajectories (straight lines in $\MM_0$) tend to have constant eigenfunction values. The strong change in colour of some trajectories indicate that the sets are only almost-material.
    Right:~$\MM_1$. The red and blue regions in the co-evolved time-expanded phase space remain largely disconnected, indicating strong coherence.
    Figure~\ref{fig:DGswitch_tempvar} highlights trajectories along which there is a nonmaterial change in the eigenfunction.}
    \label{fig:average-vs-inflated-dg-efuns}
\end{figure}
Compared to Figure~\ref{fig:DLexample} (right), the blue and red objects in Figure~\ref{fig:average-vs-inflated-dg-efuns} are smoother, but deviate slightly from representing a material family.
We analyse this example further in Section~\ref{sec:switching}.

\subsection{Dirichlet boundary conditions in space}
\label{ssec:dirichlet}

Sometimes $M$ is a subset of a larger ambient space and we would like to exclude disconnectors that touch the boundary of $M$ because in such situations, the boundary of $M$ is chosen somewhat arbitrarily.
For example, in flows in the ocean or atmosphere, one may choose $M$ to be a domain containing coherent objects of interest, but the precise choice of $M$ is not particularly important, and therefore should not play a strong role in selecting disconnectors.
In other situations, we may wish to explicitly exclude finite-time coherent sets that touch the boundary of the domain $M$.
For either of these reasons one can apply Dirichlet boundary conditions on $M$ to the eigenproblem for $\Delta^D$ so that the eigenfunctions are zero at the boundary of $M$ and therefore cannot take extreme values nearby.
For example, the leftmost red slice in Figure \ref{fig:PVav} illustrates the polar vortex at nominal day 0.
\begin{figure}[htb]
    \centering
    \includegraphics[width=\textwidth]{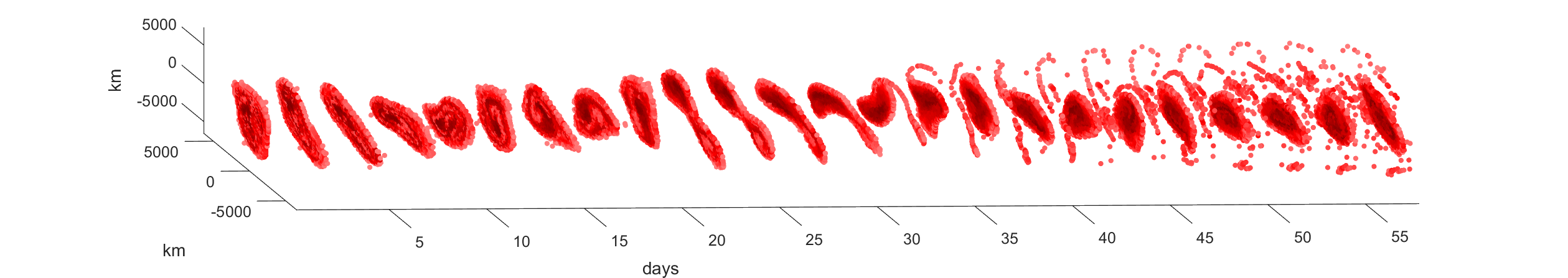}
    \caption{The leading eigenvector of the dynamic Laplacian advected through time.  The initial time slice corresponds to the initial manifold $M$ on which the dynamic Laplacian is defined and the values displayed range from 0 (white) to 1 (deep red). A cutoff of 0.25 has been applied to focus on the dominant feature (the vortex).  The slices to the right correspond to advecting the eigenvector forward in time. See Section~\ref{ssec:examplePV} for further details {and the supplementary material for movies displaying all trajectories, and with the boundary trajectories highlighted}.}
    \label{fig:PVav}
\end{figure}
This leftmost slice is a superlevel set of the leading eigenfunction of $\Delta^D$ with Dirichlet boundary conditions placed on a larger disk-like domain containing the vortex.
The vortex structure does not touch the boundary of the spatial domain. 
The subsequent slices show the first slice propagated through time.
At around day 20, the vortex begins to undergo a breakup event, and filamentation of the vortex is seen in the red slices to the right of the image, after the vortex has undergone breakup.
{For further details please refer to the supplementary movie \texttt{mov\_PV\_DynLapMode\_1.mp4}.}

On our time-expanded space $\MM_0$, in such a situation we apply Dirichlet boundary conditions in space, but retain Neumann boundary conditions on the two ``time-faces'' $\{0\}\times M$ and $\{\tau\}\times M$ of $\MM_0$ for greater flexibility.
Figure~\ref{fig:PVinfl} shows a spacetime SEBA\footnote{See Section~\ref{ssec:examplePV} for further details on the construction of this vector.} vector \cite{FRS19} obtained by inserting the leading two eigenvectors of $\Delta_{G_{0,a}}$ with Dirichlet boundary conditions in space.
\begin{figure}[htb]
    \centering
\includegraphics[width=\textwidth]{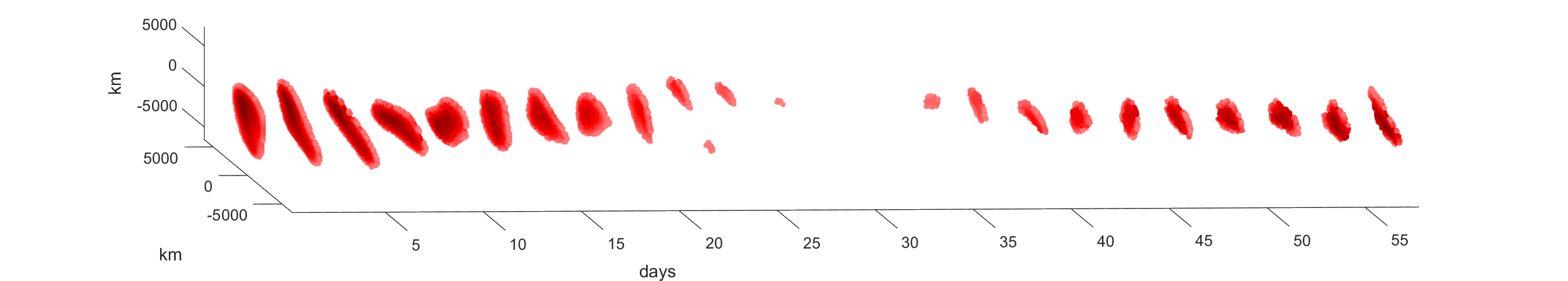}
    \caption{Maximum of two sparse vectors spanning the leading two-dimensional eigenspace of the inflated dynamic Laplacian, constructed by the SEBA \cite{FRS19} algorithm.  The values displayed range from 0 (white) to 1 (deep red) and a cutoff of 0.25 has been applied to focus on the dominant features.  See Section~\ref{ssec:examplePV} for further details.}
    \label{fig:PVinfl}
\end{figure}
Figure \ref{fig:PVav} captures the breakup of the vortex over the breakup interval from day 20 onward, through a ``pinching'', but strictly forcing a material object across the entire 60 days leads to significant filamentation toward the end of the evolution.
In contrast, Figure~\ref{fig:PVinfl} clearly shows the breakup and dissipation of the vortex between days 25 and 35, and when part of the vortex reforms after day 35, a correct reformed subvortex is captured with no filamentation.
Thus, again we see the motivation for relaxing materiality of coherent sets in situations where there are \emph{no fully coherent objects over the full flow duration}.
Further discussion of this example appears in Section~\ref{ssec:examplePV}.

We now briefly recap quantities on $M$ arising from Dirichlet boundary conditions from~\cite{FJ18}, and then extend these to $\MM_0$.
We recall the dynamic Cheeger constant for Dirichlet boundary conditions~\cite{FJ18}, 
\begin{equation}
\label{dynche}
h_\dir^D:=\inf_{A\subset M}\frac{\frac{1}{\tau}\int_0^\tau V_{\iota^*g_t}(\partial A)\ dt}{\ell(A)},
\end{equation}
where $A\subset M$ is an open submanifold of the interior of $M$ with compact closure and $C^\infty$ boundary.
Similarly, we recall the dynamic Sobolev constant \cite{FJ18}:
\begin{equation}
\label{dynsob}
s_\dir^D:=\inf_{f\in C^\infty_c(M,\mathbb{R})} \frac{\frac{1}{\tau}\int_0^\tau \int_M \|\nabla_{g_t}f\|_{g_t}\ d\ell dt}{\int_M |f|\ d\ell},
\end{equation}
where $C^\infty_c(M,\mathbb{R})$ denotes the set of non-identically vanishing, compactly supported, $C^\infty$ real-valued functions on the interior of $M$.
As in Section~\ref{sec:FTCS}, one has a dynamic Federer--Fleming Theorem \cite{FJ18}: $h^D_{\rm dir}=s^D_{\rm dir}$.
A dynamic Cheeger inequality holds \cite{FJ18} $h^D_{\rm dir}\leq 2\sqrt{-\lm_1^D}$,
where $\lm_1^D$ is the first  eigenvalue of the dynamic Laplacian $\Dl^D$ with Dirichlet boundary conditions.

We now turn to $\MM_0$.
Let $\mathbb{A}\subset\mathbb{M}_0$ be a submanifold of $[0,\tau]\times \mathrm{Int}(M)$ with compact closure in $[0,\tau]\times \mathrm{Int}(M)$
and $C^\infty$ boundary except where $\mathbb{A}$ intersects $\{0,\tau\}\times M$ if the latter is nonempty.
Let $\bi:\partial \mathbb{A}\xhookrightarrow{}\mathbb{M}_0$ denote the inclusion map, $\bi^*G_{0,a}$ the induced metric on $\partial \mathbb{A}$, and $V_{\bi^*G_{0,a}}$ the corresponding volume form on~$\partial \mathbb{A}$. Set
\begin{equation}
\label{augche}
H_{\dir, a}:=\inf_{\mathbb{A}\subset\mathbb{M}_0}\frac{V_{\bi^*G_{0,a}}(\partial\mathbb{A})}{V_{G_{0,a}}(\mathbb{A})}.
\end{equation}
We may also define a Sobolev constant for the Riemannian manifold $(\mathbb{M}_0,G_{0,a})$:
\begin{equation}
\label{augsob}
S_{\dir,a}:=\inf_{F\in C_c^\infty(\mathbb{M}_0)} \frac{\int_{\mathbb{M}_0}\|\nabla_{G_{0,a}}F\|_{G_{0,a}}\ dV_{G_{0,a}}}{\int_{\mathbb{M}_0} |F|\ dV_{G_{0,a}}},
\end{equation}
where $C^\infty_c(\mathbb{M}_0,\mathbb{R})$ denotes the set of nonidentically vanishing, $C^\infty$ real-valued functions that vanish on~$[0,\tau]\times \partial M$.
We consider the inflated dynamic Laplacian (as usual in weak form) on a Sobolev space built\footnote{Let $H^1_\dir(\MM_0)$ denote the Sobolev space of weakly differentiable $L^2(\MM_0)$ functions whose derivatives are in $L^2(\MM_0)$ and satisfy homogeneous Dirichlet boundary conditions on $[0,\tau] \times \partial M$ (i.e., ``in space''). This space can be obtained as the closure of the set $C^\infty_c(\mathbb{M}_0,\mathbb{R})$ of smooth functions that vanish on $[0,\tau] \times \partial M$ with respect to the $H^1(\MM_0)$ norm.} from these smooth functions, say ``with Dirichlet boundary conditions'' to mean Dirichlet boundary conditions on the spatial boundary.
The variational expression for the eigenvalues $\Lambda_{k,a}$ is as given in (\ref{eq:LB-lamk}), with $\mathbb{S}_k$, $k\ge 0$ restricted to functions satisfying the Dirichlet-in-space boundary conditions.
One obtains a Federer--Fleming Theorem and Cheeger inequality for these mixed boundary conditions on~$\mathbb{M}_0$.
\begin{proposition}
\label{mixedFFCheeger}
Imposing the above mixed Dirichlet-in-space and Neumann-in-time boundary conditions on the manifold $(\MM_0,G_{0,a})$, one has $H_{\rm dir,a}=S_{\rm dir,a}$ and a Cheeger inequality $H_a\leq 2\sqrt{-\Lm_{1,a}}$, where $\Lm_{1,a}<0$ is the first  eigenvalue of $\Dl_{G_{0,a}}$ with Dirichlet boundary conditions {in space}.
\end{proposition}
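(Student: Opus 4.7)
The plan is to adapt the standard Federer--Fleming/Cheeger machinery for a Riemannian manifold with boundary to our mixed Dirichlet-in-space/Neumann-in-time setting, using that $\MM_0$ is simply a Riemannian manifold $(\MM_0, G_{0,a})$ whose spatial boundary $[0,\tau]\times\partial M$ is the Dirichlet portion, while the time-faces $\{0,\tau\}\times M$ carry the Neumann (free) boundary. The Sobolev--Cheeger equivalence in the purely Dirichlet setting on~$M$ has already been worked out in \cite{FJ18}, so the aim is to transcribe that argument into the time-expanded setting, verifying that the Neumann-in-time faces create no genuine obstruction.

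First, I would establish the direction $H_{\dir,a}\le S_{\dir,a}$ by the coarea formula applied to $|F|$, where $F\in C_c^\infty(\MM_0)$ vanishes on $[0,\tau]\times\partial M$. For a.e.~$r>0$ (by Sard), the superlevel set $\mathbb{A}_r=\{|F|>r\}$ is a submanifold of $[0,\tau]\times\mathrm{Int}(M)$ with $C^\infty$ boundary in $\MM_0$ (except possibly on $\{0,\tau\}\times M$, which is exactly the exception allowed by the definition of admissible $\mathbb{A}$). The coarea formula with respect to~$G_{0,a}$,
\[
\int_{\MM_0}\|\nabla_{G_{0,a}}F\|_{G_{0,a}}\,dV_{G_{0,a}}=\int_0^\infty V_{\bi^*G_{0,a}}(\partial\mathbb{A}_r)\,dr,\qquad
\int_{\MM_0}|F|\,dV_{G_{0,a}}=\int_0^\infty V_{G_{0,a}}(\mathbb{A}_r)\,dr,
\]
combined with the defining inequality $V_{\bi^*G_{0,a}}(\partial\mathbb{A}_r)\ge H_{\dir,a}\,V_{G_{0,a}}(\mathbb{A}_r)$, delivers the bound after dividing.

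Next, for the reverse direction $S_{\dir,a}\le H_{\dir,a}$, I would approximate the indicator $\mathbf{1}_{\mathbb{A}}$ of any admissible set $\mathbb{A}$ by a sequence of smooth functions $F_\epsilon\in C_c^\infty(\MM_0)$ vanishing on $[0,\tau]\times\partial M$. The standard construction uses $F_\epsilon=\chi_\epsilon(d(\cdot,\partial\mathbb{A}))$ built from the signed $G_{0,a}$-distance to $\partial\mathbb{A}$ and a smooth cutoff~$\chi_\epsilon$. As $\epsilon\to 0$, $F_\epsilon\to\mathbf{1}_{\mathbb{A}}$ in~$L^1(V_{G_{0,a}})$ and $\int\|\nabla F_\epsilon\|_{G_{0,a}}\,dV_{G_{0,a}}\to V_{\bi^*G_{0,a}}(\partial\mathbb{A})$. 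Because $\overline{\mathbb{A}}\subset [0,\tau]\times\mathrm{Int}(M)$, the functions $F_\epsilon$ vanish on $[0,\tau]\times\partial M$ automatically; the Neumann time-faces pose no issue since there is no boundary condition there to respect.

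Finally, for the Cheeger inequality, let $F_1$ be a first Dirichlet-in-space/Neumann-in-time eigenfunction, which may be taken nonnegative since $|F_1|$ belongs to the same form domain and achieves the same Rayleigh quotient. Applying the Federer--Fleming identity just established to $G:=F_1^2$ (after a routine approximation argument to bring $G$ into the admissible test class) and using Cauchy--Schwarz gives
\[
S_{\dir,a}\int_{\MM_0} F_1^2\,dV_{G_{0,a}}\le\int_{\MM_0}\|\nabla_{G_{0,a}}G\|_{G_{0,a}}\,dV_{G_{0,a}}
=2\int_{\MM_0} F_1\|\nabla_{G_{0,a}}F_1\|_{G_{0,a}}\,dV_{G_{0,a}}
\le 2\Bigl(\textstyle\int F_1^2\,dV\Bigr)^{1/2}\Bigl(\int\|\nabla F_1\|^2\,dV\Bigr)^{1/2},
\]
and combining with the variational formula (\ref{eq:LB-lamk}) for $-\Lm_{1,a}$ yields $H_{\dir,a}=S_{\dir,a}\le 2\sqrt{-\Lm_{1,a}}$. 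The main obstacle is the mixed-boundary approximation argument in the second step: one must check that the standard distance-based mollifications respect the Dirichlet face $[0,\tau]\times\partial M$ while remaining untouched by the Neumann time-faces, so that neither $F_\epsilon$ nor its distributional gradient picks up spurious contributions from the top/bottom of $\MM_0$; once this bookkeeping is in place, the rest reduces to a direct lift of the arguments in \cite{FJ18,chavel_isoperimetry}.
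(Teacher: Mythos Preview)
Your proposal is correct and follows essentially the same route as the paper: both directions of $H_{\dir,a}=S_{\dir,a}$ are obtained via the standard distance-based approximation of $\mathbf{1}_{\mathbb{A}}$ and the coarea formula (the paper defers both to \cite[Theorem~II.2.1]{chavel_isoperimetry}), and the Cheeger inequality is proved by the classical $F_1^2$/Cauchy--Schwarz argument (the paper cites \cite[\S IV.3, Theorem~3]{chavel_eigenvalues} and the mixed-boundary Rayleigh theorem therein). Your write-up simply unpacks these references more explicitly and is slightly more careful about why the Neumann time-faces cause no trouble.
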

\begin{proof}
    For the first claim, for $\mathbb{A}$ described prior to the Proposition we set 
    $$F_\epsilon(t,x)=\left\{
  \begin{array}{ll}
    1, & \hbox{$(t,x)\in \mathbb{A}$;} \\
    1-d((t,x),\partial\mathbb{A})/\epsilon, & \hbox{$(t,x)\in \mathbb{M}_0\setminus\mathbb{A}, \ d((t,x), \partial\mathbb{A})<\epsilon$;} \\
    0, & \hbox{$(t,x)\in \mathbb{M}_0\setminus\mathbb{A}, \ d((t,x),\partial\mathbb{A})\ge\epsilon$.}
  \end{array}
\right.$$
Following the argument of \cite[Theorem II.2.1]{chavel_isoperimetry} one has that
\[
S_{{\rm dir},a}\le \lim_{\epsilon\to 0}\frac{\|\nabla F_\epsilon\|_1}{\|F_\epsilon\|_1}=\frac{V_{\bi^*G_{0,a}}(\partial\mathbb{A})}{V_{G_{0,a}}(\mathbb{A})}.
\]
Since $\mathbb{A}$ was arbitrary, we conclude $S_{{\rm dir},a}\le H_{{\rm dir},a}$.
To show $S_{{\rm dir},a}\ge H_{{\rm dir},a}$ one follows the argument in \cite[Theorem II.2.1]{chavel_isoperimetry}, almost verbatim.

For the second claim concerning the Cheeger inequality, our mixed boundary conditions fall under what Chavel \cite[p.~15]{chavel_eigenvalues} refers to as a ``mixed eigenvalue problem''. 
Our domain $\MM_0$ is a ``normal domain'' (connected manifold with compact closure and nonempty piecewise $C^\infty$ boundary) in the terminology of Chavel~\cite{chavel_eigenvalues}.
In our setting, Chavel's ``space of admissible functions'' -- denoted $\mathfrak{H}$ in \cite{chavel_eigenvalues} -- is the $H^1$-completion of $C^\infty$ functions compactly supported on the interior of our $\MM_0$ union the Neumann part of the boundary of~$\MM_0$. In such a setting, a Rayleigh Theorem~\cite[p.~16]{chavel_eigenvalues} is proven.
One may then follow the proof of the Cheeger inequality in \cite[\S IV.3 Theorem~3]{chavel_eigenvalues} almost verbatim to obtain the Cheeger inequality in our mixed boundary condition setting.
\end{proof}

Proposition~\ref{isoperprop} addresses the relationship between $h_\dir^D$ and $H_{\dir,a}$, and the behaviour of $H_{\dir,a}$ with increasing $a$;  one has the same interpretations as those discussed immediately after Proposition \ref{prop:lima}.
\begin{proposition}
\label{isoperprop}
In the Dirichlet boundary condition setting above, one has 
\begin{enumerate}
\item $H_{\dir,a}=S_{\dir,a}\le s_\dir^D=h_\dir^D$ for all $a\ge 0$.
\item $H_{\dir,a}$ and $S_{\dir,a}$ are nondecreasing in $a\ge 0$.
\end{enumerate}
\end{proposition}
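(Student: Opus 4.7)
The equality $H_{\dir,a}=S_{\dir,a}$ is exactly Proposition~\ref{mixedFFCheeger}, and $s_\dir^D=h_\dir^D$ is the Federer--Fleming statement on $M$ recalled from \cite{FJ18}. Part~1 thus reduces to the relaxation inequality $S_{\dir,a}\le s_\dir^D$. I would establish this by testing the $S_{\dir,a}$-quotient with the time-copy $F(t,x):=f(x)$ of any admissible competitor $f\in C_c^\infty(M,\R)$ for $s_\dir^D$. Since $f$ is supported in $\mathrm{Int}(M)$, $F$ vanishes on $[0,\tau]\times\partial M$ and hence is $S_{\dir,a}$-admissible. Using $\|\nabla_{G_{0,a}}F\|_{G_{0,a}}^2=a^2(\partial_t F)^2+\|\nabla_{g_t}F\|_{g_t}^2$ (which reduces here to $\|\nabla_{g_t}f\|_{g_t}^2$ since $\partial_t F\equiv 0$) together with $dV_{G_{0,a}}=\tfrac{1}{a}\,dt\,d\ell$, the $\tfrac{1}{a}$ factors cancel and the $S_{\dir,a}$-ratio for $F$ coincides with the $s_\dir^D$-ratio for $f$. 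Taking the infimum over $f$ gives $S_{\dir,a}\le s_\dir^D$, completing Part~1.

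\textbf{Plan for Part 2.} For a fixed admissible $F$, the same two identities rewrite the $S_{\dir,a}$-quotient as
\[
\frac{\int_0^\tau\!\int_M \sqrt{a^2(\partial_t F)^2+\|\nabla_{g_t}F\|_{g_t}^2}\,d\ell\,dt}{\int_0^\tau\!\int_M |F|\,d\ell\,dt},
\]
whose numerator integrand is pointwise nondecreasing in $a$ and whose denominator is $a$-independent; the infimum over $F$ therefore remains nondecreasing. For $H_{\dir,a}$ with $\AAA$ fixed, I would parameterise $\partial\AAA$ locally: on a patch where $\partial\AAA$ is a graph $t=T(s)$, the matrix-determinant lemma gives $\sqrt{\det(\bi^*G_{0,a})}=\sqrt{\det(G_X)\bigl(1+\tfrac{1}{a^2}(\nabla_s T)^\top G_X^{-1}(\nabla_s T)\bigr)}$, while on ``vertical'' patches where the time direction lies in $T(\partial\AAA)$, a Schur-complement decomposition produces an analogous formula with $\tfrac{1}{a^2}$ entering additively. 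Combined with $V_{G_{0,a}}(\AAA)=\tfrac{1}{a}V(\AAA)$ from volume preservation, the quotient becomes $\int_{\partial\AAA} a\sqrt{\det(\bi^*G_{0,a})}\,ds/V(\AAA)$, whose integrand $a\sqrt{\det(\bi^*G_{0,a})}$ has the schematic form $\sqrt{c_1+a^2 c_2}$ with $c_1,c_2\ge 0$ and is thus pointwise nondecreasing in $a$; the infimum over $\AAA$ preserves monotonicity.

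\textbf{Main obstacle.} No single step is genuinely hard: both parts closely mirror the Neumann arguments in \cite[Proposition~3.1]{FrKo23} recalled as Proposition~\ref{prop:lima}, the only adjustment being that the test functions and sets must respect the Dirichlet condition on $[0,\tau]\times\partial M$, which the constructions above do by design. The mildest source of friction is handling ``vertical'' patches of $\partial\AAA$, where the plain matrix-determinant lemma degenerates and the Schur-complement form is needed; this is technical but routine, and can be glued via a partition of unity. All geometric and functional monotonicity then follows from the elementary observation that, after multiplication by the appropriate power of $a$ coming from $dV_{G_{0,a}}=\tfrac{1}{a}dt\,d\ell$, every $\tfrac{1}{a^2}$ appearing inside a square root is converted to $a^2$, which is trivially nondecreasing.
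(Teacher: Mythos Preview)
Your proposal is correct and follows essentially the same route as the paper's proof, which likewise invokes Proposition~\ref{mixedFFCheeger} and \cite{FJ18} for the two equalities and obtains $S_{\dir,a}\le s_\dir^D$ by testing with the time-copy $F(t,x)=f(x)$; for Part~2 the paper simply defers to the argument in \cite[Proposition~3.1]{FrKo23}, which is the computation you spell out. Two minor remarks: once Part~1 gives $H_{\dir,a}=S_{\dir,a}$, monotonicity of $H_{\dir,a}$ is immediate from that of $S_{\dir,a}$, so your direct geometric argument is not needed; and if you do want it, the case split into graph versus vertical patches is avoidable, since for any local parameterisation $r\mapsto(t(r),x(r))$ one has $\bi^*G_{0,a}=B+\tfrac{1}{a^2}vv^\top$ with $B_{ij}=g_t(\partial_{r_i}x,\partial_{r_j}x)$ and $v_i=\partial_{r_i}t$, whence $a^2\det(\bi^*G_{0,a})=a^2\det B + v^\top\mathrm{adj}(B)\,v$ is nondecreasing in $a$ because $B$ (and hence $\mathrm{adj}(B)$) is positive semidefinite.
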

\begin{proof}
\

{Part 1:} The fact that $H_{\dir,a}=S_{\dir,a}$ follows from Proposition \ref{mixedFFCheeger}.
The fact that $s_\dir^D=h_\dir^D$ follows from the Dirichlet version of the dynamic Federer--Fleming Theorem 
\cite[Theorem 1]{FJ18}.
The inequality $S_{\dir,a}\le s_\dir^D$ follows similarly to part~1 of the proof of~\cite[Proposition 3.1]{FrKo23}.
One chooses an $f_\epsilon\in C_c^\infty(M)$ and sets $F_\epsilon\in C_c^\infty(\mathbb{M}_0)$ to be~$F_\epsilon(t,x)=f_\epsilon(x)$.
One then shows that the infimand in \eqref{augsob} using $F=F_\epsilon$ is equal to the infimand of \eqref{dynsob} using $f=f_\epsilon$, with the proof proceeding as in  \cite[Proposition 3.1]{FrKo23}.

{Part 2:} This follows as in the proof of part~2 of~\cite[Proposition~3.1]{FrKo23}.
\end{proof}
We have a conjecture analogous to Conjecture \ref{conj1}.
\begin{conjecture}
    $\lim_{a\to\infty} S_{\dir,a}=s_{\dir}^D$ and $\lim_{a\to\infty} H_{\dir,a}=h_{\dir}^D$. 
    Moreover, let  $F_a:\mathbb{M}_0\to\mathbb{R}$ minimise $S_{\dir,a}$, $f:M\to\mathbb{R}$ minimise $s_{\dir}^D$, $\mathbb{A}_a$ minimise $H_{\dir,a}$, and $A$ minimise $h_{\dir}^D$. 
    Then $\lim_{a\to\infty} F_a(t,x)=f(x)$ for all $t\in[0,\tau]$ and $x\in M$, and $\lim_{a\to\infty} \mathbb{A}_a=[0,\tau]\times A$. 
\end{conjecture}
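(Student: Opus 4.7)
The plan is to prove $\lim_{a\to\infty}S_{\dir,a}=s_\dir^D$ by a $BV$ compactness and lower-semicontinuity argument, and then to obtain the remaining claims (including $H_{\dir,a}\to h_\dir^D$ and the minimizer convergence) by invoking the Dirichlet dynamic Federer--Fleming theorem together with the equality $S_{\dir,a}=H_{\dir,a}$ from Proposition~\ref{mixedFFCheeger}. The strategy parallels what one would try for the Neumann case in~\cite{FrKo23}, but requires care with the spatial Dirichlet trace. First I would unwind the metric: from the block-diagonal form \eqref{eq:G0a} and volume preservation, $dV_{G_{0,a}}=\tfrac{1}{a}\,dt\,d\ell$ and $\|\nabla_{G_{0,a}}F\|_{G_{0,a}}^{\,2}=a^2(\partial_tF)^2+\|\nabla_{g_t}F\|_{g_t}^{\,2}$, so the $1/a$ factors cancel in the ratio and
\begin{equation*}
S_{\dir,a}=\inf_{F\in C_c^\infty(\MM_0)}\frac{\int_0^\tau\!\int_M\sqrt{a^2(\partial_tF)^2+\|\nabla_{g_t}F\|_{g_t}^{\,2}}\,d\ell\,dt}{\int_0^\tau\!\int_M|F|\,d\ell\,dt}.
\end{equation*}
By Proposition~\ref{isoperprop}, $S^*:=\lim_{a\to\infty}S_{\dir,a}\le s_\dir^D$ exists, so the task reduces to the reverse inequality.

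Next, I would pick near-minimizers $F_a\in C_c^\infty(\MM_0)$ normalised by $\int_0^\tau\!\int_M|F_a|\,d\ell\,dt=1$ with numerator at most $S_{\dir,a}+1/a$. The pointwise bounds $\sqrt{a^2x^2+y^2}\ge a|x|$ and $\sqrt{a^2x^2+y^2}\ge|y|$ give
\begin{equation*}
\int_0^\tau\!\int_M|\partial_tF_a|\,d\ell\,dt\le\tfrac{s_\dir^D+1/a}{a}\to 0,\qquad
\int_0^\tau\!\int_M\|\nabla_{g_t}F_a\|_{g_t}\,d\ell\,dt\le s_\dir^D+1/a.
\end{equation*}
Because $g_t$ is uniformly comparable to the Euclidean metric on the compact $M$, these yield a uniform $BV$ bound on $F_a$ on the bounded spacetime domain $\MM_0$, and compactness of $BV(\MM_0)\hookrightarrow L^1(\MM_0)$ produces a subsequence $F_{a_k}\to F^*$ in $L^1$. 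The vanishing $L^1$ norm of $\partial_tF_a$ forces $\partial_tF^*=0$ distributionally, so $F^*(t,x)=f(x)$ is time-independent; because $F_{a_k}$ vanishes on $[0,\tau]\times\partial M$, the limit $f$ has zero $BV$ trace on~$\partial M$. Lower semicontinuity of the anisotropic total variation built from the smooth, uniformly elliptic integrand $\|\cdot\|_{g_t}$ gives
\begin{equation*}
\int_0^\tau\!\int_M\|\nabla_{g_t}f\|_{g_t}\,d\ell\,dt\le\liminf_{k\to\infty}\int_0^\tau\!\int_M\|\nabla_{g_t}F_{a_k}\|_{g_t}\,d\ell\,dt\le S^*,
\end{equation*}
and since $\int_M|f|\,d\ell=1/\tau$, the ratio defining $s_\dir^D$ at $f$ is $\int_0^\tau\!\int_M\|\nabla_{g_t}f\|_{g_t}\,d\ell\,dt\le S^*$. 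Approximating $f$ by functions in $C_c^\infty(M)$ in the strict $BV$ topology then yields $s_\dir^D\le S^*$. Proposition~\ref{mixedFFCheeger} and the Dirichlet Federer--Fleming theorem \cite[Theorem 1]{FJ18} transfer this to $H_{\dir,a}\to h_\dir^D$; for minimizer convergence, the same compactness applied to the $F_\epsilon$-relaxations of $\chi_{\mathbb{A}_a}$ used in the proof of Proposition~\ref{mixedFFCheeger} yields a time-independent limit whose level sets, via the coarea formula, identify the limiting minimizer as the cylinder $[0,\tau]\times A$.

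The main obstacle I anticipate is the existence of \emph{literal} minimizers $F_a$ and $\mathbb{A}_a$ in the open classes $C_c^\infty(\MM_0)$ and smoothly-bounded open sets, respectively: such strict minimizers typically exist only after relaxation to $BV$ or finite-perimeter sets. Reading the conjecture literally thus likely requires interpreting the minimizers in a $BV$/finite-perimeter sense, and interpreting the pointwise limit $F_a(t,x)\to f(x)$ as holding almost everywhere along a subsequence. A secondary technical point is that the density of $C_c^\infty(M)$ in $BV_0(M)$ in the strict topology, used in the last step of the argument, requires mild regularity of $\partial M$ (e.g.\ Lipschitz), which is reasonable in the present setting.
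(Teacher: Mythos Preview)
The statement you are attempting to prove is labelled as a \emph{Conjecture} in the paper and carries no proof there; the authors explicitly leave it open (as they do for the analogous Neumann-case Conjecture~\ref{conj1}). So there is no ``paper's own proof'' to compare against.

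That said, your strategy for the first assertion, $\lim_{a\to\infty}S_{\dir,a}=s_\dir^D$, is a natural $BV$-compactness argument and the skeleton looks sound: the uniform $BV$ bound, $L^1$ compactness, vanishing of $\partial_t F^*$, lower semicontinuity of the anisotropic total variation, and strict-$BV$ approximation via Reshetnyak continuity are all the right ingredients. One point to watch is that your ``strict $BV$ approximation'' step must simultaneously control the \emph{time-averaged} integrand $\frac{1}{\tau}\int_0^\tau\|\nabla_{g_t}(\cdot)\|_{g_t}\,dt$, not merely a single anisotropic norm; Reshetnyak's continuity theorem applied to the averaged (still 1-homogeneous, continuous) integrand handles this, but you should state it explicitly. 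The trace argument (that the $L^1$ limit inherits zero spatial trace) also needs a line of justification, since $BV$ traces are not continuous under mere $L^1$ convergence; you need the uniform $BV$ bound plus the fact that all $F_{a_k}$ have zero trace.

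For the second half of the conjecture---convergence of actual minimisers $F_a\to f$ pointwise and $\mathbb{A}_a\to[0,\tau]\times A$---you correctly flag the central obstruction: the infima in \eqref{augsob} and \eqref{augche} are taken over open smooth classes where minimisers need not exist, and even after $BV$/finite-perimeter relaxation, \emph{uniqueness} of the limiting minimiser $f$ (or $A$) is not guaranteed, so one cannot expect pointwise convergence of \emph{the} minimiser without further hypotheses. Your proposal to reinterpret the statement in a relaxed sense with subsequential a.e.\ convergence is the honest way forward, but note that this is a reformulation of the conjecture, not a proof of it as stated. In short: your outline is a credible route to the scalar limits $S_{\dir,a}\to s_\dir^D$ and $H_{\dir,a}\to h_\dir^D$, but the minimiser-convergence claims remain genuinely open, which is presumably why the authors left the whole statement as a conjecture.
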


Theorem~\ref{thm:eigbounds} is the Dirichlet-boundary-condition version of~\cite[Theorem 3.3]{FrKo23}. 
Let $\Delta^D$ with Dirichlet boundary conditions have eigenvalues $0>\lambda_1^D>\lambda_2^D>\cdots$.
We show that $\Delta_{G_{0,a}}$ with Dirichlet boundary conditions in space has eigenvalues $0>\Lambda_{1,a}>\Lambda_{2,a}>\cdots$, and that the eigenvalues of these two operators satisfy certain relationships.
\begin{theorem}
\label{thm:eigbounds}
The operator $\Delta_{G_{0,a}}$ with Dirichlet-in-space boundary conditions has discrete spectrum, each eigenvalue is negative and has finite multiplicity, and the eigenvalues only accumulate at~$-\infty$. Furthermore: 
\begin{enumerate}
\item For each $k\ge 1$ and $a > 0$ one has $\lambda_k^D \le \Lambda_{k,a}$.
\item For each $k\ge 1$, $\Lambda_{k,a}$ is nonincreasing in~$a\ge 0$.
\item For each $k\ge 1$, $\lim_{a\to\infty} \Lambda_{k,a} = \lambda_k^D$.
\end{enumerate}
\end{theorem}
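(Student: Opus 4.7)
The plan is to adapt \cite[Theorem~3.3]{FrKo23}, which proves the analogous result in the Neumann-in-space case, to the Dirichlet-in-space Sobolev space $H^1_{\mathrm{dir}}(\MM_0)$. The first sentence of the theorem follows from standard elliptic theory: the Rellich--Kondrachov embedding $H^1_{\mathrm{dir}}(\MM_0)\hookrightarrow L^2(\MM_0)$ applied to the normal domain $(\MM_0,G_{0,a})$ makes the resolvent of $\Delta_{G_{0,a}}$ compact and self-adjoint, producing a discrete spectrum of finite multiplicity accumulating only at $-\infty$; strict negativity of every eigenvalue is a Poincaré inequality on $H^1_{\mathrm{dir}}(\MM_0)$, since constants are excluded from the domain. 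After cancelling the factor $1/a$ shared by $dV_{G_{0,a}}$ and the Dirichlet integrand, the variational formula \eqref{eq:LB-lamk} simplifies to
\[
-\Lambda_{k,a}\;=\;\inf_{\substack{V\subset H^1_{\mathrm{dir}}(\MM_0) \\ \mathrm{dim}\,V=k}}\;\max_{F\in V\setminus\{0\}}\;\frac{\int_{\MM_0}\bigl[a^2(\partial_t F)^2+\|\nabla_{g_t}F\|_{g_t}^2\bigr]\,d\ell\,dt}{\int_{\MM_0}F^2\,d\ell\,dt},
\]
with an analogous formula for $-\lambda_k^D$ having numerator $\tfrac{1}{\tau}\int_0^\tau\!\int_M\!\|\nabla_{g_t}f\|_{g_t}^2\,d\ell\,dt$ and infimum taken over $k$-dimensional subspaces of $H^1_{\mathrm{dir}}(M)$.

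From the display above, Parts~1 and 2 are immediate. For Part~1 I insert the $t$-independent lifts $F_j(t,x):=f_j(x)$ of the first $k$ Dirichlet eigenfunctions of $\Delta^D$ as a test subspace; since $\partial_tF_j\equiv0$, the Rayleigh quotient of $\Delta_{G_{0,a}}$ collapses to the dynamic Rayleigh quotient of $\Delta^D$, so the maximum over $\mathrm{span}(F_1,\ldots,F_k)$ equals $-\lambda_k^D$ independently of $a$. For Part~2, the denominator is $a$-independent while the numerator is pointwise nondecreasing in $a$, so for each fixed $V$ the max over $V$ is nondecreasing in $a$ and hence so is the infimum, yielding that $\Lambda_{k,a}$ is nonincreasing in $a$.

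The main obstacle is Part~3. Parts~1--2 yield the existence of $\Lambda_k^\infty:=\lim_{a\to\infty}\Lambda_{k,a}\ge\lambda_k^D$; the task is to prove $\Lambda_k^\infty\le\lambda_k^D$. I would pick $a_n\to\infty$ and $L^2(\MM_0,d\ell\,dt)$-orthonormal eigenfunctions $F_{1,a_n},\ldots,F_{k,a_n}$ for each $n$. Part~1 and the eigenvalue identity give the uniform bounds $\|\nabla_{g_t}F_{j,a_n}\|_{L^2}^2\le -\lambda_j^D$ and $a_n^2\|\partial_tF_{j,a_n}\|_{L^2}^2\le -\lambda_j^D$, so the sequences are bounded in $H^1_{\mathrm{dir}}(\MM_0)$ with $\|\partial_tF_{j,a_n}\|_{L^2}\to 0$. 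By Rellich compactness and diagonal extraction, on a common subsequence $F_{j,a_n}\to f_j$ strongly in $L^2$ and weakly in $H^1_{\mathrm{dir}}$, the vanishing of $\partial_tF_{j,a_n}$ forces each $f_j$ to be $t$-independent, and strong $L^2$ convergence preserves orthonormality, so $\sqrt{\tau}f_1,\ldots,\sqrt{\tau}f_k$ are orthonormal in $L^2(M)$ and lie in $H^1_{\mathrm{dir}}(M)$. The min-max characterization of $\lambda_k^D$ applied to the $k$-dimensional subspace $V_\infty=\mathrm{span}(f_1,\ldots,f_k)$, combined with weak lower semicontinuity on each diagonal Dirichlet energy, will then deliver $-\lambda_k^D\le -\Lambda_k^\infty$. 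The technically delicate step, which I expect to be the hard part, is the simultaneous control of the off-diagonal cross-terms $\int\nabla_{g_t}F_{i,a_n}\cdot\nabla_{g_t}F_{j,a_n}\,d\ell\,dt$ for $i\ne j$: energy-orthogonality of the $F_{j,a_n}$ as eigenvectors of $\Delta_{G_{0,a_n}}$ identifies these with $-a_n^2\int\partial_tF_{i,a_n}\partial_tF_{j,a_n}\,d\ell\,dt$, whose vanishing as $n\to\infty$ requires quantitative estimates on the time-derivatives sharper than the crude bounds used above. This is precisely the argument carried out in \cite[Theorem~3.3]{FrKo23}, which I would follow verbatim after substituting $H^1_{\mathrm{dir}}$ for the Neumann Sobolev space throughout.
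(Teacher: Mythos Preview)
Your proposal is correct and takes essentially the same approach as the paper: establish the spectral properties via coercivity of the bilinear form on $H^1_{\mathrm{dir}}(\MM_0)$ (the paper does this explicitly through a spatial Poincar\'e--Friedrichs inequality combined with a uniform lower bound on $g_t^{-1}$), then invoke the min-max principle and follow \cite[Theorem~3.3]{FrKo23} verbatim for Parts~1--3 with the Dirichlet Sobolev space in place of the Neumann one. One minor remark on your Part~3 sketch: the off-diagonal cross-terms you flag as the delicate step are not actually an obstacle, because the $F_{j,a_n}$ diagonalize the \emph{full} bilinear form $B$, so for any combination $F_n=\sum_j c_jF_{j,a_n}$ one has $B(F_n,F_n)=-\sum_j c_j^2\Lambda_{j,a_n}\le -\Lambda_{k,a_n}\sum_j c_j^2$ directly, and weak lower semicontinuity of the spatial Dirichlet energy alone then yields the required bound without any separate control of cross-terms.
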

\begin{proof}
    See Appendix~\ref{app:eigenbounds}.
\end{proof}
One may interpret the eigenvalues of $\Delta^D$ (resp.\ $\Delta_{G_{0,a}}$) as decay rates of the corresponding eigenfunctions under the semiflow 
$\exp(\Delta^D t)$ (resp.\ $\exp(\Delta_{G_{0,a}} t)$).
Greater irregularity in an eigenfunction generally leads to greater decay.
Part 1 states that the $k^{\rm th}$ eigenvalue of $\Delta_{G_{0,a}}$ is not farther from 0 than the $k^{\rm th}$ eigenvalue of $\Delta^D$, which is consistent with the intuition that we expect the $k^{\rm th}$ eigenfunction of $\Delta_{G_{0,a}}$ to be more regular than the $k^{\rm th}$ eigenfunction of~$\Delta^D$.
Part 2 states that as distinct time slices are tied more tightly together by increased temporal diffusion, the eigenvalues $\Lambda_{k,a}$ move farther from zero, indicating faster decay and more irregular eigenfunctions.
Part 3 states that in the limit of tying different time slices together infinitely tightly (thus achieving full materiality), the eigenvalues of $\Delta_{G_{0,a}}$ converge to those of~$\Delta^D$.

\section{Practical considerations}
\label{sec:practical}

\subsection{Distinguishing spatial and temporal eigenfunctions}
\label{ssec:spatial}
For Neumann boundary conditions, in \cite{FrKo23} we called eigenfunctions of $\Delta_{G_{0,a}}$ of the form $f(t)\cdot\mathbf{1}(x)$ \emph{temporal} because they carried only temporal information.  Eigenfunctions in the $L^2$ orthogonal complement of $L^2([0,\tau])\times \spn\{\mathbf{1}\}$ were called \emph{spatial}.
{Spatial eigenfunctions can be identified by computing the spatial means on each time fibre and checking for constancy across time fibres;  see the short \cite[section~4.2]{FrKo23} for a specific implementation.}

With Dirichlet boundary conditions on the spatial boundaries of $\MM_0$, there are no temporal eigenvalues to clutter the leading eigenfunctions.
To see this, suppose there is an eigenfunction of the separable form $f(t)g(x)$.
Because of the separable structure of $\Delta_{G_{0,a}}$, this would imply that $f$ is an eigenfunction of the one-dimensional Laplacian with Neumann boundary conditions and that $g$ is an eigenfunction of $\Delta_{g_t}$
for \emph{all}~$t\in[0,\tau]$.
Thus generically there are no eigenfunctions of this separable form and the Dirichlet spectrum consists purely of informative eigenfunctions of non-separable form.

\subsection{Choosing the temporal diffusion parameter~$a$}
\label{subsec:Dirichlet_choose_a}

\revision{The parameter $a$  has a physical interpretation of a time scale: the temporal duration of semi-material coherent sets in the time interval $[0,\tau]$ as obtained from~\eqref{eq: nm a Cheeger const } are commensurate to~$a$ (see \cite[Section~4.4]{FrKo23} for more details).}
In \cite[Section 4.1]{FrKo23} we discussed an \emph{a priori} heuristic to select a minimum value of the temporal diffusion strength parameter $a$.
This heuristic was based on matching the leading nontrivial spatial and temporal eigenvalues of the inflated dynamic Laplacian, and ensured that the leading eigenfunctions were not cluttered with temporal eigenfunctions, which provided no information on spatial coherence.
Roughly speaking with such a choice of $a$ we expect a single separation into two temporal zones (given by the leading nontrivial temporal eigenfunction) to be as likely as a single separation into two spatial zones (given by the leading nontrivial spatial eigenfunction).
More generally, if we wish to encourage identifying sets of temporal extent approximately $\tau/k$ and spatial volume approximately $\ell(M)/l$ then in order that the approximate ``cost'' of the temporal and spatial variations are commensurate in the numerator of \eqref{eq:LB-lamk}, we suggest matching the $(k-1)^{\mathrm{th}}$ temporal eigenvalue with the $(l-1)^{\mathrm{th}}$ nontrivial spatial eigenvalue.

To select a suitable $a$ in the Dirichlet boundary condition case, we modify the ideas in~\cite[Section 4.4]{FrKo23}.
With Neumann boundary conditions on both temporal and spatial boundaries of the domain $[0,\tau]\times M$, we calculated that the leading temporal eigenvalue was $\lambda^{\rm temp}_1=-a^2\pi^2/\tau^2$.
Because we continue to use Neumann boundary conditions on the temporal boundaries $\{0\}\times M$ and $\{\tau\}\times M$, we keep this eigenvalue as a heuristic comparator for the leading Dirichlet eigenvalue of $\Delta$ on the domain~$M$.
If $M$ is a rectangle with side lengths $s_1$ and $s_2$, then this leading Dirichlet eigenvalue is~$\pi^2(1/s_1^2+1/s_2^2)$.
Thus, matching these two eigenvalues we obtain~$a=\tau\sqrt{1/s_1^2+1/s_2^2}$.
Because we wish the spatial information to slightly dominate, we treat this as a heuristic \emph{minimum} value for $a$, namely~$\smash{ a_{\rm min}=\tau\sqrt{1/s_1^2+1/s_2^2} }$.
More generally, if $M$ were not a rectangle, we would have $a_{\rm min}=\tau\sqrt{-\lambda_1(M)} / \pi$, where $\lambda_1(M)$ is the leading eigenvalue of $\Delta$ on $M$ with Dirichlet boundary conditions.

\subsection{Distinguishing coherent
flow regimes from mixing regimes}
\label{ssec:distinguish_coh_mix}

In \cite[Section 4.3]{FrKo23} we argued that the $L^2$ time-fibre norms are key indicators of the presence of coherence vs complex phase space mixing.
While that analysis was for Neumann boundary conditions, the same arguments hold for Dirichlet boundary conditions.
If $F:\mathbb{M}_0\to \mathbb{R}$ is an eigenfunction of the inflated dynamic Laplacian with Dirichlet boundary conditions, we therefore also expect low values of $\|F(t,\cdot)\|_{L^2(M)}$ in time intervals of strong mixing.
Exactly how low will depend on the time-diffusion parameter $a$;  smaller $a$ will allow more variation of $\|F(t,\cdot)\|_{L^2(M)}$ in time $t$.
Figure~\ref{fig:seba-L2-norms} illustrates the behaviour of the $L^2$ time-fibre norms for the leading two eigenvectors and corresponding leading two SEBA vectors~\cite{FRS19} of the inflated dynamic Laplacian for the polar vortex experiment. 
\begin{figure}[htb]
    \centering
\includegraphics[width=\textwidth]{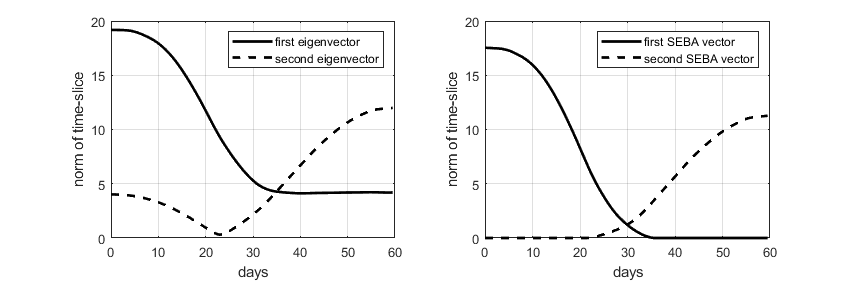}
    \caption{Left: $\ell_2$ norm of discrete eigenvectors estimating $F_1(t,\cdot)$ and $F_2(t,\cdot)$ vs $t$.  These slices are visualised in Figures~\ref{fig:PVevec1} and~\ref{fig:PVevec2} respectively.  One sees that $F_1$ captures the vortex presence prior to breakup, while $F_2$ is supported both before and after the breakup (but with opposite sign, see Figure~\ref{fig:PVevec2}, leading to the ``bump'' in norm around day 22). 
    Right: $\ell_2$ norm of SEBA vectors -- formed from the leading two eigenvectors $F_1, F_2$ of $\Delta_{G_{0,a}}$ -- vs time. The corresponding slices of these leading two SEBA vectors are superimposed in Figure~\ref{fig:PVinfl}. One clearly sees the first SEBA vector has initial strong support, when the initial vortex is present, but then decays as the vortex breaks up.  The second SEBA vector represents the reformation of part of the initial vortex, but some portion of the initial vortex is filamented and lost (see Figure~\ref{fig:PVav}) and so the norm of the dashed line does not reach the same maximum as the solid line.}
    \label{fig:seba-L2-norms}
\end{figure}
In this experiment there is only a brief period of partial breakup around 22 days where the second eigenvector has zero time-fibre norm (see Figure~\ref{fig:seba-L2-norms} (left)), while before this time the second eigenvector has only near-zero time-fibre norms.  
Similarly, the first eigenvector post-breakup also has only near-zero time-fibre norms. 
Selecting  two sparse basis vectors for the span of the two leading eigenvectors via the SEBA algorithm cleanly separates the pre- and post-breakup objects;  see Figure~\ref{fig:seba-L2-norms} (right).

\subsection{Selecting the number of eigenvalues}
\label{ssec:gap}

The standard heuristic for selecting the number of eigenvalues to consider in spectral methods is to look for the first gap in the spectrum -- the so-called eigengap heuristic.
For example, if the distance $|\lambda_{K+1}-\lambda_K|$ is much larger than the distances $|\lambda_{k+1}-\lambda_k|$ for $1\le k< K$ then we say there is a gap between $\lambda_{K}$ and $\lambda_{K+1}$.
One may also rescale eigenvalues to take into account the asymptotic spread of eigenvalues of Laplace--Beltrami operators due to Weyl's law {or directly use properties of the eigenfunctions;  for both options} see~\cite{FRS19}.
Once the number of eigenvalues $K$ has been fixed, and the spatial eigenfunctions identified as above, the nontrivial spatial eigenfunctions (i.e.\ not including the leading constant eigenfunction) undergo an augmentation procedure, described in the following subsection.

\subsection{Eigenfunction 
augmentation}
\label{ssec:augment}
{When Neumann boundary conditions are used, the inflated dynamic Laplacian spectrum contains temporal eigenvalues with corresponding temporal eigenfunctions, as discussed above.
The existence of these temporal eigenfunctions artificially restricts the geometry of the spatial eigenfunctions by constraining \emph{all} distinct eigenfunctions (including temporal) to be orthogonal.
To reinject the necessary degrees of freedom to capture all important spatial variability, we augment the set of eigenfunctions as follows.
Having selected a suitable value for the number of eigenvalues $K$ as above, let $\lambda^{\spat}_2,\ldots,\lambda^{\spat}_J$ denote the leading $J\le K$ \revision{spatial} eigenvalues \revision{contained in the collection $\{\lambda_2,\ldots,\lambda_K\}$}, \emph{not including} the leading eigenvalue\footnote{Recall we are discussing the case of Neumann boundary conditions.} $\lambda^{\spat}_1=0$.
For each spatial eigenfunction $F^\spat_j$, $j=2,\ldots,J$, we define a companion function $\hat{F}_j(t,x):=\|F_j^\spat(t,\cdot)\|_{L^2(M)}\mathbf{1}_M(x)$, which is constant in space on each time fibre, taking the constant value given by the $L^2$ norm of the time fibre. Further rationale and a detailed analysis is given in 
Section~\ref{sec:efunstruc}.}

\subsection{Extracting individual semi-material finite-time coherent sets}

{Algorithm \ref{alg:workflow} sketches the workflow for extracting individual semi-material finite-time coherent sets.}

{
\begin{algorithm}[htb]
    \caption{Workflow to obtain individual semi-material finite-time coherent sets}
    \label{alg:workflow}
    \begin{algorithmic}[1]
        \State \textsc{Input}: Trajectory data $\mathbb{D}_1$, temporal diffusion strength $a$, number $K$ of dominant eigenvalues to be computed.
        \State If Neumann boundary conditions:
        \begin{enumerate}[(a)]
            \item Compute eigenfunctions $F_k$, $k=1,\ldots,K$, using Algorithm~\ref{algo:Strang}.
            \item Identify spatial eigenfunctions $F^\spat_1,\ldots,F^\spat_J$ for $J\le K$;  see subsection~\ref{ssec:spatial}.
            \item Create an augmented collection of $2(J-1)$ functions $F_2^\spat,\ldots,F_J^\spat,\hat{F}_2,\ldots,\hat{F}_J$ as described in Section~\ref{ssec:augment}.
            \item Apply the SEBA algorithm \cite[Algorithm 3.1]{FRS19} to this collection, to give $2(J-1)$ SEBA vectors.  See also code in \cite[Appendix A.1]{FRS19}.
            \item Remove ``residual'' SEBA vectors (those that are constant on each time slice) to leave SEBA vectors supported on individual finite-time coherent sets.  See Section~\ref{sssec:Neumann_augment} and Figure~\ref{fig:cmcmvars}.
        \end{enumerate}
         \State If Dirichlet boundary conditions:
         \begin{enumerate}[(a)]
             \item Compute eigenfunctions $F_k$, $k=1,\ldots,K$, using Algorithm~\ref{algo:Dirichlet}.
            \item Apply the SEBA algorithm \cite[Algorithm 3.1]{FRS19} to this collection, to give $K$ SEBA vectors.
         \end{enumerate}
        \State \textsc{Output}: SEBA vectors supported on semi-material finite-time coherent sets.
    \end{algorithmic}
\end{algorithm}
}

\section{The diffusion-map based approach for the inflated Dynamic Laplacian}
\label{sec:dmapsDL}

In \cite[Section~6]{FrKo23} a discretisation scheme of the inflated dynamic Laplacian was devised, based on a finite element method introduced in~\cite{FJ18}.
The FEM approach has many advantages: preserving the symmetric structure of the dynamic Laplacian; ability to handle general domains; simple treatment of boundary conditions; automatic handling of data that is nonuniformly sampled, sparse, or missing;  no free parameters such as neighbourhood radius or cutoff values;  and estimates provided at all points in the domain.
Because the dominant eigenvalues of $\Delta_{G_{0,a}}$ are those that are smallest in absolute value, in the FEM approach they are computed with an inverse iteration method. 
Another advantage of the FEM approach is that the matrices involved are sparse, but one downside of directly approximating a Laplacian is that for very large (sparse) matrices, computing the smallest eigenvalues requires one to solve a large system of linear equations at each inverse iteration step.

In the current work we detail and explore a diffusion-map based method to  approximate $\mathcal{P}(\ep) = e^{\ep \Delta}$, instead of approximating $\Delta$ directly as with FEM.
The dominant spectrum of $e^{\ep \Delta}$ is now around the eigenvalue~1, and these spectral values can be obtained by forward iteration, which only requires multiplying vectors with a sparse matrix. 
We will thus use such an approach -- with some important modifications -- to approximate the dominant spectrum of the inflated dynamic Laplacian.

\subsection{Diffusion-map approximation of the  Laplacian}
\label{ssec:dmaps}

As a brief primer for the numerical approximation of the inflated dynamic Laplacian, we recall a standard kernel-based method, the diffusion maps approach, to estimating the standard Laplace operator $\Delta$ on a domain or a submanifold~$M\subset\mathbb{R}^d$.
One first formally exponentiates $\Delta$ to obtain the operator $\mathcal{P}(\ep) = e^{\ep\Delta}$, representing the heat-flow operator on $M$ for auxiliary heat-flow time~$\ep$.
The Laplace operator $\Delta$ is then approximated by $\mathcal{P}(\ep)$ as follows.

{Let} data points $x_1,\ldots,x_N\in M$ {sampled i.i.d.\ from a smooth density $q$ be given.}
One begins the standard diffusion maps construction \cite{CoLa06} by forming the $N\times N$ symmetric exponential kernel matrix $K(\cdot)$ with $(i,j)$ entries $K_{ij}(\ep)=\exp(-\|x_i-x_j\|^2/(4\ep)).$
The choice of $\ep$ is based on the density of the data points.
We select $\ep$ according to the heuristic of \cite{coifman_graph_2008,berry_variable_2016}, who suggest to choose $\ep$ in a linear region of a plot of $\smash{ \log\big(\sum_{i,j}K_{ij}(\ep)\big) }$ vs~$\log \ep$.
This range of $\ep$ is suitable because the quantity $\smash{ \sum_{i,j}K_{ij}(\ep) }$ scales as approximately $\smash{ \ep^{d/2} }$ and importantly, $\ep$ is not so large that $\smash{ \sum_{i,j}K_{ij}(\ep) }$ saturates at value $N^2$, nor so small that $\sum_{i,j}K_{ij}(\ep)$ takes its minimum value of~$N$.

Having fixed a suitable bandwidth parameter $\ep$, the symmetric matrix $K(\ep)$ undergoes two normalisations. 
The first normalisation aims to factor out bias caused by possible non-uniformity of the distribution~{$q$}~\cite[Proposition~10]{CoLa06}.
For each $i$, one defines the row sum $\smash{ d_i(\ep) = \sum_j K_{ij}(\ep) }$, which is a kernel density estimate of~$q(x_i)$ (up to a constant factor), if $x_i$ is order $\sqrt{\epsilon}$ distant from the boundary of the manifold~$M$.
By normalising $K(\epsilon)$ with these density estimates, we obtain a matrix $\smash{ \widetilde{K}_{ij}(\ep) := K_{ij}(\ep) / (d_i(\ep)d_j(\ep))}$, which satisfies\footnote{Let $k_\epsilon(x,y)$ be the kernel that generates $K_{ij}(\ep)=k_\epsilon(x_i,x_j)$. Let $d_\ep(x)=\int k_\ep(x,y)q(y)\ dy$ be a local estimate of $q(x)$ (up to a constant factor) and $\tilde{k}_\ep(x,y)=k_\ep(x,y)/d_\ep(x)d_\ep(y)$ be the normalised kernel that generates $\widetilde{K}_{ij}(\epsilon)$. The functional analogue of $\sum_j \widetilde{K}_{ij}(\ep)$ is $\int \tilde{k}_\ep(x,y)q(y)\ dy$.
By symmetry of $\tilde{k}_\ep$, without loss we discuss the case with integration over $y$.
Note that
$$\int \tilde{k}_\ep(x,y)q(y)\ dy =\int \underbrace{\left(\frac{k_\ep(x,y)q(y)}{\int k_\ep(x,y)q(y)\ dy}\right)}_{=:\rho_{\ep,x}(y)}\cdot\frac{1}{d_\ep(y)}\ dy.$$
The family of functions $\rho_{\ep,x}$ satisfy $\int \rho_{\ep,x}(y)\ dy=1$ for all $x$. Further, each $\rho_{\ep,x}$ is a 1-dimensional density with support concentrated around $x$. Thus the main integral on the RHS above is a convex combination of $1/d_\ep(y)$ for $y$ close to $x$.  Therefore the approximate value of $\int \tilde{k}_\ep(x,y)q(y)\ dy$ is $1/d_\ep(x)$.}
$\smash{\sum_i \widetilde{K}_{ij}(\ep) \approx 1/d_j(\ep)}$ and $\smash{\sum_j \widetilde{K}_{ij}(\ep) \approx 1/d_i(\ep)}$, except at rows and columns $i$ where $x_i$ is within order $\sqrt\epsilon$ of the boundary of~$M$.
The second normalisation \revision{row}-normalises 
$\smash{\widetilde{K}(\epsilon)}$ to make a column stochastic matrix $\smash{ P_{ij}(\ep) := \widetilde{K}_{ij}(\ep) / \sum_j \widetilde{K}_{ij}(\ep)}$.
One may view this second normalisation as imposing reflecting boundary conditions on the random walk generated by the column stochastic (Markov) matrix~$P(\ep)$. More on this second normalisation is discussed in~\cite{vaughn2019diffusion}.
Using the approximate row and column sums for $\widetilde{K}(\epsilon)$ mentioned above we note that
\[
P_{ij}(\ep)=\tilde{K}_{ij}(\ep) \Big/ \sum_j \tilde{K}_{ij}(\ep) =  K_{ij}(\ep) \Big/ \bigg(d_i(\ep)d_j(\ep)\sum_j \tilde{K}_{ij}(\ep)\bigg) \approx K_{ij}(\ep)/d_j(\ep),
\]
and therefore $P(\ep)$ is approximately the original kernel matrix $K(\ep)$ subjected to a \revision{column} normalisation, except for those rows and columns $i$ where $x_i$ is within order $\sqrt\epsilon$ of the boundary of~$M$.

To set up the statement below we formalise the i.i.d.\ point process.
We independently draw $N$ points in $M$ according to a common distribution~$q$.
We model the $N\to\infty$ limiting process with a sample space $\Omega$ that can be identified with the semi-infinite product $M^\mathbb{N}$ endowed with the product probability measure $\mathfrak{q}^\mathbb{N}$, where $\mathfrak{q}$ is the measure with (Lebesgue) density~$q$.
For a function $f\in C^3(M)$ define the operation $P(\ep)_N f := \smash{ ( \sum_j P_{ij}(\ep) f(x_j) )_{i=1}^N } \in \R^N$.
Each element $\omega=(x_1,x_2,\ldots)\in\Omega$ generates a particular sequence of approximation operators $\{P(\epsilon)_{N,\omega}\}_{N=1}^\infty$, formed from the first $N$ sample points in the sequence~$\omega$.
For $q\in C^3(M)$ it was shown~\cite[Theorem~3]{hein2005graphs} that for each $x$ in the interior of~$M$
\[
\lim_{\ep\to 0}\lim_{N\to\infty} \frac{1}{\ep} (P(\ep)_{N,\omega} f - f)(x) = \Delta f(x)\qquad\mbox{for $\mathfrak{q}^\mathbb{N}$-almost-every $\omega\in\Omega$.}
\]
Thus, $L(\ep) := \frac{1}{\ep}(P(\ep)-\mathrm{Id})$ is the data-based diffusion map approximation of~$\Delta$. 
We note that $\ep$ can be interpreted as an evolution time of a heat flow (a diffusion) on the set of data points.

We wish to select the bandwidth $\ep$ so that the entries in $K(\ep)$ are negligibly small unless $x_i$ are $x_j$ are nearby.
This will mean that  $P(\ep)$ is close to a sparse matrix. 
In practice, a cutoff parameter $r$ is applied to the distance computations $\|x_i - x_j\|$, such that point pairs beyond that distance constitute a zero entry in~$K(\ep)$. We take $r = 2\sqrt{5\ep}$, which by the definition of $K(\epsilon)$ implies that values of $K_{ij}(\epsilon)$ below $0.007$ are set to zero. For the data resolutions we use in our experiments, this choice  produces a sparse $K(\epsilon)$ and therefore a sparse $P(\epsilon)$, which is numerically advantageous when handling very large numbers of data points.

Recall that the construction described so far corresponds to the Laplace operator with homogeneous Neumann boundary conditions, as in
e.g.\ \cite{CoLa06}. The case of homogeneous Dirichlet boundary conditions will be discussed in Section~\ref{ssec:dmapsDirichlet} below, after first discussing the extension to the spacetime domain~$\MM_0$.

\subsection{Application in spacetime}
\label{ssec:iDLdmaps}

We assume that we have $N$ distinct initial conditions $\{x_1,\ldots,x_N\} \subset M$, which generate trajectories of the flow that are sampled at $T$ time instances $0\le t_1 < t_2 < \ldots < t_T \le \tau$, leading to $NT$ spacetime data points
\[
{\mathbb{D}_1} := \left\{ \left(t_i,\phi_{t_i} x_j\right) \mid i=1,\ldots,T,\ j=1,\ldots,N \right\} \subset \MM_1.
\]
Since $\mathbb{D}_1$ acts as a discrete approximation of $\MM_1$, for the sake of exposition we define $\mathbb{D}_0 := \left\{ \left(t_i, x_j\right) \mid i=1,\ldots,T,\ j=1,\ldots,N \right\} \subset \MM_0$ as a discrete approximation of~$\MM_0$.

There are four challenges that we need to address:
\begin{enumerate}
    \item 
It is not yet clear whether we can use diffusion maps to approximate the inflated dynamic Laplacian on ${\mathbb{D}_0}$, because this operator contains the spatial Laplace--Beltrami operators $\Delta_{g_t}$, which are defined with the pullback metric~$g_t$. 
This issue will be solved in a way similar to the FEM approach in~\cite{FrKo23}.
\item By imposing Neumann boundary conditions in space, one has purely temporal modes that are analytically known~\cite{FrKo23}.
Our numerical scheme should ideally preserve the natural decomposition of $L^2(\mathbb{M}_0)$  into temporal and spatial modes, as was the case for the FEM approach \cite{FrKo23}.
\item It is desirable for a numerical scheme to allow the temporal diffusion coefficient $a$ to be varied without having to recompute the entire discrete operator, as in the FEM approach of~\cite{FrKo23}. 
\item If we could apply diffusion maps directly to approximate the inflated dynamic Laplace operator, there are no canonical joint coordinates for space and time, and thus spacetime data points in {$\mathbb{D}_0$} may have very different extents and sampling resolutions.
In contrast to the FEM approach, with diffusion maps we need a bandwidth parameter $\ep$ and it is unlikely that a single bandwidth parameter $\ep$ will be sufficient to accurately capture the different sampling geometries of space and time. One could attempt to impose an anisotropic diffusion on the spacetime data, but this introduces yet another parameter, which we would like to avoid.
\end{enumerate}

To begin to address all of these challenges (and particularly challenges 2 and 4) we note that we can additively decompose the inflated dynamic Laplacian as $\smash{ \Delta_{G_{0,a}} = a^2 \Delta^{(t)} + \Delta^{(x)} }$, where
\[
\Delta^{(t)} = \partial_{tt} \qquad \text{and} \qquad \left( \Delta^{(x)}F \right)(t,\cdot) = \Delta_{g_t}F(t,\cdot).
\]
We will approximate the operators $\cP^{(t)}(\ep) := \exp\big(\epsilon a^2 \Delta^{(t)} \big)$ and $\smash{\cP^{(x)}(\ep) := \exp\big(\epsilon \Delta^{(x)} \big)}$ separately, and then use the Strang splitting~\cite{Str68} approximation, formally
\[
e^{\epsilon(L_1+L_2)} = e^{\epsilon L_1/2}e^{\epsilon L_2}e^{\epsilon L_1/2} + \mathcal{O}(\epsilon^3),
\]
to approximate~$\smash{ \cP_a(\ep) := \exp\big( \ep ( a^2 \Delta^{(t)} + \Delta^{(x)} ) \big) }$.
The resulting matrix, $P_a(\ep)$, to be defined in~\eqref{eq:Strang_Peps} below, will be a $NT \times NT$ matrix approximating the operator $\cP(\ep)$ pointwise in the data points upon multiplication by the vector of function evaluations in the data points from the right:
\[
\left(\cP_a(\ep)F\right)\!\vert_{\mathbb{D}_0} \approx P_a(\ep) (F\vert_{\mathbb{D}_0}).
\]
We write
\begin{equation}
    \label{gfvec}
F\vert_{\mathbb{D}_0} =: f=:\begin{pmatrix}
   f_1\\
   \vdots\\
   f_T
\end{pmatrix} \in \R^{NT}, \mbox{ where }f_i:=\begin{pmatrix}
    F(t_i,x_1)\\
    \vdots \\
    F(t_i,x_N)
\end{pmatrix}\in\mathbb{R}^N.
\end{equation}
Regarding challenge 3 above, we note that for large values of $a$, a diffusion maps approach to estimate $\exp(\epsilon a^2\Delta^{(t)})\approx\cP^{(t)}_{\epsilon}$ will be poor because $\epsilon a^2$ may be too large relative to the dataset.
As the time interval is one-dimensional, and the number of time instances $T$ is usually not exceeding $\sim 100$, we can compute the exponential of a time-Laplacian directly. 
Thus, we compute a finite-difference approximation $L^{(t)} \in \R^{T\times T}$ of the standard Laplace operator on the interval $[0,\tau]$ (with Neumann boundary conditions), and then compute the exponential $\smash{ e^{\epsilon a^2 L^{(t)}/2} }$ directly. 
To be more precise, if the temporal grid has stepsizes $h_i:=t_{i+1}-t_i$, then using the standard second-order central-difference approximation we have for~$g\in\R^T$,
\begin{equation}
    \label{eq:FEM_time}
    \renewcommand{\arraystretch}{1.5}
    \left( L^{(t)} g \right)_i = \left\{
    \begin{array}{ll}
    \frac{1}{h_1^2}( -g_1 + g_2 ), & i=1,\\
    \frac{2}{h_{i-1} + h_i} \left( \frac{g_{i+1} - g_i}{h_i}  - \frac{g_{i} - g_{i-1}}{h_{i-1}} \right), & i=2,\ldots,T-1,\\
    \frac{1}{h_{T-1}^2}( -g_T + g_{T-1} ), & i=T.
    \end{array}
    \right.
\end{equation}
By the enumeration of data points we chose above, the corresponding matrix in spacetime is the Kronecker product~$\smash{P^{(t)}(\ep) := e^{\epsilon a^2 L^{(t)}} \otimes \mathrm{Id}_N }$. 

The spatial diffusion map matrix $P^{(x)}(\ep)$ is a block-diagonal matrix with $T$ blocks of size~$N\times N$.
When constructing the $i^{\rm th}$ block $\smash{P_{t_i}^{(x)}(\epsilon)}$, to avoid the explicit computation of the distances between $x_j$ and $x_{j'}$ with respect to the pullback metric $g_{t_i}$, as mentioned in challenge 1, we instead compute Euclidean distances between $\phi_{t_i}x_j$ and $\phi_{t_i}x_{j'}$; a technique previously used in \cite{FJ15,BERRY2016439,BaKo17,FJ18,FrKo23} for various numerical schemes.
In summary, for the $i^{\mathrm{th}}$ timeslice $\{ \phi_{t_i}x_j \mid j=1,\ldots,N\}$ of the data one computes the associated diffusion map matrix $\smash{ P^{(x)}_{t_i}(\epsilon) }$ as described in Section~\ref{ssec:dmaps}, and that is known from~\cite{BaKo17} to approximate~$\exp\left(\ep \Delta_{g_{t_i}} \right)$ if applied on point evaluations of a function in the points~$\{x_1,\ldots,x_N\}$ at initial time:
\[
\left[\Big( \exp\big(\ep \Delta_{g_{t_i}} \big) F(t_i,\cdot) \Big) (x_j) \right]_{j=1,\ldots,N} \approx P^{(x)}_{t_i}(\epsilon) f_i.
\]

Employing the Strang splitting, the discrete approximation of $\cP_a(\ep)$ we work with is given by
\begin{equation}
\label{eq:Strang_Peps}
P_a(\ep) =
\underbrace{\left( e^{\epsilon a^2 L^{(t)}/2} \otimes \mathrm{Id}_N \right)}_{ = P^{(t)}(\epsilon/2)}
\underbrace{
\begin{pmatrix}
P^{(x)}_{t_1}(\epsilon) & & \\
& \ddots & \\
& & P^{(x)}_{t_T}(\epsilon)
\end{pmatrix}
}_{ = P^{(x)}(\epsilon)}
\underbrace{\left( e^{\epsilon a^2 L^{(t)}/2} \otimes \mathrm{Id}_N \right)}_{ = P^{(t)}(\epsilon/2)}.
\end{equation}

The large $NT \times NT$ matrix $ e^{\epsilon a L^{(t)}/2} \otimes \mathrm{Id}_N $ has up to $NT^2$ non-zero entries, but with a lot of redundancy: effectively a $T\times T$ matrix is copied $N$ times. One can thus save memory by not assembling this matrix explicitly, only evaluating matrix-vector products involving it on the fly. 
Let $\mathbf{F}_{ij}=F(t_i,x_j)\in\mathbb{R}^{T\times N}$. 
Referring to (\ref{gfvec}) for the definition of $f$, then if 
\begin{equation}
    \label{eq:memorytrick}
    \begin{pmatrix}
     \tilde{\mathbf{F}}_1^\top \\
       \vdots \\
       \tilde{\mathbf{F}}_T^\top
   \end{pmatrix}
   := 
   \Big(e^{\epsilon a^2 L^{(t)}/2} \Big)\mathbf{F}\quad \text{ we have } \quad\left( e^{\epsilon a^2 L^{(t)}/2} \otimes \mathrm{Id}_N \right) f = \begin{pmatrix}
    \tilde{\mathbf{F}}_1\\ \vdots \\ \tilde{\mathbf{F}}_T
    \end{pmatrix}.
\end{equation}
This way, at most $T^2$ entries of $ e^{\epsilon a L^{(t)}/2} $ need to be stored in memory, and the computation in \eqref{eq:memorytrick} requires $\cO(NT^2)$ operations.
{For comparison, }note that by judicious choice of the bandwidth $\ep$ the blocks in $P^{(x)}(\ep)$ should all be sparse, hence $P^{(x)}(\ep)$ contains~$\mathcal{O}(NT)$ nonzero entries.
We summarise our construction in Algorithm~\ref{algo:Strang}.
\begin{algorithm}[htb]
    \caption{Splitting method to approximate the dominant spectrum of the inflated dynamic Laplacian}
    \label{algo:Strang}
    \begin{algorithmic}[1]
        \State \textsc{Input}: Trajectory data $\mathbb{D}_1$, temporal diffusion strength $a$, number $K$ of dominant eigenvalues to be computed.
        \State Choose a bandwidth parameter $\ep$ or infer it as in Section~\ref{ssec:dmaps}.
        \For{$i=1,\ldots,T$}
        \State Compute $P^{(x)}_{t_i}(\ep)$ from the data $\{ \phi_{t_i}x_j \mid j=1,\ldots,N\}$, as described in Section~\ref{ssec:dmaps}
        \EndFor
        \State Assemble $P^{(x)}(\ep)$ from the diagonal blocks $P^{(x)}_{t_i}(\ep)$.
        \State Compute $L^{(t)}$ as in \eqref{eq:FEM_time}.
        \State Do one of the following:
        \begin{enumerate}[(a)]
            \item Compute and store $\smash{ P^{(t)}(\ep/2) = e^{\epsilon a^2 L^{(t)}/2} \otimes \mathrm{Id}_N }$.
            \item Store $\smash{ e^{\ep a^2 L^{(t)}/2} }$ and use \eqref{eq:memorytrick} to compute matrix-vector products involving~$P^{(t)}(\ep/2)$ in Step~9.
        \end{enumerate}
        \State Perform power iteration for $P^{(t)}(\ep/2) P^{(x)} P^{(t)}(\ep/2)$ to compute its dominant $K$ eigenpairs~$(\nu_k,F_k)$.
        \State \textsc{Output}: $\left(\frac{1}{\ep}\log \nu_k, F_k \right)$, $k=1,\ldots,K$, as approximate eigenpairs of $\Delta_{G_{0,a}}$.
    \end{algorithmic}
\end{algorithm}

\subsection{Dirichlet boundary conditions}
\label{ssec:dmapsDirichlet}

For the reasons detailed in Section~\ref{ssec:dirichlet} we also consider the inflated dynamic Laplacian with homogeneous Dirichlet boundary conditions in space. This requires a modification of the discrete operator~$P^{(x)}(\ep)$ in~\eqref{eq:Strang_Peps}, because the construction detailed in sections~\ref{ssec:dmaps} and~\ref{ssec:iDLdmaps} yields an approximation of an operator with homogeneous Neumann boundary conditions in space.
Our construction here is reminiscent to the one proposed in~\cite{thiede2019galerkin}.

The main task is to assign boundary points in $\mathbb{D}_0$.
To do this, we first assign boundary points on the $i^{\rm th}$ co-evolved time slice $\{ \phi_{t_i}x_j \mid j=1,\ldots,N\}$ for each $i=1,\ldots,T$.
\revision{In our implementation, w}e use the Matlab \texttt{boundary} command that in turn relies on the \texttt{alphaShape} routine with default parameters.
If we identify $\phi_{t_i}x_j$ as a boundary point in the $i^{\rm th}$ co-evolved time slice, we define the spacetime point $(t_i,x_j)$ as a spatial boundary point for $\mathbb{D}_0$. We denote this boundary choice by~B1; it is motivated by $\phi_t$ being a bijection between $\partial M$ and~$\partial (\phi_t M)$.

The set of boundary points collected according to B1 can be further enlarged by defining the collection $\{(t_l,x_j): l=1,\ldots,T\}$ as  spatial boundary points  if $(t_l,x_j)$ is identified as a boundary using the \texttt{boundary} command for \emph{at least one} time~$t_l$. We denote this boundary choice by~B2. If the spatial approximation of $M$ by $\{x_1,\ldots,x_N\}$ is sufficient to accurately capture the distortion caused by $(\phi_t)_{t\in [0,\tau]}$ then the two procedures B1 and B2 should yield very similar answers. However, if this is not the case (as for the Polar Vortex example in Section~\ref{ssec:examplePV}, which is relatively sparsely sampled relative to the nonlinear dynamics), the second choice B2 performs better, as it also captures  points that move close to the boundary at \textit{some} time~$t_i$. We will thus exclusively use choice~B2 in the sequel.

We construct a vector $(b_{ij})\in \{0,1\}^{NT}$ that encodes the boundary choice B2.
Recalling that we index over all spatial points in timeslice 1 first, then all spatial points in timeslice 2, and so on, we set the entry of $b$ associated to the point $(t_i,x_j)$ to be 0 if $x_j$ is a boundary point for \textit{any} time slice, and to be 1 otherwise. 
We set~$B:= \mathrm{diag}(b) \in \R^{NT\times NT}$.
Finally, we replace in {line 9 of} Algorithm~\ref{algo:Strang} the matrix $P^{(t)}(\ep/2) P^{(x)} P^{(t)}(\ep/2)$ by $B P^{(t)}(\ep/2) B P^{(x)} B P^{(t)}(\ep/2) B$.
The procedure to set up 
this new matrix is summarised in Algorithm~\ref{algo:Dirichlet}.
\begin{algorithm}[htb]
    \caption{Dirichlet boundary conditions in space (choice B2)}
    \label{algo:Dirichlet}
    \begin{algorithmic}[1]
        \For{$i=1,\ldots,T$}
        \State $J_i$ = boundary indices of $\{ \phi_{t_i}x_j \mid j=1,\ldots,N\}$
        \Comment computed with \texttt{boundary}
        \EndFor
        \State Construct {the} diagonal matrix $B\in \{0,1\}^{NT \times NT}$ by
        \[
        B_{k,k} = \left\{
        \begin{array}{ll}
            0, & k = (i-1)N + j \text{ for some } i=1,\ldots,T \text{ and } x_j\in J_{l} \text{ for some } l=1,\ldots,T,\\
            1, & \text{otherwise.}
        \end{array}
        \right.
        \]
        \State Perform power iteration with $B P^{(t)}(\ep/2) B P^{(x)} B P^{(t)}(\ep/2) B$ instead of $P^{(t)}(\ep/2) P^{(x)} P^{(t)}(\ep/2)$ in {line 9 of} Algorithm~\ref{algo:Strang}
    \end{algorithmic}
\end{algorithm}

Addressing the convergence properties of the discretisation~\eqref{eq:Strang_Peps} is beyond the scope of the present study. 
We note that (spectral) convergence studies of diffusion maps on manifolds without boundaries or for Neumann boundary conditions are much better-developed than studies involving Dirichlet boundary conditions; see the introduction in \cite{peoples2021spectral} for an overview. Nevertheless, pointwise (i.e., applied to a fixed function) almost sure $L^2$-convergence of diffusion maps for domains with boundaries has been considered in~\cite{vaughn2019diffusion}. Note that the construction from Section~\ref{ssec:dmaps}, also with the modifications introduced in this section, yields a nonsymmetric $P^{(x)}(\ep)$ in general, thus $P_a(\ep)$ in \eqref{eq:Strang_Peps} is also nonsymmetric despite symmetry of~$P^{(t)}(\ep/2)$.

\section{Numerical examples}
\label{sec:numericalexamples}

\subsection{Semi-material FTCS and modest regime change:  Case study -- a switching double gyre}
\label{sec:switching}

We first illustrate how the temporal diffusion parameter $a$ allows coherent sets to evolve more smoothly by relaxing the materiality constraint.
We use a switching double-gyre on a domain $M=[0,3]\times [0,2]$, where for the first half of the time interval $[0,\tau] = [0,1]$ there are two invariant sets ($[0,1]\times [0,2]$ and $[1,3]\times [0,2]$), which then rapidly switch in the second half of the time interval to invariant sets ($[0,1]\times [0,2]$ and $[2,3]\times [0,2]$).

More explicitly, {inspired by the flow in~\cite[Sec.~6]{shadden-etal},} we consider the stream function $\psi(t,x) = -20\sin(\frac{\pi}{2}x_2) \sin(\pi f(t,x_1))$, 
$x\in\mathbb{R}^2$, giving rise to the ODE $\dot x (t) = (\partial_{x_2}\psi(t,x), \linebreak[3] -\partial_{x_1}\psi(t,x))$. In the notation of \cite{shadden-etal, FP09} we set the speed parameter for the flow to be \revision{$A=20$, which is 200 times that in \cite{shadden-etal} and 80 times that in~\cite{FP09}.} 
At all times, the oscillation frequency parameter $\omega=0$, but there is a one-way drift in the separatrix over the time interval $[0,1]$, which begins slowly, becomes rapid in the interval $[0.4,0.6]$ and then having made the switch in gyre size, settles down again.

We wish to find a simple function $f$ yielding the behaviour described above. The graphs of the functions $f_0(x_1)=-(1/6)x_1^2+(7/6)x_1$ and $f_1(x_1)=(1/6)x_1^2+(1/6)x_1$ are monotonic in $x_1$ and pass through $\{(0,0),(1,1), (3,2)\}$ and $\{(0,0),(2,1), (3,2)\}$, respectively, yielding the pairs of almost-invariant sets mentioned above; shown in Figure~\ref{fig:DGswitch_contour}.
\begin{figure}[htb]
    \centering
    \includegraphics[width=0.49\textwidth]{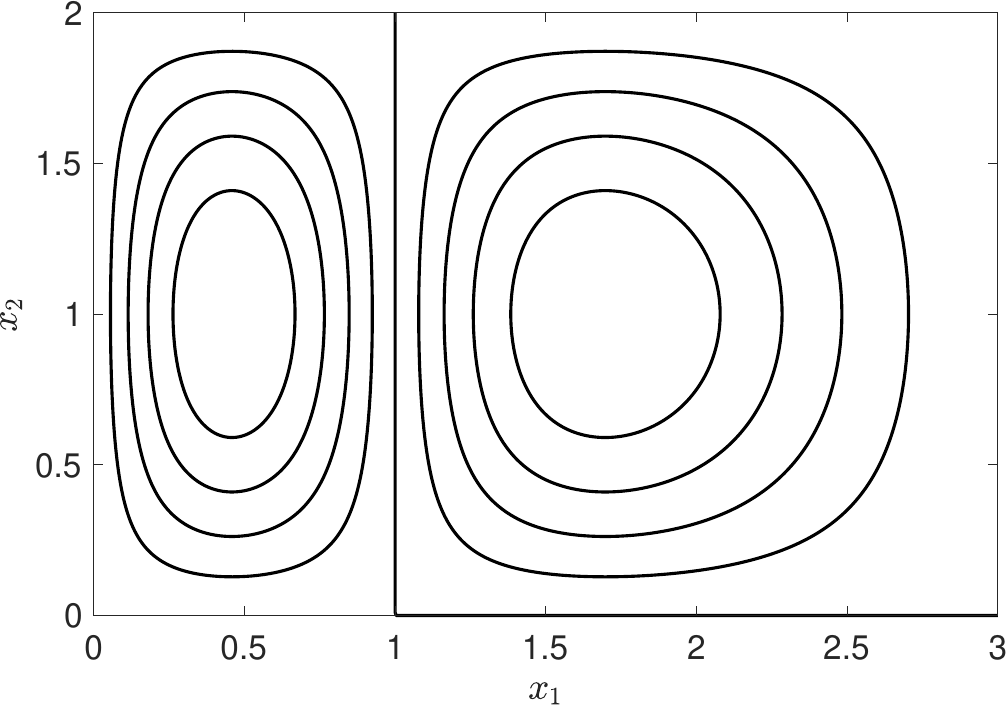}\hfill
    \includegraphics[width=0.49\textwidth]{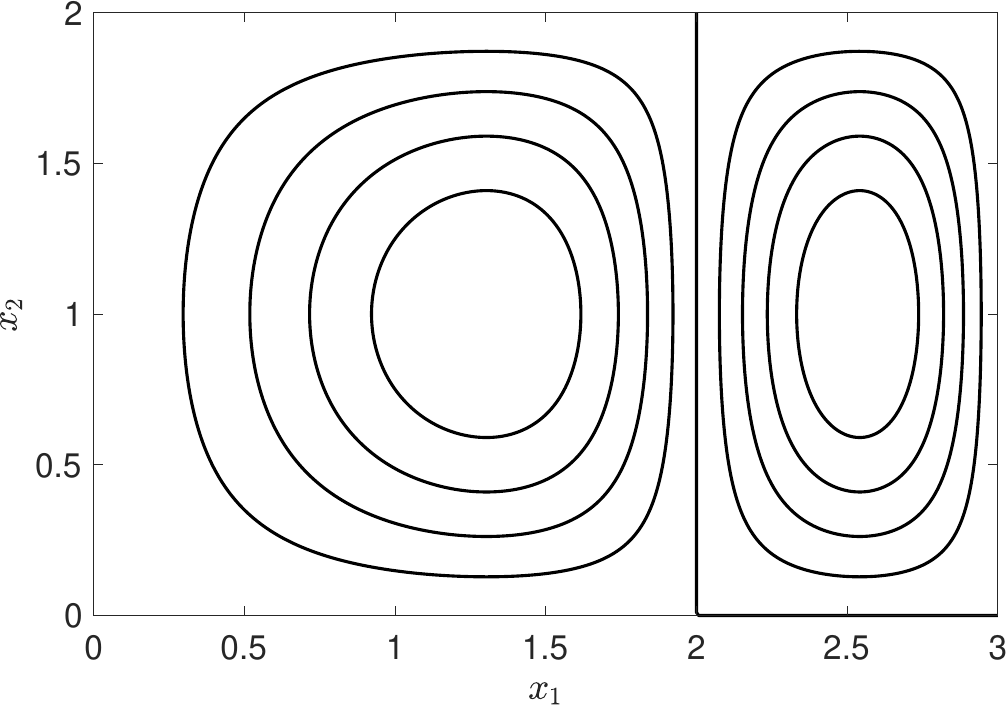}
    \caption{Contour plots of the stream function at starting time $t=0$ (left) and end time $t=1$ (right).}
    \label{fig:DGswitch_contour}
\end{figure}
To build in the switch near $t=1/2$ we interpolate between $f_0$ and $f_1$ by using the smoothed step function~$r(t) = \frac12 (1 + \tanh (40(t-\frac12)) )$ as a  time-dependent interpolation parameter.
One possible interpolation is achieved by setting
\[
\alpha(t) = \frac{(1-2r(t))}{3(r(t)-2)(r(t)+1)}, \quad \beta(t) = \frac{2 - 9\alpha(t)}{3}, \quad f(t,x_1) = \alpha(t) x_1(t)^2 + \beta(t) x_1(t),
\]
leading to an evolution of stream functions that is symmetric upon reflection in the $(t,x_1)$ coordinate plane with respect to~$(t,x_1) = (1/2, 3/2)$.
The evolution of $x(t)$ is then governed by the ODE
\begin{equation}
    \label{eq:switchDG}
    \begin{pmatrix}
        \dot{x}_1(t) \\ \dot{x}_2(t)
    \end{pmatrix}
    =
    20 \begin{pmatrix}
        -\frac{\pi}{2}  \sin(\pi f(t,x_1))\cos (\pi x_2(t)/2)\\
         \pi(2 \alpha(t) x_1(t) + \beta(t))\cos(\pi f(t,x_1))\sin(\pi x_2(t)/2)
    \end{pmatrix}.
\end{equation}

We set the parameter $a$ to be $3a_{\rm min}$ using the heuristic described in~\cite[Section 4.1]{FrKo23} {and briefly repeated in subsection~\ref{subsec:Dirichlet_choose_a}}, which approximately matches the leading nontrivial temporal and spatial eigenvalues of $\Delta_{G_{0,a}}$;  numerically this value is~$a=1$.

We construct a numerical approximation of $\Delta_{G_{0,a}}$ using Algorithm~\ref{algo:Strang}.
In this example, we use $N=1350$ trajectories initialised on a uniform grid of $45\times 30$ particles on $M$, recording the particle positions each $1/100^{\rm th}$ unit of time{, yielding~$T=101$}.
The diffusion maps procedure requires a bandwidth $\epsilon>0$ that takes into account the density of the trajectories.
We choose $\epsilon=0.0115$ according to the heuristic of \cite{coifman_graph_2008,berry_variable_2016}; see Section~\ref{ssec:dmaps} for further details. 
{The spectrum of~$\Delta_{G_{0,a}}$ displayed in Figure~\ref{fig:dg-evals}(right) reveals a large gap after the second spatial eigenvalue, and following the heuristic in Section~\ref{ssec:gap}, we choose $J=2$.
This corresponds to searching for two semi-material coherent sets. Strictly speaking, according to Algorithm \ref{alg:workflow} one should augment the single eigenvector $F^\spat_2$ to obtain a companion vector $\hat{F}_2$ and apply SEBA. This would produce two vectors that separate the blue and red objects shown in Figure \ref{fig:average-vs-inflated-dg-efuns}(right). To keep things simple, in this example we work directly with the single (signed) eigenfunction~$F^\spat_2$.}

The two finite-time coherent sets for this flow are shown in blue and red in
Figure \ref{fig:DLexample} (right).
These sets have a somewhat rough boundary, but are fully material.
On the other hand, the two semi-material finite-time coherent sets shown in Figure \ref{fig:average-vs-inflated-dg-efuns} (right) are not precisely material, but are smoother in space and follow the position of the FTCS over timescales shorter than the full time interval $[0,1]$.
The analogous statements to those in Theorem~\ref{thm:eigbounds} but for the Neumann boundary conditions applied in this example, to be found in~\cite[Theorem 7, Part 1]{FrKo23}, state that the spatial eigenvalues of the inflated dynamic Laplacian dominate the associated eigenvalues of the dynamic Laplace operator. This is clearly the case, as shown in Figure~\ref{fig:dg-evals}.
\begin{figure}[htb]
   \centering
   \includegraphics[width=0.49\textwidth]{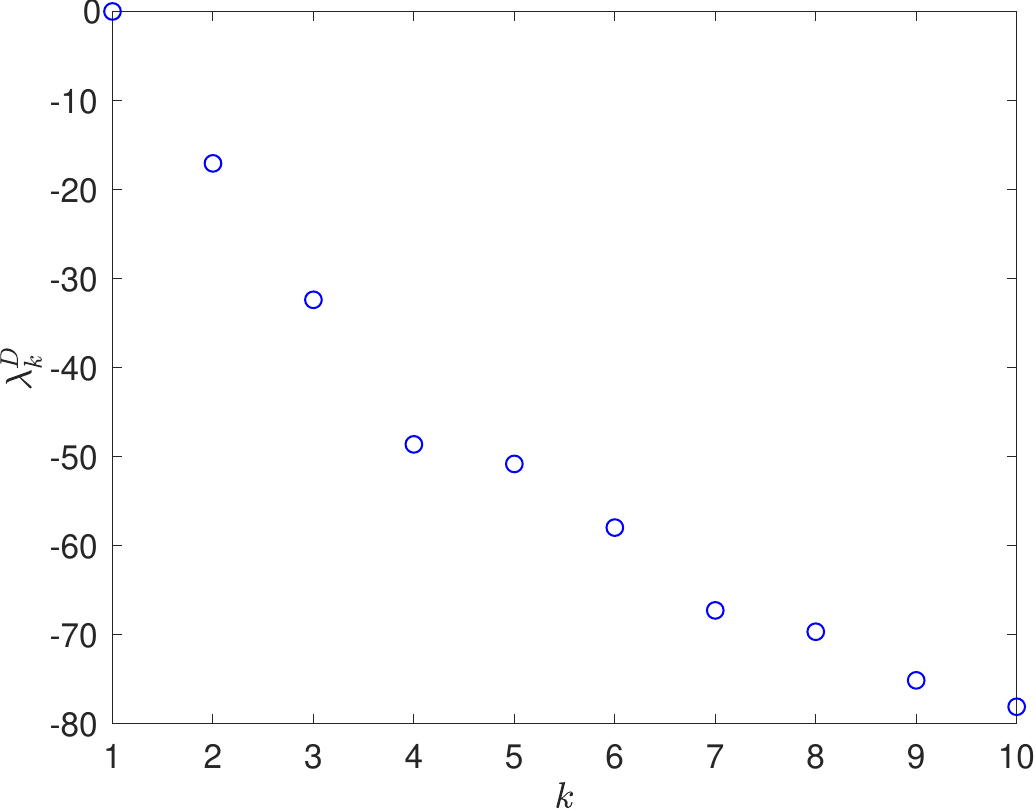}
   \hfill
   \includegraphics[width=0.48\textwidth]{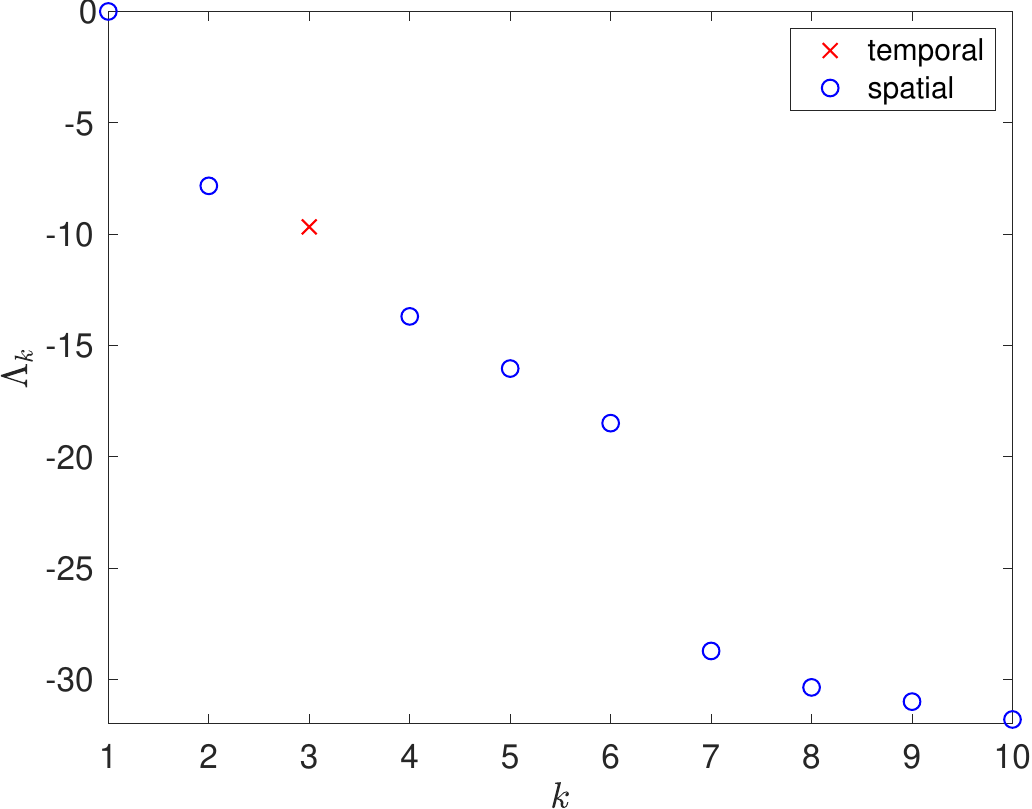}
   \caption{Leading eigenvalues for the standard dynamic Laplacian (left) and the time-inflated dynamic Laplacian (right) for the switching double gyre. Note that $\lambda_k^D\le \Lambda_{k,a}^{\rm spat}$ for $k=1,\ldots,10$, which is consistent with Part 1 of \cite[Theorem 7]{FrKo23}. }
   \label{fig:dg-evals}
\end{figure}

Dominant eigenfunctions of the inflated dynamic Laplacian prefer to have little temporal variation along trajectories, unless variation in the pullback metrics $g_t$ forces them to behave otherwise. Thus, trajectories showcasing a large variation of the spacetime eigenfunction along them are expected to play a central role in the change of dynamical regimes.
{In this example, such trajectories highlight the transfer of fluid  between the two gyres as they change size}.  This is corroborated by Figure~\ref{fig:DGswitch_tempvar}, where we show 
the trajectory of an initial condition $x\in M$ if $t \mapsto F^{\spat}_2(t,x)$ has variance among the top 5\% of variances for all~$x\in M$.
\begin{figure}[htb]
    \centering
    \includegraphics[width = 0.49\textwidth]{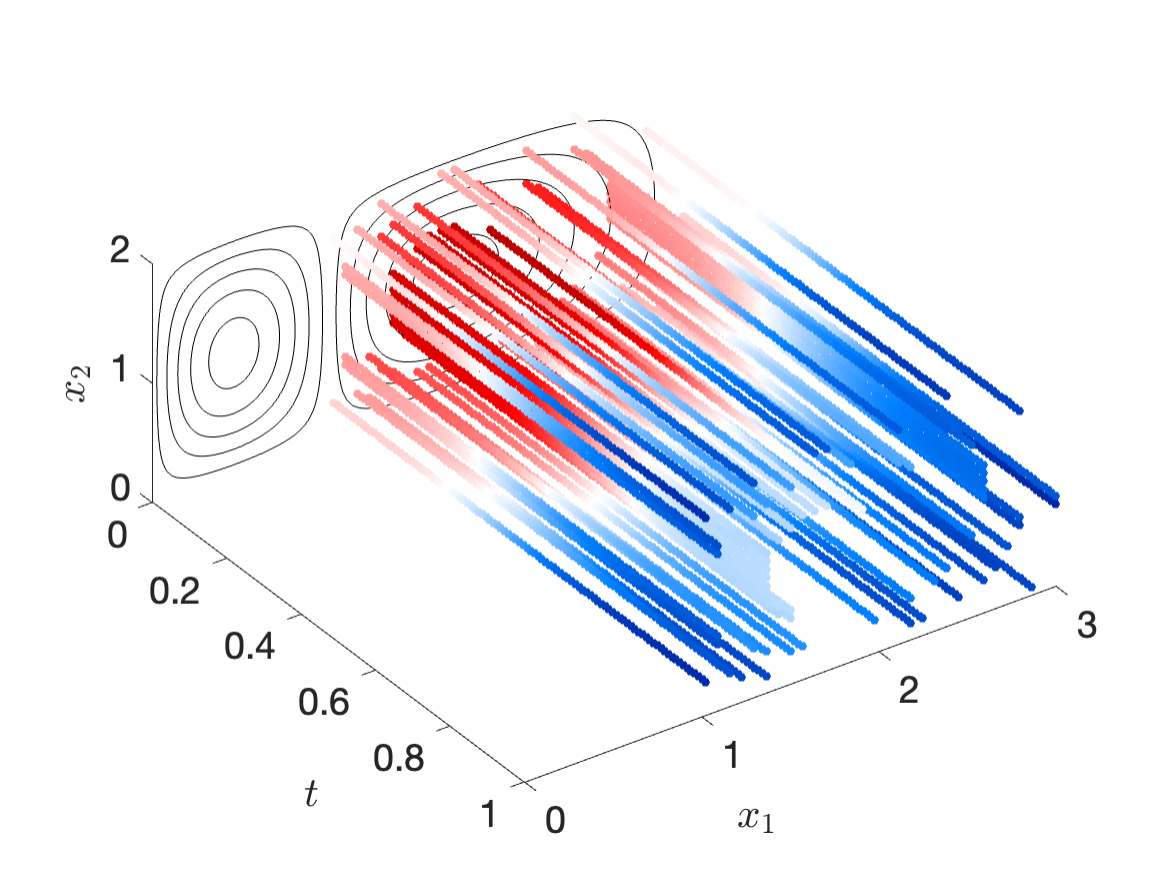}
    \hfill\includegraphics[width = 0.49\textwidth]{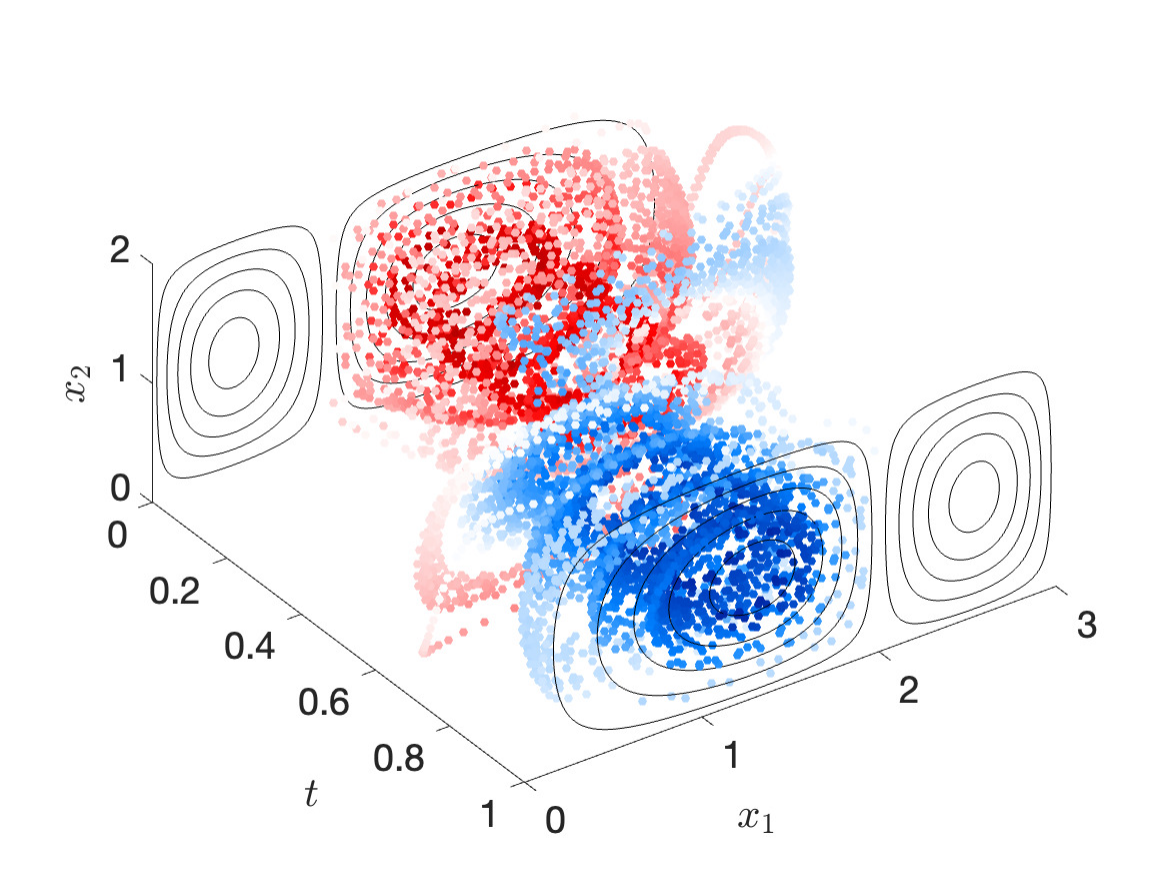}
    \caption{
Trajectories of the switching double gyre system highlighting the top 5\% of largest variance of second spatial eigenfunction of the inflated dynamic Laplacian restricted to them. 
Material trajectories should have constant values of $F$ and we are interested in the ``least material'' trajectories;  thus we plot trajectories that have the highest variance.
Left: $\MM_0$. Right: $\MM_1$. The color indicates the value of~$F^{\spat}_2$ at the associated spacetime point~$(t,x)$. 
Black contours visualize the stream function on the initial and final time slices. 
}
\label{fig:DGswitch_tempvar}
\end{figure}
We note that all these trajectories undergo one sign (colour) change that occurs mostly at times between $0.45$ and $0.55$, roughly when a significant proportion of particles transfer from one gyre to the other.
There is an asymmetry in Figure \ref{fig:DGswitch_tempvar}, whereby 
all the trajectories shown start in the initially larger gyre $[1,3] \times [0,1]$ and end up in the gyre $[0,2] \times [0,1]$ that is larger at final time.
This is because of area-conservation of the flow: the large gyre at the final time has to be ``filled'' from somewhere.  
{In terms of the Rayleigh quotient for~$\Delta_{G_{0,a}}$} it is inexpensive for the initially smaller gyre to feed into the final larger gyre, but the latter needs more mass and that has to partly come from the initially larger gyre. The converse problem (how to fill in the final small gyre) does not exist because some portion of the initial large gyre can simply feed into the final small gyre.

\subsection{Semi-material FTCS and dramatic regime change:  Case study -- destruction and reformation of the polar vortex}
\label{ssec:examplePV}

In this section we study the dynamics of the Antarctic polar vortex over a period of breakup and partial reformation during September to October 2002.
Early dynamical systems approaches to investigate transport and mixing (mostly of ozone) in the polar vortex focused on identifying Lagrangian Coherent Structures as transport barriers, which delineate the vortex~\cite{bowman1993large,joseph2002relation,rypina2007lagrangian,lekien2010computation,beron2010invariant,olascoaga2012brief}.
The Antarctic polar vortex was first diagnosed as a finite-time coherent set in \cite{FSM10} using a historical 14-day reanalysis flow from 1--14 September 2008 on a 475K isentropic surface.
The approximate vortex breakup period we study here has been considered in \cite{lekien2010computation} using finite-time Lyapunov exponents, and by \cite{padberg-gehle_network-based_2017, blachut2020tale,ndouretal21} with coherent-set based analyses based on singular vectors of transfer operators \cite{FSM10}. 

As in \cite{FSM10}, we use velocity data from the ECMWF-Interim data set 
\url{https://ecds.ecmwf.int/datasets}
with $0.5$ degree spatial resolution,
and a temporal resolution of 6 hours.
Because we are interested in vortex break up and reformation we use a longer 60-day flow from 1 September to 30 October September 2002, on a 600K isentropic surface, similar to~\cite{padberg-gehle_network-based_2017}.
Our initial domain $M$ is a disk contained in this surface, which is centred at the south pole and has a radial extent of 6843.75km. In this disk we initialise $N=4367$ trajectories on radial rings equally spaced 187.5km apart. Within each ring, individual initial points are also equally spaced, approximately 187.5km apart.
Following \cite{FSM10} we linearly interpolate the vector field data in space and time, and use a fourth-order Runge--Kutta scheme with a constant step size of 45 minutes to produce the particle positions every 12 hours. By discarding the initial time instance, this results in~$T=120$.
At each such time $t$ we use Matlab's \verb"boundary" command (with default settings) on the $\phi_t(x_i)$, $i=1,\ldots,4367$ to estimate the boundary of~$\phi_t(M)$.
We then apply Dirichlet boundary conditions on the union of these boundary points across time, as described in Section~\ref{ssec:dmapsDirichlet}.

The choice of the diffusion parameter $a$ will be based on the heuristic given in Section~\ref{subsec:Dirichlet_choose_a}. Our initial domain is a disk and the evolution of initial trajectories remains approximately disk-like, so we note that the leading eigenvalue of the disk of radius $r$ for Dirichlet-Laplacian is given by (e.g.\ \cite{berger}) $-j_{0,1}^2/r^2$, where $j_{0,1}\approx 2.4048$ is the first root of the Bessel function~$J_0$.
The largest nonzero eigenvalue for $a^2\Delta$ on an interval $[0,\tau]$ with Neumann boundary conditions is $\lambda_1^{\rm temp} = -a^2\pi^2/\tau^2$.
Equating $-j_{0,1}^2/r^2$ with  $\lambda_1^{\rm temp}$ gives $a_{\min}=2.4048\tau/(r\pi)\approx 0.0268${, which we use in our experiments.}
The value of $\epsilon$ is set to $\epsilon=6835$, based on the heuristic of~\cite{coifman_graph_2008,berry_variable_2016}.

Figure \ref{fig:pv-evals} displays the leading spectra of the  dynamic Laplace operator $\Delta^D$ and the inflated dynamic Laplace operator~$\Delta_{G_{0,a}}$.
\begin{figure}[hbt]
\centering
\includegraphics[width=0.49\textwidth]{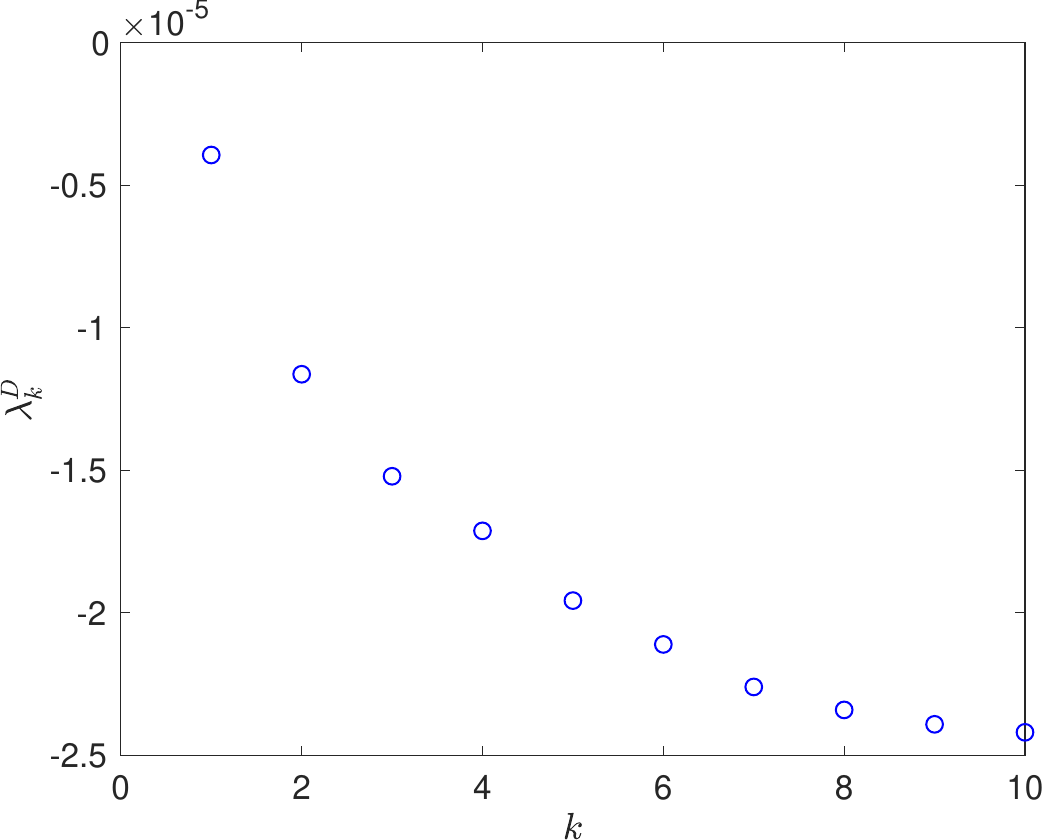}
\includegraphics[width=0.49\textwidth]{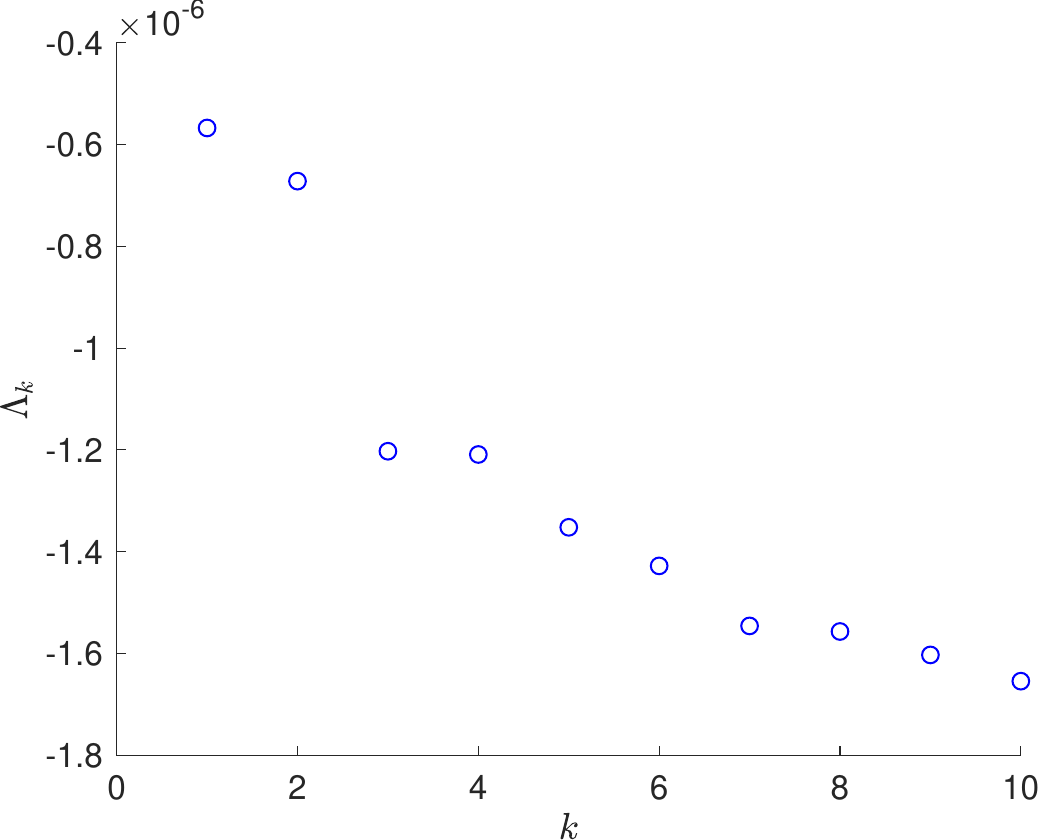}
\caption{Leading eigenvalues for the dynamic Laplacian $\Delta^D$ (left) and the inflated dynamic Laplacian $\Delta_{G_{0,a}}$ (right) for the polar vortex experiment. Note that there are no temporal eigenvalues because of the Dirichlet boundary conditions.
We also see that $\lambda_k^D \le \Lambda_{k,a}$ for $k=1,\ldots,10$, which is consistent with Part~1 of Theorem~\ref{thm:eigbounds}.}
\label{fig:pv-evals}
\end{figure}
Figure~\ref{fig:pv-evals} (right) shows a gap in the spectrum after the leading two eigenvalues {and we therefore use these two eigenvalues to identify our semi-material coherent sets.}
The corresponding two leading eigenfunctions are shown in Figures \ref{fig:PVevec1} and \ref{fig:PVevec2}. 
For further details including trajectories outside the vortex,  we refer the reader to the supplementary movies, where trajectories at the boundary of the domain are highlighted with green circles,  \texttt{mov\_PV\_SpatMode\_1.mp4} and \texttt{mov\_PV\_SpatMode\_2.mp4}.
\begin{figure}[htb]
    \centering
    \includegraphics[width=\textwidth]{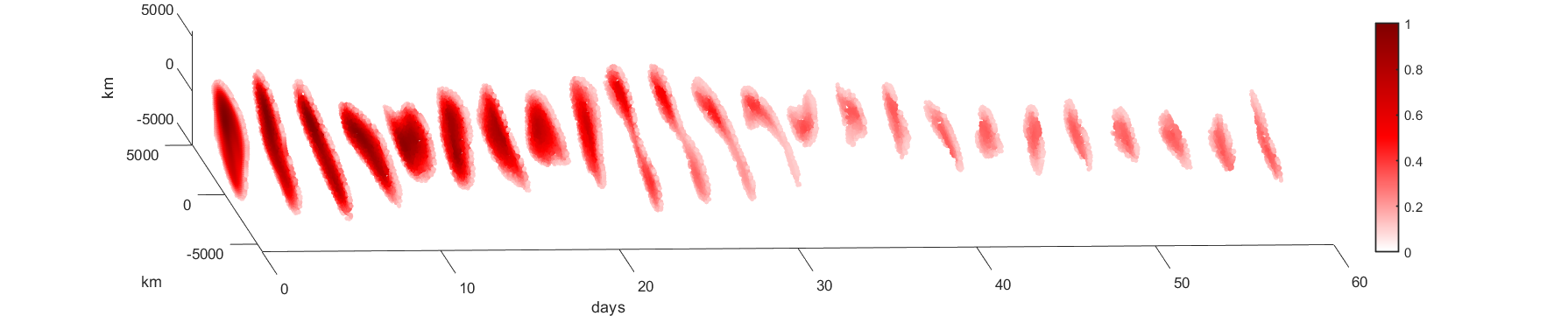}
    \caption{Leading eigenvector of the inflated dynamic Laplacian. A cutoff of 0.1 has been applied to focus on the dominant feature, namely the vortex, which in this eigenvector is strongly highlighted prior to breakup.}
    \label{fig:PVevec1}
\end{figure}
\begin{figure}[htb]
    \centering
    \includegraphics[width=\textwidth]{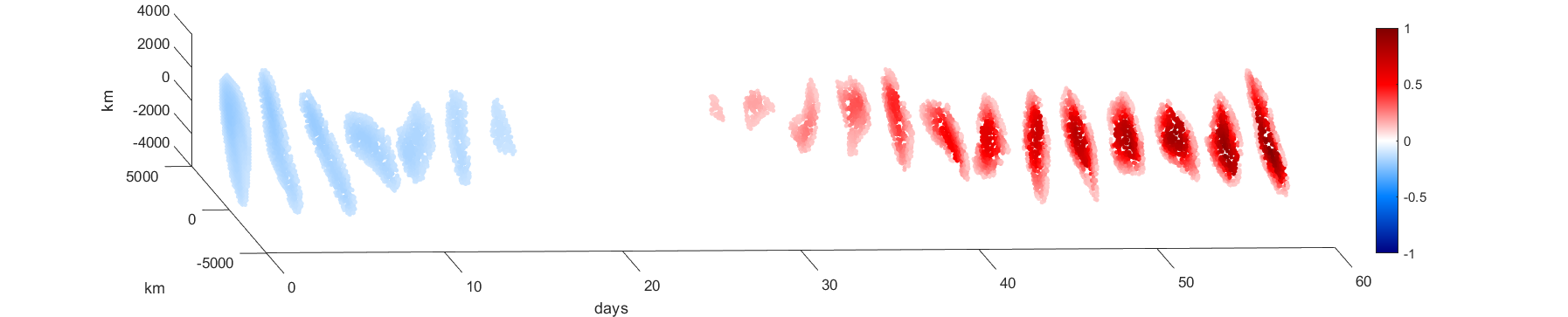}
    \caption{Second eigenvector of the inflated dynamic Laplacian. An absolute-value cutoff of 0.25 has been applied to focus on the dominant features. The second eigenvector strongly highlights the reformed vortex post-breakup.}
    \label{fig:PVevec2}
\end{figure}These two eigenvectors are fed into the SEBA  algorithm \cite{FRS19} to create two sparse basis vectors approximately spanning the top two-dimensional eigenspace.
The superposition of these two sparse basis vectors is displayed in Figure \ref{fig:PVinfl}, where the breakup of the vortex halfway through the 60-day evolution is clearly indicated by the vanishing spacetime eigenfunction values.
In contrast, the leading eigenvector of the dynamic Laplacian in Figure~\ref{fig:PVav} displays an object that remains present over the full 60-day evolution.

On the basis of these results, one may wonder whether there is a vortex ``core'' that is always present over the 60-day evolution, and the ``breakup'' simply corresponds to the ejection of air mass at the periphery of the vortex.
The second eigenfunction of the dynamic Laplacian, shown in Figure \ref{fig:PV-DLevec2}, demonstrates that this is not the case, because blue and white particles are densely embedded within the initial vortex.
\begin{figure}[htb]
    \centering
\includegraphics[width=0.5\textwidth]{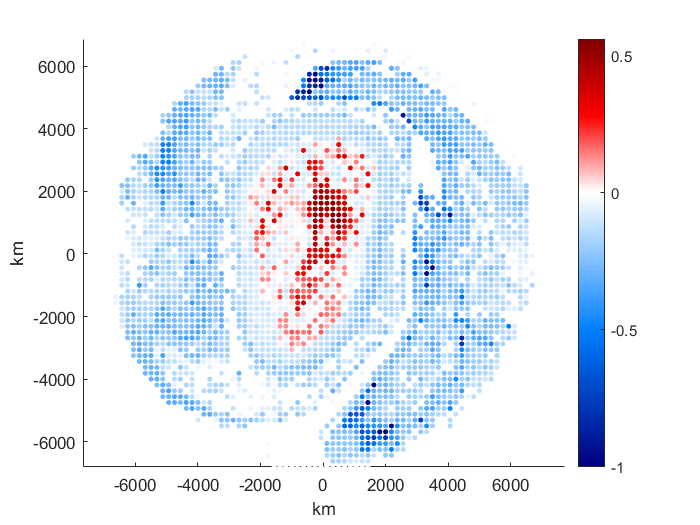}
    \caption{Second eigenvector of dynamic Laplacian on the domain $M$, which corresponds to the initial time slice on 1 September 2002. The central ``vortex'' object is densely perforated with blue/white particles that will later be ejected.}
    \label{fig:PV-DLevec2}
\end{figure}
These blue and white particles will later be ejected;  thus it is not just the periphery that is ejected, but the core of the vortex undergoes ejection and breakup as shown in Figure~\ref{fig:PVinfl}.

As discussed in Section \ref{subsec:Dirichlet_choose_a}, our heuristic choice of $a$ balances a single temporal separation with a single spatial separation, and indeed this is what we see in Figure \ref{fig:PVinfl}.
Increasing the parameter $a$ to $4a_{\rm min}\approx 0.1$ and thus putting greater emphasis on finding semi-material coherent sets that are more material and longer lived, we find a gap in the spectrum of the inflated dynamic Laplace operator~$\Delta_{G_{0,a}}$ after the first eigenvalue. This can be seen in Figure~\ref{fig:PV_evals_multi_a}, where the grey vertical line is at $a_{\min}$ and the eigenvalues computed are for logarithmically spaced $a$ values at steps of a factor~2.
This suggests that there is now a single dominant semi-material coherent set. 
This set would be similar to that indicated by Figure~\ref{fig:PVevec1}, and with further increase in $a$ it would gradually become more material and tend to the one indicated in Figure~\ref{fig:PVav}. 
Thus, consistent with the theory, we see in this example that varying the parameter $a$ controls both the semi-material coherent set's materiality and temporal duration, providing new insights on the dynamics.
\begin{figure}[htb]
    \centering
    \includegraphics[width = 0.5\textwidth]{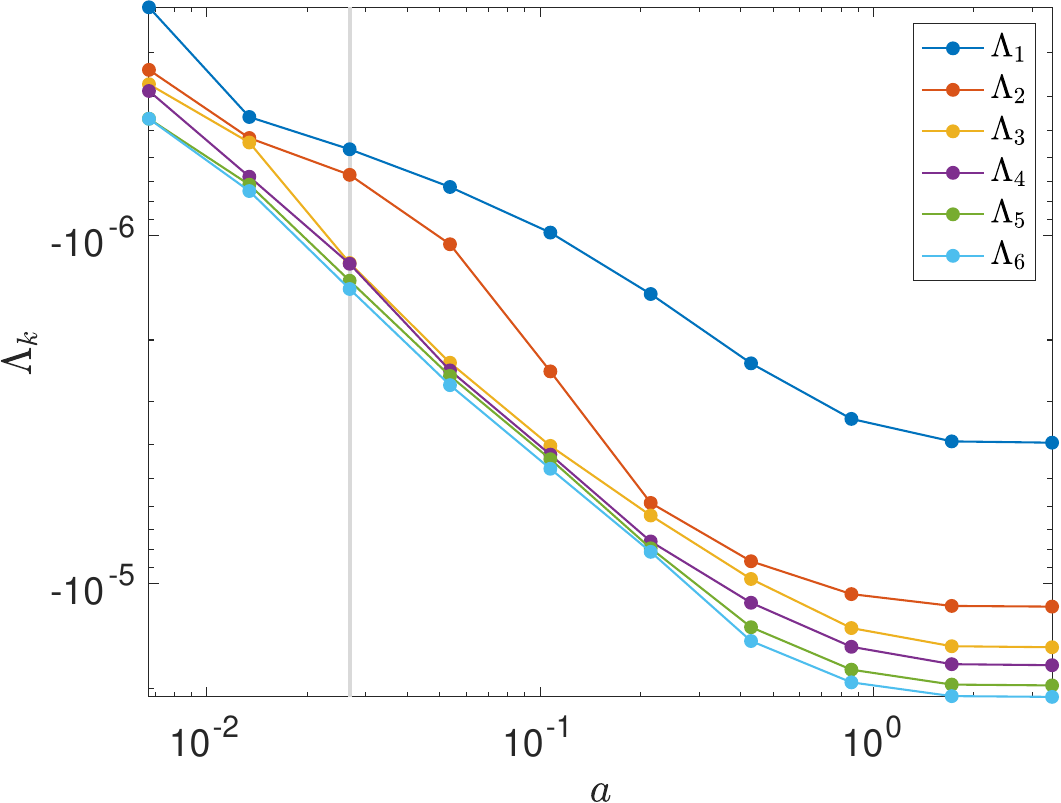}
    \caption{{Eigenvalues of the inflated dynamic Laplacian in the polar vortex example for varying parameter~$a$. The grey vertical line corresponds to~$a_{\min}$. We observe that for $a > 4a_{\min}\approx 0.1$ the gap after the second eigenvalue is now smaller than the gap after the first eigenvalue, suggesting that for increased ``degree of materiality'' there is one dominant coherent set.}}
    \label{fig:PV_evals_multi_a}
\end{figure}

\subsection{Semi-material FTCS and mixed regime changes:  Case study -- a coherent-mixing-coherent fluid flow}


In this section, we study a system undergoing two regime changes over a flow duration from $t=0$ to $t=15$ on a domain $M=[0,2]\times [0,1]$.
We use \revision{a flow reminiscent of}
the standard double-gyre setup of e.g. \cite{shadden-etal, FP09}:
\begin{equation}
    \label{eq:DG}
    \begin{pmatrix}
    \dot{x}_1(t) \\ \dot{x}_2(t)
    \end{pmatrix}
    =
    \begin{pmatrix}
    -\pi A \sin(\pi f(t,x)) \cos(\pi x_2(t)) \\
    \pi A \cos(\pi x_1(t)) \sin(\pi x_2(t)) f_{x_1}(t,x)
    \end{pmatrix},
\end{equation}
where $x(t) \in [0,2] \times [0,1]$, $f(t,x) = \delta \sin(\omega t) x_1(t)^2 + (1-2\delta \sin(\omega t)) x_1(t)$, and $f_{x_1}(t,x) = 2 \delta \sin(\omega t) x_1(t) + 1-2\delta \sin(\omega t)$.
We set $A=0.25$ and $\omega=\pi$; relative to the flow in \cite{FP09} the frequency of separatrix oscillation is halved;  additionally relative to \cite{shadden-etal} the flow speed is 2.5 times faster.
\revision{Note that in the standard double-gyre setup, the term $\cos(\pi x_1(t))$ in \eqref{eq:DG} is replaced by~$\cos(\pi f(t,x))$. With this minor modification we obtain a compressible flow.}
We initially flow for 5 time units with $\delta=0$ (no separatrix oscillation, two steady gyres with invariant sets $[0,1]\times [0,1]$ and $[1,2]\times [0,1]$), then 5 time units with $\delta=0.25$ (separatrix oscillation causes mixing between the outer parts of the gyres), and then another 5 time units with $\delta=0$ (two steady gyres again). 
The higher flow speed and slower oscillation in the central time interval, relative to \cite{shadden-etal, FP09}, creates a greater exchange of fluid between the sets $[0,1]\times [0,1]$ and $[1,2]\times [0,1]$ when $t\in[5,10]$.
Overall, during the time interval $[0,15]$ we initially have perfect coherence (a partition of $M$ into two invariant sets in this example), then mixing in some parts of $M$ with other parts remaining coherent, and finally perfect coherence again, reverting to two invariant sets.

We set the parameter $a=30.0601$ to be $4a_{\rm min}$, where $a_{\rm min}$ is calculated using the heuristic described in \cite[Section~4.1]{FrKo23}, which attempts to match the leading nontrivial temporal and spatial eigenvalues of~$\Delta_{G_{0,a}}$.
In this example, we use $N=800$ trajectories initialised on a uniform grid of $40\times 20$ particles on $M$, recording the particle positions each $0.2$ units of time, leading to $T=76$ time instances.
The diffusion maps bandwidth $\epsilon=0.032$ is again chosen according to the heuristic of~\cite{coifman_graph_2008,berry_variable_2016}.

\subsubsection{Dynamic Laplacian results}
Figure \ref{fig:av-efuns-dg232} shows the leading two nontrivial eigenfunctions of the standard dynamic Laplacian.
\begin{figure}[htb]
    \centering
    \includegraphics[width=\textwidth]{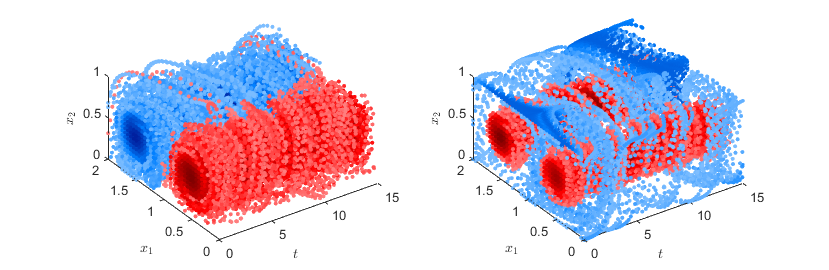}
    \caption{Leading nontrivial eigenfunctions of the dynamic Laplacian for the system (\ref{eq:DG}). Values range from $-1$ to $1$, and values with magnitude below 0.25 have been cut away for clarity.}
    \label{fig:av-efuns-dg232}
\end{figure}
The left panel of Figure \ref{fig:av-efuns-dg232} identifies the central cores of the double gyre flow (in blue and red), which persist as coherent sets \emph{throughout} the time interval $[0,15]$.
The right panel separates the central cores from the remainder of the spatial domain;  again this separation persists through the full flow duration.
If we insert the leading three eigenfunctions of the dynamic Laplacian (the leading trivial constant eigenfunction, in addition to the two in Figure~\ref{fig:av-efuns-dg232}) into the SEBA algorithm \cite{FRS19}, we obtain the three SEBA vectors shown in Figure~\ref{fig:av-seba-dg232}.
\begin{figure}
    \centering
    \includegraphics[width=\textwidth]{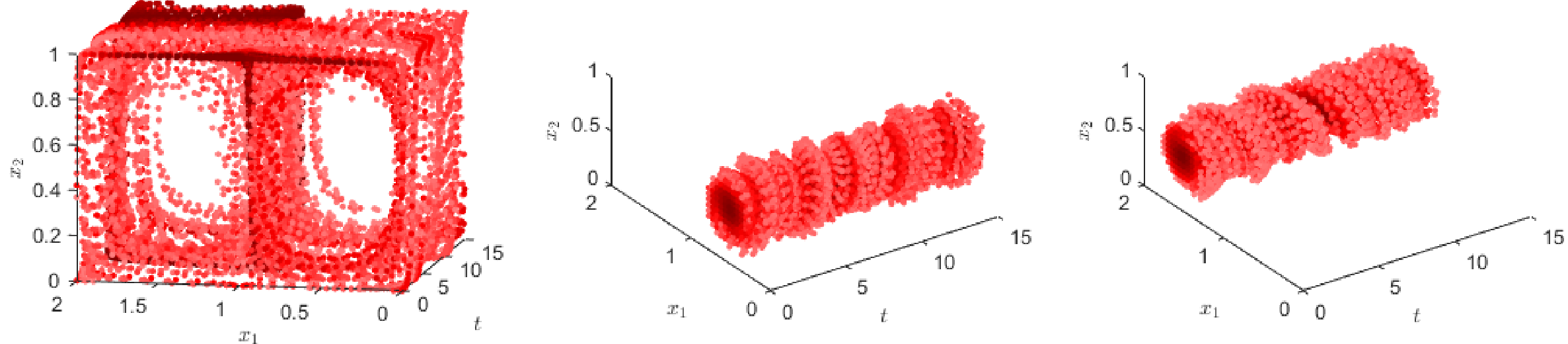}
    \caption{Three SEBA vectors from the leading three eigenfunctions of the dynamic Laplacian for the system (\ref{eq:DG}).}
    \label{fig:av-seba-dg232}
\end{figure}
SEBA separates the two persistent inner cores in Figure~\ref{fig:av-seba-dg232} (centre and right panels), and a persistent outer mixing area (left panel).
As far as a coherent set analysis is concerned, this is correct, but the regime changes at $t=5$ and $t=10$ do not show up because the dynamic Laplacian is designed to find sets that are coherent throughout the full flow duration.

\subsubsection{Inflated dynamic Laplacian results}
{The leading four spatial eigenvalues of the inflated dynamic Laplacian are well spaced apart.
The third eigenvalue is temporal, and after these five eigenvalues there is a long continuum of narrowly spaced spectral points.
We therefore choose $J=4$ and use the leading three nontrivial spatial eigenfunctions $F^\spat_2, F^\spat_3$, and $F^\spat_4$ in Algorithm \ref{alg:workflow}}.
Figure~\ref{fig:dg232evecs} shows these leading three {nontrivial} spatial eigenfunctions of the inflated dynamic Laplacian.
\begin{figure}
    \centering
    \includegraphics[width=\textwidth]{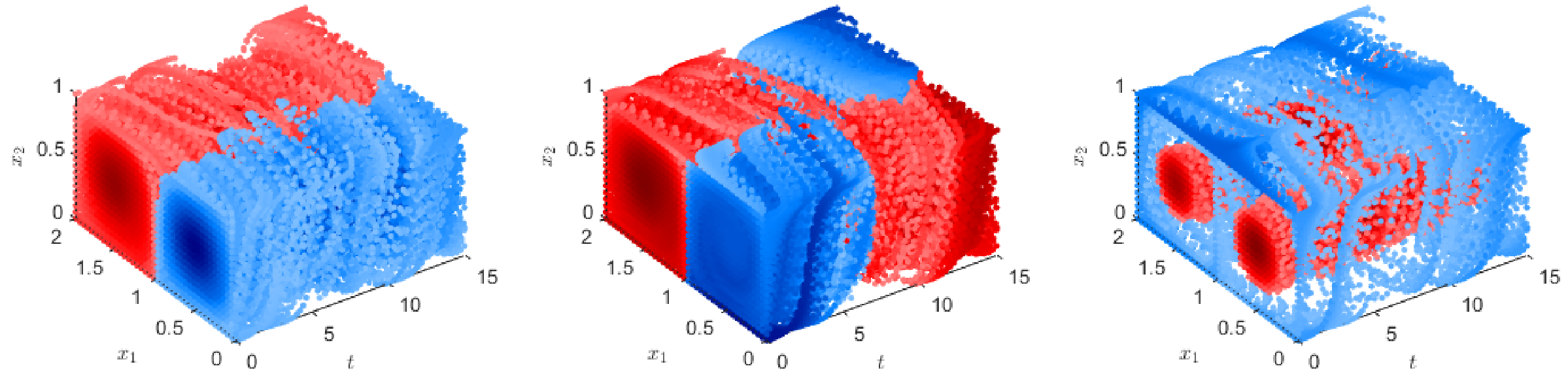}
    \caption{Leading three spatial eigenfunctions of the inflated dynamic Laplacian for system \eqref{eq:DG} shown in the co-evolved frame $\MM_1$. Values range from $-1$ to $1$, and values with magnitude below 0.25 have been cut away for clarity.
    }
\label{fig:dg232evecs}
\end{figure}
The left and right panels of Figure~\ref{fig:dg232evecs} are ``smoother'' versions of the left and right panels in Figure~\ref{fig:av-efuns-dg232};  for example, note that in both the left and right panels of Figure~\ref{fig:dg232evecs} the blue and red sets extend further toward the spatial boundary of $\MM_0$ than those in Figure~\ref{fig:av-efuns-dg232}.
The centre panel of Figure~\ref{fig:dg232evecs} is unusual because the colour does not follow the trajectories;  in particular the sharp transition in time from blue to red and red to blue is very different to the motion of the particles.

To provide greater detail, individual timeslices of the spacetime eigenvectors in Figure~\ref{fig:dg232evecs} are displayed in Figure~\ref{fig:232frames} {and are shown in the supplementary movie \texttt{DG232movie.mp4}}.
\begin{figure}
    \centering
\includegraphics[width=.8\textwidth]{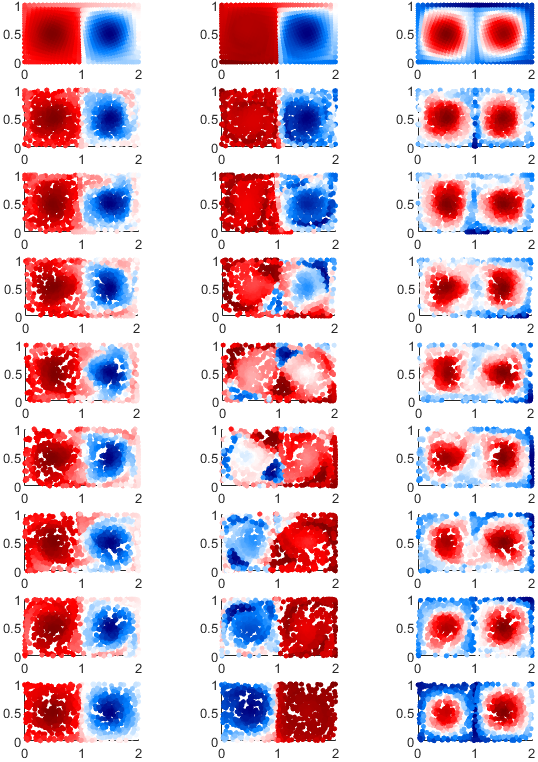}
    \caption{Each row displays a time slice on $\MM_1$ of the leading three spatial eigenfunctions shown in Figure~\ref{fig:dg232evecs}.
    From top to bottom, the nine rows show time slices at times $t = 0.4, 4.4, 5.4, 6.6, 7.6, 8.6, 9.6, 10.6, 14.6$, respectively. We have deliberately displayed more time slices throughout the more complex fluid exchange region between $t=5$ and $t=10$ to focus on the more complicated transitions there.}
 \label{fig:232frames}
\end{figure}
While the left and right columns of Figure~\ref{fig:232frames} show minimal change across time, the centre column has a dramatic change, completely switching the strongly negative (blue) and strongly positive (red) values. 
Such an operation is, in principle, expensive because of the $L^2$ norm of the gradient in (\ref{eq:LB-lamk}), but in fact a price is only paid if the eigenfunction value changes along a path that is not a trajectory in~$\MM_1$.
One can see in the central column of Figure~\ref{fig:232frames} that when the blue/red switching occurs, the colour transformation occurs through ``lobes'' (see e.g.\ \cite{KW90})  and thus follows trajectories with little cost.
On the other hand, in the two gyre cores, there is no way to advect from one side of the domain to the other, and so a direct price is paid by simply changing the eigenfunction value, which is permitted by the time expansion.

Following ideas from \cite[Section~4.3]{FrKo23}, we inspect the $L^2$ norms of the individual time slices of the eigenfunctions.
We expect that lower time-slice norms correspond to temporal regimes of greater mixing.
The transition into a moderate mixing regime in the time interval $[5,10]$ is indeed captured by the $L^2$ norms of the time slice of the eigenfunction in the centre panel of Figure~\ref{fig:dg232evecs};  see Figure~\ref{fig:L2timefibrenorm} (left).
\begin{figure}[htb]
   \centering
\includegraphics[width=1.0\textwidth]{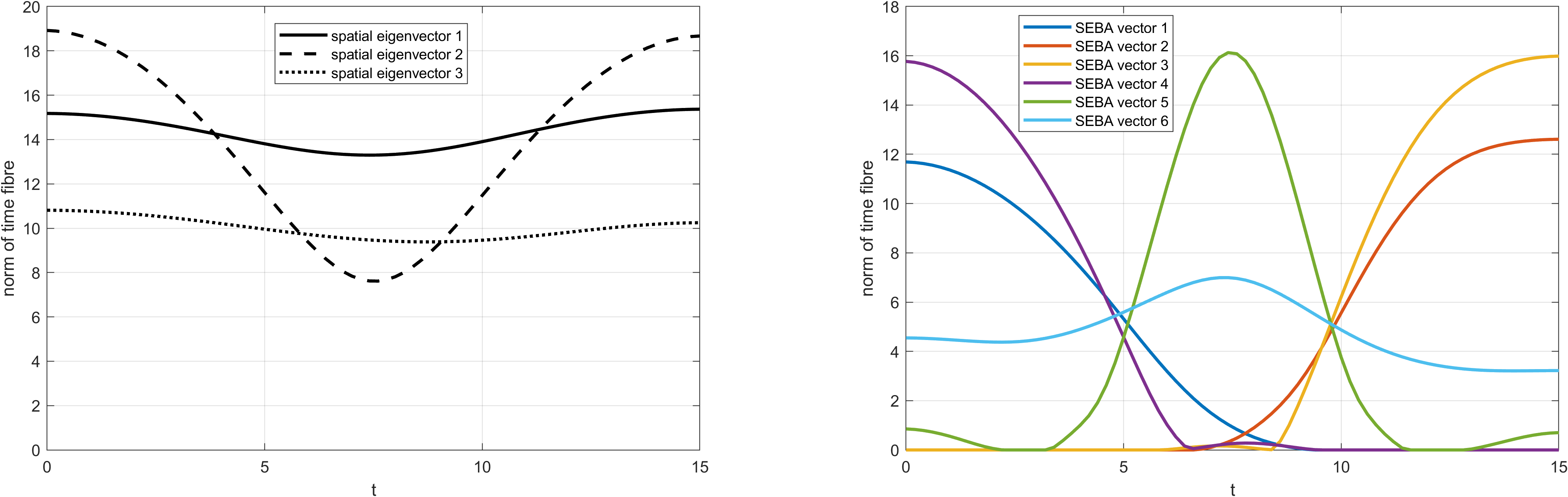}
\caption{
Left: $L^2$ norms of time-fibres of the leading three spatial eigenfunctions shown in Figure \ref{fig:dg232evecs}. Right: $L^2$ norms of time fibres of the six SEBA vectors shown in Figure \ref{fig:dg232seba}.}
\label{fig:L2timefibrenorm}
\end{figure}
Thus, the dynamic regime changes from simple twin gyres to partly mixing double-gyre flow and back to simple twin gyres have been detected by the second nontrivial eigenfunction of the inflated dynamic Laplacian.
In fact, all of the essential macro-dynamical information is encoded in these leading three spatial eigenfunctions. However, because of the orthogonality restrictions of the eigenfunctions, to fully extract all of this macro-information with SEBA, we need to augment the collection of eigenfunctions.
A general theory and numerical strategy for dealing with regime changes is elaborated in Section \ref{sec:efunstruc}.
For the moment, we detail the implementation of the augmentation procedure arising from Section \ref{sec:efunstruc} for this particular flow.

\paragraph{Eigenfunction augmentation.}
For each of the three spacetime eigenfunctions in Figure~\ref{fig:dg232evecs}, we create a companion vector, which is a spacetime vector that is constant on each time slice, with a value given by the $L^2$ norm of the eigenvector on each time slice.
These $L^2$ norms on time slices are precisely what is shown in Figure~\ref{fig:L2timefibrenorm} (left).
As explained in Section \ref{sec:efunstruc}, this augmentation is necessary in the Neumann boundary condition setting to (re)generate degrees of freedom that are restricted by the orthogonality of the inflated dynamic Laplace operator eigenvectors.
We have now doubled our collection of three eigenvectors to a collection of six vectors, and this collection is  input to SEBA~\cite{FRS19} to isolate individual semi-material coherent sets.

Figure \ref{fig:dg232seba} displays the results, with values above 0.1 shown.
\begin{figure}[hbt]
    \centering
\includegraphics[width=\textwidth]{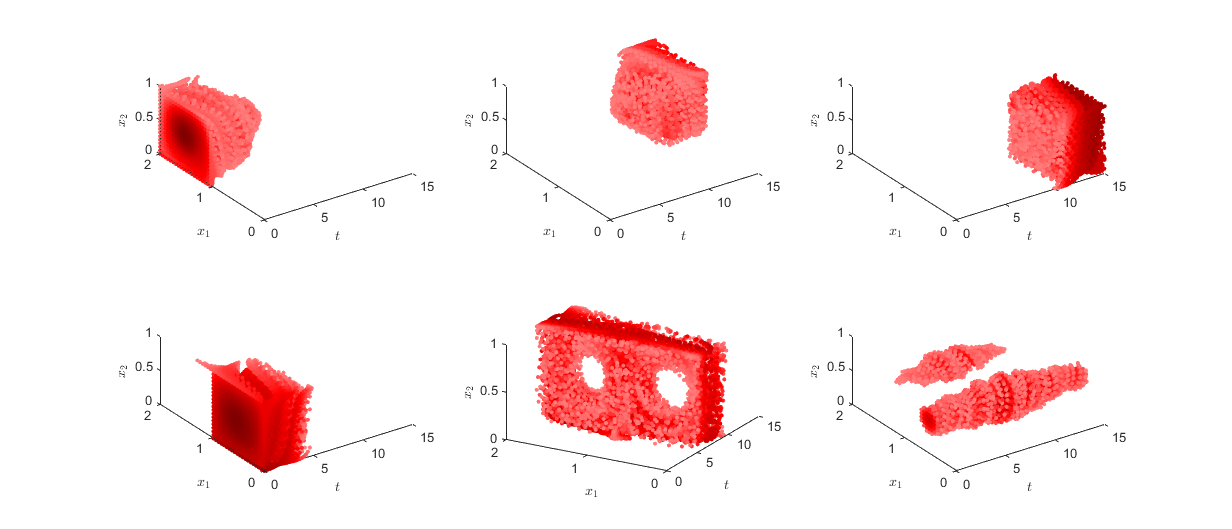}
    \caption{Output of SEBA applied to the three eigenfunctions in Figure \ref{fig:dg232evecs} combined with their three augmented counterparts. All relevant coherent behaviour for the system (\ref{eq:DG}) is identified, as well as the timing of the regime changes in the dynamics. Values above a 0.25 cutoff are shown.}
    \label{fig:dg232seba}
\end{figure}
The corresponding time-fibre norms of the resulting six SEBA vectors are displayed in Figure~\ref{fig:L2timefibrenorm} (right).
Considering the values of the time-slice norms in Figure~\ref{fig:L2timefibrenorm} (right) we expect SEBA vectors numbered 1 and 4 to show objects approximately supported in the first 1/3 of the time evolution, and SEBA vectors numbered 2 and 3 to show objects approximately supported in the final 1/3 of the time evolution. 
SEBA vector 5 is primarily supported in the central 1/3 of the time evolution and SEBA vector 6 is supported throughout the entire flow duration.
These statements are reflected in Figure~\ref{fig:dg232seba}, where we clearly see the ``twin gyre'' invariant sets $[0,1]\times [0,1]$ and $[1,2]\times [0,1]$ over the time intervals $[0,5]$ and $[10,15]$ (upper row and lower left).
The lower central panel of Figure~\ref{fig:dg232seba} identifies the external ``mixing'' region over the time interval $[5,10]$, and the lower right panel identifies the persistent central gyre cores, which remain coherent throughout times in $[0,15]$.
In summary, the six sets in Figure \ref{fig:dg232seba} provide a complete picture of coherent behaviour for the system (\ref{eq:DG}), including correctly identifying the times at which the flow encounters regime switches.

We remark that the analysis in Section \ref{sec:efunstruc} concerns intervals of coherence either abutting one another, or separated by intervals of complete mixing.
The system (\ref{eq:DG}) is more complicated and does not belong to either of these ``idealised'' situations, because we have two coherent intervals with a semi-mixing interval in between them.
Even though the system (\ref{eq:DG}) has no fully mixing regimes, the same eigenfunction augmentation heuristic produces highly informative results from only three leading nontrivial eigenfunctions of the inflated dynamic Laplacian.

\subsection{Computational complexity}

\revision{The simultaneous discretisation of space and time by the overall $NT$ degrees of freedom raises the question whether good resolutions can be achieved by maintaining reasonable computation time. This is indeed the case because our approach constructs a matrix $P_a(\ep) \in \R^{NT \times NT}$ that is in practice very sparse and we only compute some leading eigenvalues with largest magnitude (see the discussion at the very beginning of Section~\ref{sec:dmapsDL}). Thus, essentially all computations are of linear complexity in $NT$, as generally $T \ll N$ because $T$ mesh points resolve a one-dimensional domain, while $N$ mesh points resolve a spatial domain of dimension larger or equal to~2. Memory requirements remain mild.}

\revision{All examples take of the order of ten seconds in Matlab on a laptop with an Apple M2 (2022) chip and 24 GB RAM. More precisely, for the first example with $N=1350$ and $T=101$, the trajectories were computed within 5s, $P^{(x)}$ and $P^{(t)}$ were set up in less than 1s and required a memory of $270$MB and $210$MB, respectively, and the dominant 10 eigenmodes were computed in~3s {using option 8(a) of Algorithm \ref{algo:Strang}.}
The second example with $N=4367$ and $T=120$ produced similar numbers for data generation and matrix assembly, with the addition of 6s for computing the boundary indices, and the eigenvalue computation took 57s, {using option 8(a) in Algorithm~\ref{algo:Strang}.} This longer duration is partly because the full $P^{(t)}$ matrix took almost 1GB~RAM, and 
{could be sped up by using 8(b) instead of 8(a)}.
The timings for the third example, {using option 8(a) in Algorithm \ref{algo:Strang},}  were less than for the first example.
Julia code to carry out Algorithm \ref{algo:Strang}, including the first and third examples, is available at \url{https://github.com/gfroyland/Inflated-Dynamic-Laplacian}. Python code to carry out the algorithm, including the examples, is available at~\url{https://github.com/pkoltai/InflatedDynamicLaplacian}.}

\section{Deeper eigenfunction structure of the inflated dynamic Laplacian}
\label{sec:efunstruc}

In this section we describe the theoretical structure of the spectrum and eigenfunctions of the inflated dynamic Laplacian.
To begin, we recall that under Neumann boundary conditions the {spatial} eigenfunctions $F:\mathbb{M}_0\to \mathbb{R}$ corresponding to nonzero eigenvalues are orthogonal to the constant function (the leading eigenfunction with zero eigenvalue).
It can then be shown that $t\mapsto\int_M F(t,x)\ d\ell(x)$ is a constant function a.e.\ on~$[0,\tau]$;  see the discussion in~\cite[Section~3.4.2]{FrKo23} for more details.
Under Dirichlet boundary conditions, because the leading eigenfunction has no particular structure (apart from taking on high values in space and time where the dynamics is more coherent), the time-fibre integrals of $F$ are not constant in time.

As discussed in~\cite[Section 4.3]{FrKo23} and indicated in Figures \ref{fig:seba-L2-norms} and \ref{fig:L2timefibrenorm}, the $L^2$ time-fibre norms are key indicators of the presence of coherence versus complex phase space mixing.
If $F:\mathbb{M}_0\to \mathbb{R}$ is a spatial eigenfunction of the inflated dynamic Laplacian, we expect low values of $\|F(t,\cdot)\|_{L^2(M)}$ in time intervals of strong mixing.
Exactly how low will depend on the time-diffusion parameter $a$;  smaller $a$ will allow more variation of $\|F(t,\cdot)\|_{L^2(M)}$ in time~$t$.
In the two subsections below, we discuss idealised situations where there are subintervals of time during which several strongly coherent sets are present. In Section~\ref{ss:punc}, these subintervals are separated by further subintervals of time in which strong global mixing occurs, and in Section~\ref{ss:unpunc}, these subintervals of coherence rapidly switch to the next subinterval of coherence.

\subsection{Coherent time intervals punctuated by well-mixing intervals}
\label{ss:punc}

The first of two settings we consider is where the phase space undergoes a number of transitions in the number and/or location of coherent sets, and between each transition there is complete phase space mixing.
More precisely, $[0,\tau]$ may be written as the disjoint union $[t_0,t_0']\cup(t_0',t_1)\cup[t_1,t_1']\cup(t_1',t_2)\cup\cdots\cup(t_{n-1}',t_n)\cup[t_n,t_n']$, with $0=t_0<t_0'<t_1<t_1'\cdots<t_n<t_n'=\tau$.
Within the intervals $[t_i,t_i']$, $i=0,\ldots,n$ we suppose we have $k_i$ strongly coherent sets that partition $M$ and that there is strong mixing within each of these individual coherent sets.
During the intervals $(t_i',t_{i+1})$ we suppose we have complete phase space mixing.

We provide a summary of the expected structure of the spatial eigenfunctions and then present arguments to support the summary statements.
\begin{enumerate}
\item \textit{Dirichlet-in-space boundary conditions:}
We expect to obtain $\sum_{i=0}^n k_i$ leading {spatial} eigenvalues, followed by a large gap to the remainder of the spatial spectrum.
To each interval $[t_i,t'_i]$ there corresponds a $k_i$-dimensional eigenspace $E_i$, where each of the eigenfunctions $F$ in the corresponding spanning  set has support approximately equal to $[t_i,t'_i]$. Generically, one of the eigenfunctions approximately supported on $[t_i,t'_i]$ (the one corresponding to the least negative eigenvalue from the collection of size $k_i$)  is approximately positive and approximately supported on one of the coherent sets in the time interval $[t_i,t'_i]$.  The remaining $k_i-1$ eigenfunctions in the spanning set of $E_i$ have approximate level-set structures that indicate the $k_i$ coherent sets in $[t_i,t'_i]$.
\item \textit{Neumann-in-space boundary conditions:}
We expect to obtain $1+\sum_{i=0}^n (k_i-1)$ leading {spatial} eigenvalues, followed by a large gap to the remainder of the spatial spectrum.
The eigenfunction corresponding to the zero eigenvalue is the constant function;  this is the ``1'' in the above sum.
To each interval $[t_i,t'_i]$ there corresponds a $(k_i-1)$-dimensional eigenspace $E_i$, where each of the eigenfunctions $F$ in the corresponding spanning  set has support approximately equal to $[t_i,t'_i]$.
The $k_i-1$ eigenfunctions in the spanning set of $E_i$ have approximate level-set structures that indicate the $k_i$ coherent sets in $[t_i,t'_i]$.
\end{enumerate}
We now argue why the above enumerations are true and provide some details on the structure of the spatial eigenfunctions.
First we discuss their basic temporal structure. {Consider a dominant, nontrivial (nonconstant) spatial eigenfunction $F$}.
As discussed above (and in more detail in \cite[Section 4.3]{FrKo23}), for a suitably chosen $a$ we expect that $F$ will have values near to zero on the intervals $(t_i',t_{i+1})$, $i=1,\ldots,n$.
Further, in the generic case of distinct eigenvalues, $F$ will have support \emph{only} (approximately) on the time interval~$[t_i,t_i']$. 
To argue why this is the case, consider the variational representation (\ref{eq:LB-lamk}) for an eigenfunction $F$ supported in two distinct time intervals $[t_i,t_i']$ and $[t_j,t'_j]$, $i\neq j$. 
Generically, if we were to separate this eigenfunction into two functions, each supported on only one of the time intervals (and applying minor corrections to maintain smoothness and orthogonality), the Rayleigh quotient  (\ref{eq:LB-lamk}) for the separated function on one interval would be less negative than {for} the other, and in particular, less negative than {for} the original function~$F$.
In this way, functions supported on single intervals are preferred through the variational property.

Next, we consider the spatial structure of the leading eigenfunctions, \textit{within} each of the time intervals of support $[t_i,t_i']$.
Because each time interval of coherence has been isolated from the others by an interval of complete phase space mixing, the spatial structure of eigenfunctions within a single time interval is analogous to the structure one obtains from the dynamic Laplacian \emph{restricted to each time interval of coherence}.
Thus, in the Dirichlet boundary condition case we expect the leading $k_i$ eigenfunctions to have approximate level sets approximately indicating the coherent sets in time interval $[t_i,t'_i]$. 
In the Neumann boundary condition case we expect the leading $k_i-1$ eigenfunctions to have approximate level sets approximately indicating the coherent sets in time interval $[t_i,t'_i]$;  see \cite[Section III.A]{FJ15}.
In both boundary condition settings, the approximate level sets of $F(t,\cdot)$ on slice $t\in[t_i,t_i']$ are such that {the pushforwards} $(\phi_t)_*(F(t,\cdot))$ highlight coherent sets on~$M_t$.

\subsubsection{Eigenfunction augmentation with Neumann boundary conditions}
\label{sssec:Neumann_augment}

As we have noted already, both the dynamic Laplacian $\Delta^D$ and the  inflated dynamic Laplacian $\Delta_{G_{0,a}}$  with Neumann boundary conditions have a single leading eigenfunction that is constant.
In the case of $\Delta^D$ the eigenfunction is constant on $M$, while for $\Delta_{G_{0,a}}$ this eigenfunction is constant on $\mathbb{M}_0$.
This difference in domain has implications for orthogonality of the remaining eigenfunctions.
In fact, the heuristic mentioned above: ``on each coherent time interval inflated dynamic Laplacian eigenfunctions behave analogously to dynamic Laplacian eigenfunctions restricted to that time interval'' is not exactly correct.

To isolate individual coherent sets using eigenfunctions of the dynamic Laplacian, one would usually apply SEBA to a collection of $k$ leading eigenfunctions consisting of the leading (constant) eigenfunction and the remaining $k-1$ eigenfunctions with (space)-integral zero (see {\cite[Figure~9]{FrKo23}} or the examples in~\cite{FRS19}).
Because we have assumed strong coherence and strong mixing within each {of $k$} coherent set{s}, SEBA will return $k$ functions, each approximately supported on one of the coherent sets in~$M$. 
In the inflated dynamic Laplacian case we are currently considering, because we only have a \textit{single} leading eigenfunction that is constant on the \emph{entire spacetime manifold}, we need to augment the collection of functions input to SEBA so that we can capture $\sum_{i=0}^n k_i$ coherent sets from $1+\sum_{i=0}^n (k_i-1)$ eigenfunctions.
In theory, we should remove the leading constant function and augment with $n+1$ functions, each constant on a unique time interval from the collection $[t_i,t_i']$, $i=0,\ldots,n$.
This would yield $\sum_{i=0}^n k_i$ functions to match the expected $\sum_{i=0}^n k_i$ coherent sets.
A drawback with this approach is that we would need to first accurately identify the time intervals $[t_i,t'_i]$.

A simple and performant approach is to instead remove the leading constant function and for \emph{each} of the $\sum_{i=0}^n (k_i-1)$ eigenfunctions $F$, add a companion function $\hat{F}(t,x):=\|F(t,\cdot)\|_{L^2(M)}\mathbf{1}_M(x)$, which is constant in space on each time fibre, taking the constant value given by the $L^2$ norm of the time fibre.
Fixing an interval $[t_i,t_i']$, the corresponding $k_i-1$ companion functions will all have similar support (restricted to approximately $[t_i,t_i']$).
After making this augmentation, we apply SEBA in space-time to all $2\sum_{i=0}^n(k_i-1)$ functions.
We should find $\sum_{i=0}^n k_i$ SEBA functions, each supported on a single coherent set within a single time interval of coherence, along with $\sum_{i=0}^n (k_i-2)$ ``residual'' functions, which are easily identified because they will be constant-valued on each time slice. 

We briefly illustrate this augmentation procedure for the Childress--Soward flow in \cite[Section 7.3]{FrKo23}.
There we expect eight distinct coherent sets in spacetime over two distinct subintervals of time. In the above notation we have $n=2$, $k_1=4$, and~$k_2=4$.
As noted above we have $1+(4-1)+(4-1)=7$ spatial eigenfunctions, which is one too few to capture all eight sets.
Removing the constant eigenfunction, leaves six spatial eigenfunctions, whose timeslice norms across time are shown in Figure~\ref{fig:cmcmnorms}.
\begin{figure}[hbt]
    \centering
\includegraphics[width=\textwidth]{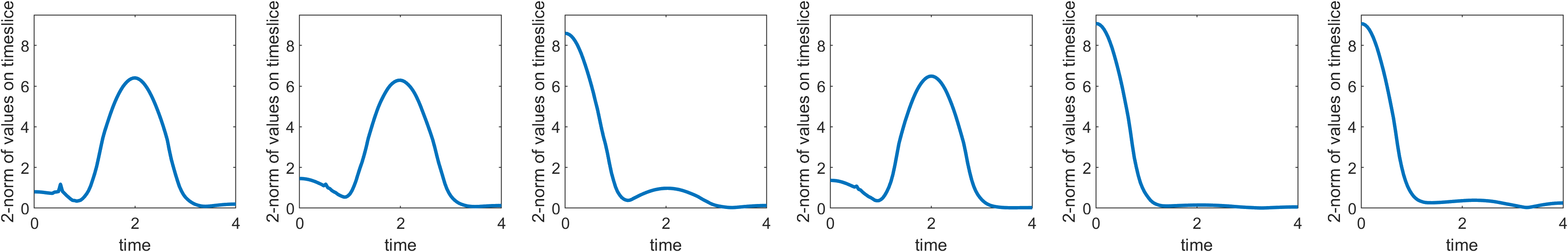}
    \caption{Timeslice norms of the leading six eigenfunctions of the inflated dynamic Laplacian (excluding the leading constant function) {for the Childress--Soward flow}. One clearly sees peaks in norms on the subintervals $[0,0.6]$ and $[1.4,3.2]$, indicating the presence of coherent sets during these times. In this experiment we use an initial grid of $20\times 20$ particles, 101 time instances in the time interval $[0,\tau]$, and temporal diffusion strength of $a=4/\pi$.  See \cite{FrKo23} for details, where more particles and time instances were used.}
\label{fig:cmcmnorms}
\end{figure}
Augmenting as described above yields 12 functions, which exceeds the eight required.
Applying SEBA to these 12, we expect eight SEBA vectors to be supported on the eight coherent sets in spacetime and another four ``residual'' functions, which are constant on each timeslice and therefore easily identifiable by computing variance on each timeslice.
See Figure~\ref{fig:cmcmvars}.
\begin{figure}[hbt]
    \centering
\includegraphics[width=\textwidth]{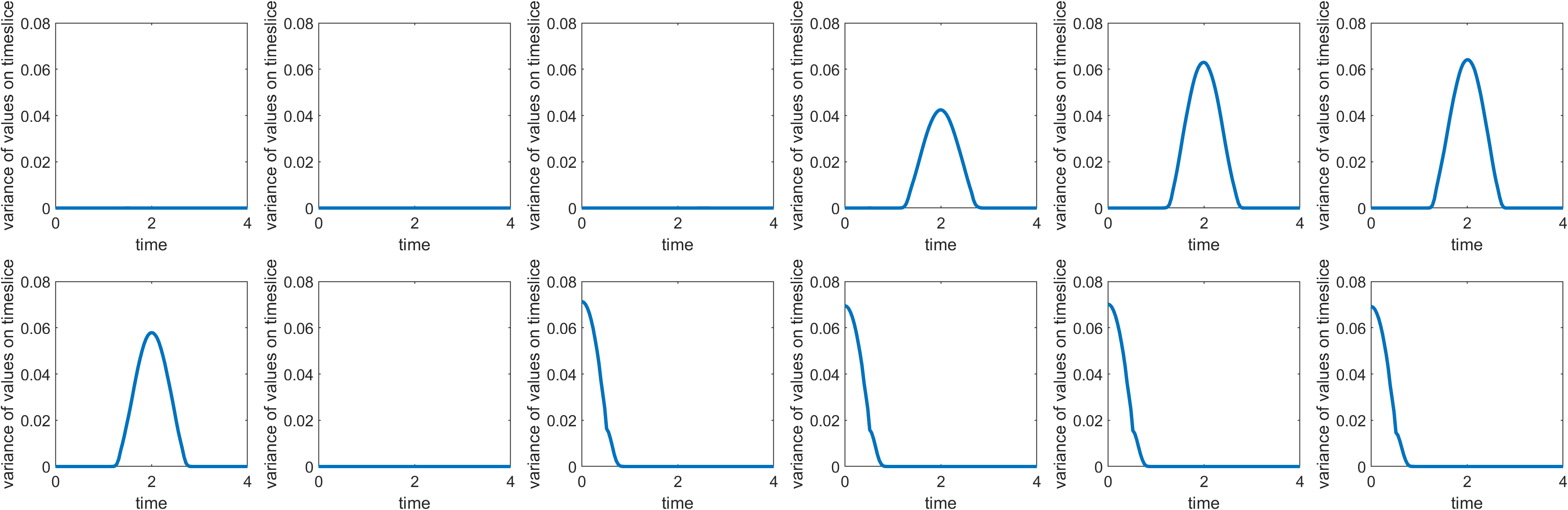}
    \caption{Timeslice variances of SEBA vectors. 
    SEBA is applied to 12 spacetime vectors: the leading six nonconstant spatial eigenfunctions of the inflated dynamic Laplacian (those whose spacetime norms are shown in Figure \ref{fig:cmcmnorms}) and six augmented vectors as described above. The variance clearly separates the ``residual'' vectors with zero timeslice variances (those in positions 1--3 and 8) from the eight SEBA vectors with nontrivial spatial information (those in positions 4--7 and 9--12), which are supported on the eight distinct coherent sets in spacetime.}
\label{fig:cmcmvars}
\end{figure}

\subsection{Distinct coherent time intervals with no temporal separation between them}
\label{ss:unpunc}

We now consider the situation where one or more of the intervals $(t_i',t_{i+1})$ is squeezed out of existence.
That is, $t_i'=t_{i+1}$ and we are in a situation where the interval $[t_i,t_{i+1}]$ has $k_i$ coherent sets and there is a ``switch point'' at around $t_{i+1}$ to an interval $[t_{i+1},t_{i+1}']$ with $k_{i+1}$ coherent sets.
We discuss the Neumann boundary condition setting below;  the Dirichlet boundary condition setting is identical if one replaces $k_i-1$ with $k_i$ in discussions of dimensions of the eigenspaces $E_i$.

Consider the $((k_i-1)+(k_{i+1}-1))$-dimensional direct sum $E_i\oplus E_{i+1}$ spanned by the union of the eigenfunctions supported on one of the two intervals we have abutted against one another.
Prior to squeezing out the complete mixing interval $(t_i,t_{i+1})$ we understand from Section~\ref{ss:punc} how we expect the eigenfunctions to behave.
Each of the $k_i-1$ eigenfunctions $F(t,\cdot)$ in $E_i$, which are supported in $t\in [t_i,t_i']$, will tend to take on similar values within coherent sets in $\{t\}\times M$;  this will be most clear in the pushforward $(\phi_t)_*(F(t,\cdot))$.
A similar situation occurs on the time interval $[t_{i+1},t_{i+1}']$.

When the interval $(t_i,t_{i+1})$ is squeezed out, the $k_i-1$ eigenfunctions and the $k_{i+1}-1$ eigenfunctions will try to merge in a way that requires small temporal derivatives around the time $t_{i+1}$.
Such merging can occur when some of the $k_i$ coherent sets in $\{t\}\times M$, $t\in [t_i,t_i']$ are able to approximately ``match up'' in space to the $k_{i+1}$ coherent sets in $\{t\}\times M$, $t\in [t_{i+1},t_{i+1}']$,
Suppose that $m_i\le \min\{k_i-1,k_{i+1}-1\}$ eigenfunctions are able to approximately merge in this way.
After squeezing out the fully mixing time interval $(t_i,t_{i+1})$ we expect $m_i$ eigenfunctions that are approximately ``material'' in the sense that their values remain relatively unchanged throughout the combined interval $[t_i,t_{i+1}']$.
By perturbation theory throughout the process of squeezing out $(t_i,t_{i+1})$ we expect the  $((k_i-1)+(k_{i+1}-1))$-dimensional direct sum $E_i\oplus E_{i+1}$ to remain relatively unchanged.
Because the pre-squeezed $((k_i-1)$- and $(k_{i+1}-1))$-dimensional eigenspaces were completely unlinked in the time direction, after squeezing out $(t_i,t_{i+1})$ we will have $m_i$ approximately ``material'' eigenfunctions and $k_i+k_{i+1}-2+m_i$ eigenfunctions that are ``nonmaterial'' in the sense that these eigenfunctions exhibit high temporal derivatives around $t_{i+1}$.
These high temporal derivatives signify the termination of one coherent set and the initiation of another.
This effect can be clearly observed in the second eigenfunction shown in Figure~\ref{fig:dg232evecs}.

\section*{Acknowledgements}
The authors thank K.~Padberg-Gehle for preprocessed data for the polar vortex example (Section~\ref{ssec:examplePV}). We also thank K.~Padberg-Gehle and F.J.~Beron-Vera for discussions on this example.

\section*{Funding}

JA was supported by an Australian Research Council (ARC) Disovery Project (DP210100357). The research of GF was partially supported by an Einstein Visiting Fellowship funded by the Einstein Foundation Berlin, and an ARC Discovery Project (DP210100357).
GF is grateful for generous hospitality at the Department of Mathematics, Free University Berlin and the Department of Mathematics, University of Bayreuth, during research visits.
PK has been partially supported by Deutsche Forschungsgemeinschaft (DFG) through grant CRC 1114 ``Scaling Cascades in Complex Systems'', Project Number 235221301, Project A08 ``Characterization and prediction of quasi-stationary atmospheric states''. 

\appendix

\section{Proof of Theorem \ref{thm:eigbounds}}
\label{app:eigenbounds}

\begin{proof}
The first step of the proof establishes standard properties of the eigenvalue problem for $\Delta_{G_{0,a}}$ by classical weak theory, and the second step verifies the claims of the theorem analogously to~\cite[Theorem 3.3]{FrKo23}.

Let $H^1_\dir(\MM_0)$ denote the Sobolev space of weakly differentiable $L^2(\MM_0)$ functions whose derivatives are in $L^2(\MM_0)$ and satisfy homogeneous Dirichlet boundary conditions on $[0,\tau] \times \partial M$ (i.e., ``in space''). This space can be obtained as the closure of the set of smooth functions that vanish on $[0,\tau] \times \partial M$ -- denoted by $C_c^\infty(\mathbb{M}_0)$ above -- with respect to the $H^1(\MM_0)$ norm.

\textbf{Step I.} First we will show the coercivity of the bilinear form associated to~$\Delta_{G_{0,a}}$,
\begin{equation}
    \label{bfg}
B(F,G) := \int_0^{\tau}\!\!\! \int_M a^2 \partial_t F(t,\cdot) \partial_t G(t,\cdot) + \langle \nabla_{g_t} F(t,\cdot), \nabla_{g_t} G(t,\cdot) \rangle_{g_t} \, \frac{d\ell}{a} \, dt,\quad F,G \in H^1_{\dir}(\MM_0).
\end{equation}
Note that for $F \in H^1_\dir(\MM_0)$ one has $F(t,\cdot) \in H^1_0(M)$ for almost every~$t\in [0,\tau]$.
The coercivity will be based on the Poincar\'e--Friedrichs inequality (in space),
\begin{equation}
    \label{eq:Poincare}
    \int_M \langle \nabla f, \nabla f \rangle \, d\ell \ge \tilde{c} \|f\|_{H^1(M)}^2 \quad \text{for } f\in H^1_0(M),
\end{equation}
where $\tilde c>0$ is a constant that does not depend on~$f$. For $f \in H^1(M)$ we have pointwise on $M$ that
\[
\langle \nabla_{g_t} f, \nabla_{g_t} f \rangle_{g_t}  = \langle \nabla f, g_t^{-1} \nabla f \rangle \ge \frac{1}{\lambda_{\max}(g_t)} \| \nabla f \|_2^2,
\]
where $\lambda_{\max}(g_t)$ denotes the largest eigenvalue of the metric tensor~$g_t$. With
\[
\lambda := \min_{t\in[0,\tau]} \min_M \frac{1}{\lambda_{\max}(g_t)},
\]
we obtain from \eqref{eq:Poincare} that
\begin{equation}
    \label{eq:Poincare2}
    \int_M \langle \nabla_{g_t} f, \nabla_{g_t} f \rangle_{g_t} \, d\ell \ge c \|f\|_{H^1(M)}^2 \quad \text{for every } f\in H^1_0(M) \text{ and } t\in[0,\tau],
\end{equation}
with $c = \lambda \tilde{c}$. Since $g_t$ is smooth on $M$ and in $t$, and $\MM_0$ is compact, we obtain that $\lambda>0$ and hence that~$c>0$.  Then, repeatedly applying \eqref{eq:Poincare2} with $f=F(t,\cdot)$ for a.e.\ $t\in [0,\tau]$ we obtain from \eqref{bfg} that
\begin{equation}
    \label{eq:coercivity}
    B(F,F) \ge \min \{a, c/a\} \| F \|_{H^1(\MM_0)}^2,
\end{equation}
the coercivity of~$B$. Since $H^1_\dir(\MM_0)$ is continuously, densely, and compactly embedded in $L^2(\MM_0)$, \cite[Theorem~11.2]{hackbusch2017elliptic} implies that $\Delta_{G_{0,a}}$ has discrete spectrum, all eigenvalues have finite multiplicity, and they only accumulate at~$-\infty$. As $\min\{a,c/a \} > 0$, from \eqref{eq:coercivity} we also obtain that all eigenvalues are strictly negative:~$\Lm_{k,a} < 0$.

\textbf{Step II.}
Once it is established that $\Delta_{G_{0,a}}$ has discrete spectrum, it satisfies the mix-max principle~\cite[Theorem~4.10]{Teschl2009}, and the proof of items 1, 2, and 3 of Theorem \ref{thm:eigbounds} proceeds analogously to that of the respective items 1, 2, and 4 of~\cite[Theorem 3.3]{FrKo23}. The only difference is that due to the homogeneous Dirichlet boundary conditions in space there are no temporal modes and one replaces the notation $H_\spat^1$ and $\Lm_{k,a}^\spat$ of \cite{FrKo23} with $H^1_\dir(\MM_0)$ and $\Lm_{k,a}$, respectively.

We remark that in version 4 in arXiv (\href{https://arxiv.org/abs/2103.16286}{https://arxiv.org/abs/2103.16286}) we have altered the argument around the last set of display equations in the proof of part~4 of \cite[Theorem 3.3]{FrKo23} to make it clearer.
\end{proof}

\bibliographystyle{myalpha}
\bibliography{bibfile}

\end{document}